\def\tsc#1{\csdef{#1}{\textsc{\lowercase{#1}}\xspace}}
\def\C{{\mathbb{C}}}
\def\R{{\mathbb{R}}}
\def\Z{{\mathbb{Z}}}
\def\P{{\mathbf{P}}}
\def\U{{\mathbf{U}}}
\def\W{{\mathbf{W}}}
\def\x{{\mathbf{x}}}
\def\X{{\mathbf{X}}}
\def\L{{\mathbf{L}}}
\def\F{{\mathscr{F}}}
\def\BL{{\mathcal{BL}}}
\def\T{{\mathcal{T}}}
\def\G{{\mathcal{G}}}
\def\J{{\mathcal{J}}}
\def\E{{\mathcal{E}}}
\def\V{{\mathcal{V}}}
\def\S{{\mathcal{S}}}
\def\I{{\mathcal{I}}}
\newtheorem{theorem}{Theorem}
\newtheorem{definition}{Definition}
\newtheorem{lemma}{Lemma}
\newtheorem{corollary}{Corollary}
\begin{document}
\let\WriteBookmarks\relax
\def\floatpagepagefraction{1}
\def\textpagefraction{.001}
\let\printorcid\relax % 可去掉页面下方的ORCID(s)

% Short title
% \shorttitle{<short title of the paper for running head>}    
\shorttitle{Sampling Theory of Jointly Bandlimited Time-vertex Graph Signals}   

% Short author
% \shortauthors{<short author list for running head>} 
\shortauthors{H. Sheng et al.}

% Main title of the paper
\title[mode = title]{Sampling Theory of Jointly Bandlimited Time-vertex Graph Signals}

% Title footnote mark
% eg: \tnotemark[1]
% \tnotemark[<tnote number>] 
\tnotemark[1]

% Title footnote 1.
% eg: \tnotetext[1]{Title footnote text}
% \tnotetext[<tnote number>]{<tnote text>} 
\tnotetext[1]{This work was supported by the Singapore Ministry of Education Academic Research Fund Tier 2 grant MOE2018-T2-2-019 and A*STAR under its RIE2020 Advanced Manufacturing and Engineering (AME) Industry Alignment Fund – Pre Positioning (IAF-PP) (Grant No. A19D6a0053).}
% \tnotetext[2]{The second title footnote which is a longer text matter to fill through the whole text width and overflow into another line in the footnotes area of the first page.}

\author[1]{Hang Sheng} 
\ead{20110720036@fudan.edu.cn} 

\author[1,2]{Hui Feng} % [role=Co-ordinator] %, suffix=Jr
\cormark[1] 
\ead{hfeng@fudan.edu.cn}

\author[1]{Junhao Yu}
\ead{17210720048@fudan.edu.cn}

\author[3]{Feng Ji}
\ead{jifeng@ntu.edu.sg}

\author[1,2]{Bo Hu}
\ead{bohu@fudan.edu.cn} 

\address[1]{School of Information Science and Technology, Fudan University, Shanghai 200433, China}
\address[2]{Shanghai Institute of Intelligent Electronics \& Systems, Shanghai 200433, China.}
\address[3]{School of Electrical and Electronic Engineering, Nanyang Technological University, 639798, Singapore}

\cortext[1]{Corresponding author} 
% \cortext[2]{Principal corresponding author} 

\begin{abstract}
Time-vertex graph signal (TVGS) models describe time-varying data with irregular structures. The bandlimitedness in the joint time-vertex Fourier spectral domain reflects smoothness in both temporal and graph topology. In this paper, we study the critical sampling of three types of TVGS including continuous-time signals, infinite-length sequences, and finite-length sequences in the time domain for each vertex on the graph. For a jointly bandlimited TVGS, we prove a lower bound on sampling density or sampling ratio, which depends on the measure of the spectral support in the joint time-vertex Fourier spectral domain. We also provide a lower bound on the sampling density or sampling ratio of each vertex on sampling sets for perfect recovery. To demonstrate that critical sampling is achievable, we propose the sampling and reconstruction procedures for the different types of TVGS. Finally, we show how the proposed sampling schemes can be applied to numerical as well as real datasets.
\end{abstract}

% Research highlights
\begin{highlights}
    \item We prove the necessary conditions for the stable reconstruction of JBL CTVGS. We prove a lower bound on total sampling density, which gives rise to the concept of \emph{critical sampling} for JBL CTVGS. We prove lower bounds on the sampling densities of the signals on subsets of vertices to be sampled. 
    
    \item We construct a multi-band sampling scheme for JBL CTVGS to prove that critical sampling is achievable for any JBL CTVGS. 

    \item We apply the sampling theory and multi-band sampling scheme to obtain critical sampling sets for JBL DTVGS and FTVGS.
\end{highlights}

% Keywords
% Each keyword is seperated by \sep
\begin{keywords}
Graph signal processing \sep 
Time-vertex graph signal \sep 
Sampling theory \sep
Stable sampling
\end{keywords}

\maketitle

\section{Introduction}
\label{sec:intr}

The ability of graphs to capture the underlying structure of data has been the driving force behind the utilization of graph signal processing (GSP) theory for addressing high-dimensional data on irregular domains. GSP extends traditional signal processing techniques to irregular data \cite{big2014}, including the graph Fourier transform (GFT) \cite{Spectral1997,emerging2013,big2014}, graph filtering \cite{filter2022}, graph sampling \cite{theory,sampling2022}, and graph signal estimation \cite{RN11,xie2019bayesian,estimation2022}. The introduction of these concepts and tools has facilitated the application of GSP theory in various practical domains, encompassing sensor networks \cite{RN8,sensor2022}, brain network analysis \cite{brain2022,brain2023}, and graph neural networks \cite{GNN2023}. The overview articles \cite{overview,review2020} contain comprehensive discussions on GSP and its applications.

In numerous scenarios involving sensor, social, or biological networks, graph signals always vary with time. Consequently, time-vertex graph signal (TVGS) processing models have been proposed. Grassi \emph{et al.}\cite{timevertex} stacks graph signals of multiple moments referred to as finite time-vertex graph sequences (FTVGS). Later, signal spaces are modeled using general Hilbert spaces in \cite{ji2019hilbert}, allowing the collective temporal signals to represent both discrete time-vertex graph sequences (DTVGS) and continuous time-vertex graph signals (CTVGS). The signal models FTVGS, DTVGS, and CTVGS are collectively referred to as TVGS in this paper. 

Given the combined characteristics of the time and vertex domains, a TVGS typically encompasses a substantial amount of data, which proves valuable for learning and analysis. However, we encounter challenges related to the observation, storage, and processing of the exponentially increasing data volume. Therefore, it is highly advantageous to sample partial observations instead of recording all the raw data, while still preserving the majority of the information of the signal. For example, an event camera\cite{DVS20,DVS2024} only captures changing pixels to offer better storage performance and high-speed response. In a social network, understanding the opinions of the total population by investigating a small number of candidates may save a lot of resources. 

When considering a single vertex in a generated trivial graph, the Nyquist-Shannon sampling theorem states that any bandlimited TVGS can be perfectly reconstructed if sufficiently many samples (at the Nyquist rate) are taken \cite{nyquist, shannon}. Alternatively, when focusing on a specific moment in a TVGS, the TVGS is a static graph signal. A sampling theory of a bandlimited graph signal defined by GFT is proposed in \cite{theory,2016eff}. If the signal on each vertex of TVGS is bandlimited and there are correlations among vertices, the TVGS shows smoothness in both the temporal and vertex domains. However, the transformation in a single domain cannot always reflect correlations in both the time and vertex domains. As we shall illustrate in \cref{fig:exp2_signal}, the joint time-vertex Fourier transform (JFT)\cite{2016JFT} offers a more compact spectral representation of a TVGS.

A TVGS bandlimited in joint time-vertex domain is called a \emph{jointly bandlimited (JBL) TVGS}. We can define the projection bandwidths for a JBL TVGS in time and vertex domains, respectively (See detail in \cref{df:probth_c,df:probth_f}). Therefore, a \emph{separate sampling} scheme has been proposed for sampling JBL FTVGS\cite{sampling2018}, which has also been extended to JBL CTVGS and DTVGS\cite{ji2019hilbert}. The separate sampling scheme achieves a sampling density or ratio which is the product of the two projection bandwidths. However, existing literature has not proved a lower bound on the sampling density of the JBL TVGS. Moreover, there is no method provided for sampling at the lowest sampling density. These two aspects are the focal points of this paper.
% However, it should be noted that the separate sampling scheme does not always provide the minimum sample size, as discussed in this paper.

In the preliminary version\cite{Yu}, we introduce a concept of joint bandwidth for JBL FTVGS and prove that the number of samples required for stable recovery can be reduced when using joint bandwidth compared to separate sampling. We propose a scheme to sample JFT with minimum samples. 

In this paper, the sampling theory of three kinds of TVGS (CTVGS, DTVGS, and FTVGS) is provided. Following the outline of a signal processing textbook \cite{DSP}, we delve into the sampling of TVGS, commencing with CTVGS and progressing to discrete signals. We want to know if there is a lower bound on the sampling density for CTVGS to ensure stable reconstruction first. We expand the concept of joint bandwidth to CTVGS and establish a connection between CTVGS sampling and multiple-input multiple-output (MIMO) sampling \cite{MIMO}. Subsequently, we prove lower bounds on the sampling densities, wherein the total sampling density is the joint bandwidth. This sampling theory can apply to DTVGS and FTVGS. The exploration of the lower bound on the sampling ratio of FTVGS goes beyond the scope of \cite{Yu}, and we provide additional proofs not covered in \cite{Yu}. 

In short, the total sampling density of a JBL TVGS is lower-bounded by the joint bandwidth, which is smaller than (and in special cases equal to) the product of two projection bandwidths. So we can sample a JBL TVGS at the lowest density or ratio, which is no more than that of separate sampling.

Besides, once the vertices to be sampled are selected, we may have specific environmental or hardware constraints that require the sampling density or ratio of the signals on a subset of vertices to be as low as possible. Our proposed sampling theorems give a lower bound on the sampling density or ratio of signals on the subset of vertices to be sampled.

In summary, the main contributions of this paper include:
\begin{itemize}
    \item We prove the necessary conditions for the stable reconstruction of JBL CTVGS. We prove a lower bound on total sampling density, which gives rise to the concept of \emph{critical sampling} for JBL CTVGS. We prove lower bounds on the sampling densities of the signals on subsets of vertices to be sampled. 
    
    \item We construct a multi-band sampling scheme for JBL CTVGS to prove that critical sampling is achievable for any JBL CTVGS. 

    \item We apply the sampling theory and multi-band sampling scheme to obtain critical sampling sets for JBL DTVGS and FTVGS.
\end{itemize}

The rest of the paper is organized as follows. In \cref{sec:model}, we present the TVGS models and describe the JFT. In \cref{sec:cri_samp_c,sec:cri_samp_d,sec:cri_samp_f}, we discuss the necessary conditions for the stable sampling of CTVGS, DTVGS, and FTVGS respectively, and design the sampling procedures to achieve critical sampling. We provide numerical results in \cref{sec:exp} and conclude in \cref{sec:conclusion}. Notations are listed in \cref{tab:notation}.

\begin{table}[htbp]
\normalsize
	\centering
	\caption{NOTATIONS}
	\begin{tabular}{l l}  
			\hline 
			Notation                & Description         \\
			\hline
			$\G$        & an undirected graph  \\
			$\T$        & the topology in time domain \\
			$\V$        & a set of vertices  \\
			% $\E$        & a set of edges   \\
			% $\W$        & weighted adjacency matrix \\
			% $\mathbf{D}_\G$ & the degree matrix  \\
            $\times$    & the Cartesian product \\
			% $\T$        & the topology of TVGS in time domain \\
			$\L$        & the graph Laplacian matrix  \\
			$\U_\G$      & the eigenmatrix of $\L_\G$ in vertex domain\\
			$\U_\T$      & the eigenmatrix of $\L_\T$ in time domain \\
			$\otimes$   & the Kronecker product  \\
			$\U_\J$      & $\U_\G \otimes \U_\T$  \\
			$\X$        & a TVGS  \\  
			$\hat{\X}$   & a TVGS recovered from samples \\
            $\text{vec}(\cdot)$ & matrix vectorization \\
			$\x$        & $\text{vec}(\X)$   \\
			$\x_t$      & a graph signal at instant $t$ \\
			$\F_{FT}(\cdot)$    & the FT on each vertex of a CTVGS \\
            $\F_{DT}(\cdot)$    & the DTFT on each vertex of a DTVGS \\
            % \multirow{2}*{$\F_\T(\cdot)$} & the DFT (or GFT) on each vertex of an \\ ~ & FTVGS with directed (or undirected) graph $\T$  \\
            $\F_\T(\cdot)$    & the DFT (or GFT) on each vertex of an FTVGS with directed (or undirected) graph $\T$ \\
            $\F_\G(\cdot)$    & the GFT on each instant of a TVGS \\
            $\F_\J(\cdot)$    & the JFT on a TVGS \\
            $T_s$       & the sampling period of discrete time sequences \\
            $\mathcal{F}_i$ & the set of spectral support of $\F_\J(\X)(i, \cdot)$ \\
            $\I_f$       & an index set of nonzero elements in $\F_\J(\X)$ \\
            $\mu(\cdot)$ & the Lebesgue measure of a set \\
            $B$         & the joint bandwidth of a JBL TVGS  \\
            $B_\G$       & the projection bandwidth in the vertex domain \\
            $B_\T$       & the projection bandwidth in the time domain \\
            $\S_v$        & a discrete subset of $\T$  \\
            $\S_\G$     & $\{ v: (v, t_{vz}) \in \S \} $ \\
            $\S_\T$     & $\cup_{v \in \S_\G} \S_v$ \\
            $\S$        & the sampling set of $\X$\\
            $\S_\Theta$ & $\{ (v, t_{vz}): v \in \Theta \} \subseteq \S $ \\
            $\S_\G'$     & a subset of $\V_\G$ with $|\S_\G'| = B_\G$ \\
            $\S_\T'$     & a subset of $\T$ obtaining based on $B_\T$ \\
            $|\cdot|$      & the cardinality of a set \\
            $\mathbf{\Psi}$ & a sampling matrix  \\
            $\mathbf{\Phi}$ & an interpolation matrix  \\
            $D(\cdot)$  & the density of a set \\
            $R(\cdot)$  & the ratio of a set \\
            % $\omega$    & the angular frequency \\
            % $\Omega$    & the analog angular frequency \\
            $\P$        & the projection operator \\
			\hline
	\end{tabular}
	\label{tab:notation}
\end{table}

\section{Models} \label{sec:model}

\subsection{Continuous time-vertex graph signals}
\label{subsec:mod_c}

Consider an undirected graph $\G = (\V_\G, \E_\G, \W_\G)$, where $\V_\G=\{ v_1, \dots, v_N \}$ is the set of vertices, and $\E_\G$ is the set of edges.
The matrix $\W_\G=[w(v_i, v_j)]_{v_i,v_j\in\V_\G}$ represents an $N \times N$ symmetric weighted adjacency matrix, where $w(v_i,v_j)$ is the weight of the edge between $v_i$ and $v_j$. If every vertex in $\G$ is associated with an $L^2$ function in time, such signals are called CTVGS. Let $\T = \R$ denote the topological space of CTVGS in the time domain. A CTVGS $\X = [\X(v_1, t), \X(v_2, t), \dots, \X(v_N, t)]^T \in L^2(\V_\G \times \T)$ is illustrated in \cref{fig:TVGS} (a) and (b), where $\X(\cdot,t), t \in \T$ is a graph signal at instant $t$, and $\X(v,\cdot), v \in \V_\G$ is a complex-valued $L^2$ function on vertex $v$.

We apply the Fourier transform (FT) independently to each $\X(v,\cdot), v \in \V_\G$ to get the spectrum 
\begin{equation} 
\label{eq:X_Tf}
    \F_{FT}(\X) (v,\Omega) = \int_{-\infty}^{+\infty} \X(v,t) e^{-j\Omega t} dt, v \in \V_\G, \Omega \in \R,
\end{equation}
where $\F_{FT}(\X)(v,\cdot)$ is an $L^2$ function. For example, \cref{fig:TVGS} (c) shows the spectral of each $\F_{FT}(\X)(v,\cdot)$ of a CTVGS with $4$ vertices. 

GFT is commonly used to obtain the spectrum of graph signals \cite{emerging2013,big2014} and captures the correlation among vertices. The degree matrix is the diagonal matrix $\mathbf{D}_\G=\text{diag}(d_i)$, where $d_i=\sum_j w(v_i, v_j)$ and the graph Laplacian matrix is defined by $\L_\G =\mathbf{D}_\G-\W_\G$. The matrix $\L_\G$ can be decomposed as $ \L_\G =\U_\G \mathbf{\Lambda}_\G \U_\G^H $, where the eigenvectors $\{ \mathbf{u}_i \}_{i=1}^N$ of $\L_\G$ form the columns of $\U_\G$, $\mathbf{\Lambda}_\G$ is a diagonal matrix of eigenvalues $\{ \lambda_i \} _{i=1}^N$ corresponding to $\{ \mathbf{u}_i \}$ \cite{overview}, and $\U_\G^H$ is the Hermitian of $\U_\G$. 
We apply GFT on $\X$ to get the spectrum 
\begin{equation}
    \label{eq:X_Gf_c}
    \F_\G(\X)(i,t) =\sum_{v \in \V_\G} \mathbf{u}_i(v) \X(v,t), i=1,\dots,N, t \in \R,
\end{equation}
where $\F_\G(\X)(i,\cdot)$ is a continuous-time signal corresponding to $\lambda_i$, and $\mathbf{u}_i(v)$ is the $v$-th component of $\mathbf{u}_i$. 

JFT is constructed by applying GFT and FT jointly \cite{2016JFT}. The JFT of CTVGS is defined as $\F_\J(\X)$ expressed as
\begin{equation} 
\label{eq:JFT_c}
    \F_\J(\X)(i, \Omega) = \F_\G(\F_{FT}(\X)) = \sum_{v \in \V_\G} \! (\mathbf{u}_i(v) \int_{-\infty}^{+\infty} \X(v, t) e^{-j\Omega t} dt )\!, i=1,\dots ,N, \Omega \in \R,
\end{equation}
where $\F_\J(\X)$ is a size $N$ vector of $L^2$ functions on $\Omega$. For example, \cref{fig:TVGS} (d) illustrates the spectrum of $\F_\J(\X)$. JFT portrays the frequency of each $\X(v,\cdot)$ by the FT and decomposes each $\F_{FT}(\X)(v,\cdot)$ according to different levels of smoothness between vertices (usually similarity) by GFT.

Correspondingly, the inverse joint time-vertex Fourier transform (IJFT) is
\begin{equation} 
\label{eq:IJFT_c}
    \X(v,t) = \F_\J^{-1}(\F_\J(\X)) = \F_\G^{-1}(\F_{FT}^{-1}(\F_\J(\X))) = \sum^N_{i=1} \! \left( \mathbf{u}_i(v) \frac{1}{2\pi}  \int_{-\infty}^{+\infty} \F_\J(\X)(i, \Omega) e^{j\Omega t} d\Omega \right), v\! \in \! \V_\G, t\! \in \! \R.
\end{equation}

\subsection{Discrete time-vertex graph sequences}
\label{subsec:mod_d}

In signal processing, it is often necessary to discretize continuous signals for practical applications. Consequently, we work with discrete TVGS, where the time domain is represented by $\T = \Z$. Each vertex in $\G$ (an undirected graph) is associated with an infinite $\ell^2$ discrete sequence. Such a signal is called a DTVGS in $\ell^2(\V_\G \times \T)$, as shown in \cref{fig:TVGS} (e) and (f). For a DTVGS $\X(v,n)$, $\X(\cdot,n), n \in \T$ is a graph signal at instant $nT_s$ for the sampling period $T_s$, and $\X(v,\cdot), v \in \V_\G$ is a complex-valued sequence on vertex $v$. In essence, a DTVGS can be regarded as a clock-synchronized discretization of a CTVGS. 

Although each $\X(v,\cdot)$ is a discrete sequence, the underlying connections between different vertices remain the same and still capture signal correlations. It is feasible to utilize transformations in the joint domain to handle DTVGS. The GFT remains unchanged, while in the time domain, the FT is replaced by the discrete-time Fourier transform (DTFT). 
\begin{equation} 
\label{eq:X_Tf_d}
    \F_{DT}(\X) (v,\omega) := \sum_{n \in \T} \X(v,n) e^{-j\omega n}, v \in \V_\G, \omega \in \R,
\end{equation}
where $\F_{DT}(\X)(v,\cdot)$ is a continuous periodic spectral function, shown in \cref{fig:TVGS}(g), and $\omega = \Omega T_s = \frac{\Omega}{f_s}$ is the digital angular frequency.

To form GFT, the spectrum of a DTVGS is  
\begin{equation} 
\label{eq:X_Gf_d}
    \F_\G(\X)(i,n) =  \sum_{v \in \V_\G} \mathbf{u}_i(v) \X(v,n), i=1, \dots, N, n \in \T ,
\end{equation}
where $\F_\G(\X)(i,\cdot)$ is a discrete time sequence.

Analogous to CTVGS, the JFT $\F_\J(\X)$ of a DTVGS $\X$ is defined as follows 
\begin{equation}
    \F_\J(\X)(i, \omega) = \F_\G( \F_{DT}(\X)) = \sum_{v \in \V_\G}  \mathbf{u}_i(v) \sum_{n \in \T} \X(v,n) e^{-j\omega n} , i=1,\dots ,N, \omega \in \R,
\end{equation}
where $\F_\J(\X)$ contains $N$ periodic $L^2$ functions with period of $2\pi$ on $\omega$, shown in \cref{fig:TVGS}(h). 

Similarly, the inverse transform IJFT of DTVGS is 
\begin{equation} 
\label{eq:IJFT_d}
    \X(v,n) = \F_\J^{-1}(\F_\J(\X)) = \F_\G^{-1}(\F_{DT}^{-1}(\F_\J(\X))) = \sum^N_{i=1} \left( \mathbf{u}_i(v) \frac{1}{2\pi} \int_{-\pi}^{\pi} \F_\J(\X)(i, \omega) e^{j\omega n} d\omega \right), v \! \in \! \V_\G, n \! \in \! \T.
\end{equation}

\begin{figure*} [htbp] 
	\centering	
    \includegraphics[scale=0.39]{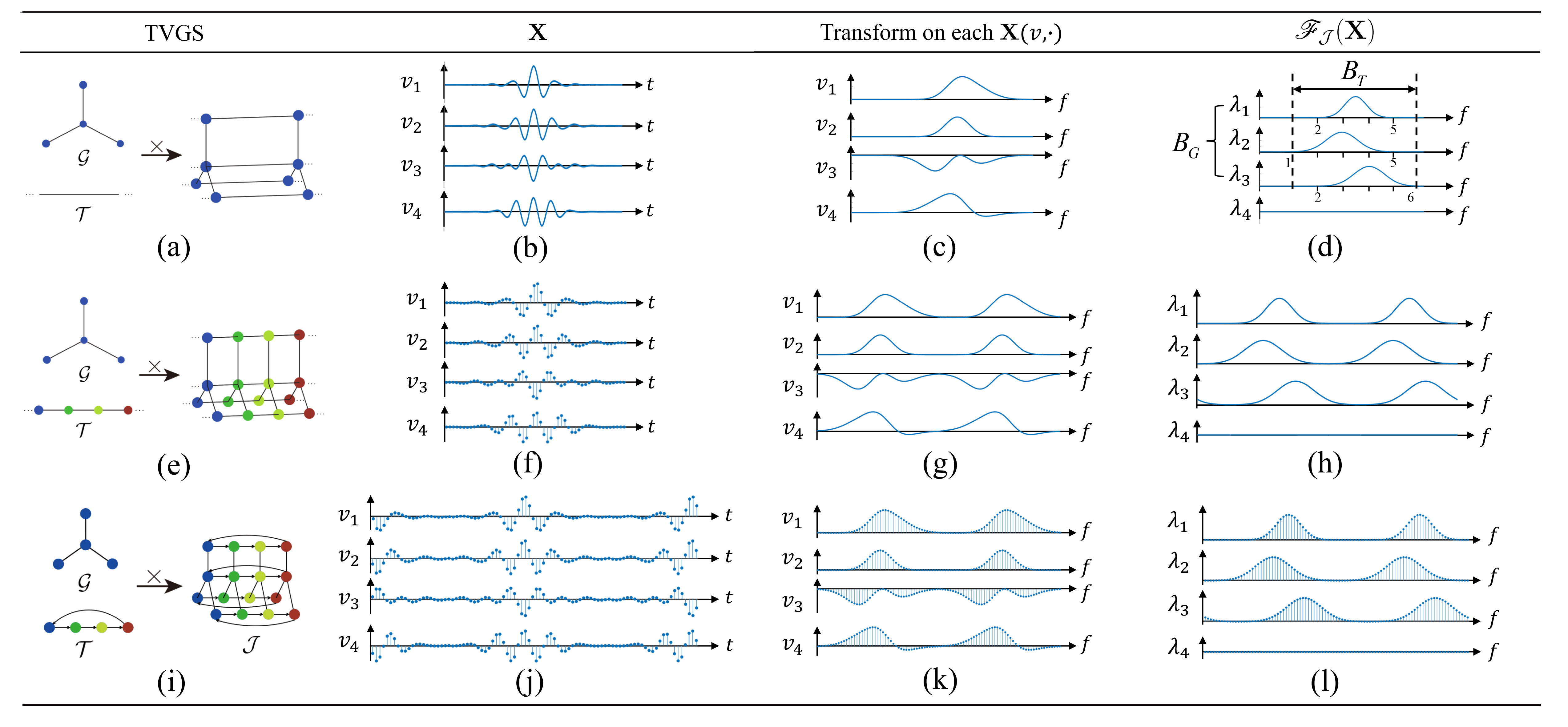}
	\caption{(a) is a CTVGS. (b) shows the signals on vertices of (a) along the temporal axis. (c) and (d) are the FT and JFT spectra of the signals in (b), respectively. (e) is a DTVGS. (f) shows the signals on vertices of (e) along the temporal axis, which discretizes the signals in (b). (g) and (h) are the DTFT and JFT spectra of the signals (f), respectively. (i) is an FTVGS, which is a period version of the signals in (f). (j) shows the signals on vertices of (i) along the temporal axis. (k) and (l) are the DFT and JFT spectra of the signals in (j), respectively. (Both signal and spectrum can be complex, and we only plot the real part in the figure.)} 
    \label{fig:TVGS}
\end{figure*}
% [width=2\columnwidth]

\subsection{Finite time-vertex graph sequences}
\label{subsec:mod_f}

In the case of a DTVGS, if the sequence on each vertex is $T$-periodic as shown in \cref{fig:TVGS} (j), it suffices to analyze only a finite number of time stamps within one period. Such a signal is referred to as an FTVGS. If the periods $T_i$ of the signals on different vertices $v_i \in \V_\G$ are different, we chose $T$ to be the least common multiple of the periods $T_i$.

Let $\x_t \in \C^N$ denote the graph signal at instant $t$, and the FTVGS is represented by the matrix $\X=[\x_1, \x_2, \dots, \x_T] \in \C^{N\times T}$. Each row of $\X$ is a finite time sequence on a vertex, and each column is a static graph signal. Moreover, $\X$ can be vectorized using the $\text{vec}(\cdot)$ operator to give $\x = \text{vec}(\X) \in \C^{NT}$, which is a vector form of $\X$ by stacking its rows.

The domain of a finite periodic discrete time series can be represented by a directed cyclic graph $\T = (\V_\T, \E_\T, \W_\T)$ \cite{Moura,Moura2}, shown in \cref{fig:TVGS} (i), where $\V_\T=\{ n_1, \dots, n_T \}$ is the set of vertices, $\E_\T$ is the set of directed edges, $\W_\T$ is the asymmetric weighted adjacency matrix. The eigenvectors obtained from the Jordan decomposition of $\W_\T$ serve as the bases for the discrete Fourier transform (DFT) \cite{timevertex}. So we apply the DFT to each row of $\X$
\begin{equation*} 
    \F_\T(\X) := \X \overline{\U}_\T,
\end{equation*}
where $\overline{\U}_\T$ is the complex conjugate of $\U_\T$, and $\U_\T \in \C^{T\times T}$ is the normalized DFT bases
\begin{equation*}
    \U^H_\T (n,m) = \frac{1}{\sqrt{T}} e^{-\frac{j2\pi (m-1) n}{T}}, n,m = 1,2,\dots,T.
\end{equation*}
$\F_\T(\X)$ is a matrix representing the values within one period in \cref{fig:TVGS} (k). On the other hand, it is also possible to model the time domain by an undirected cycle graph $\T = (\V_\T, \E_\T, \W_\T)$ \cite{sampling2018}. The graph Laplacian $\L_\T$ can be decomposed as $\L_\T = \U_\T \mathbf{\Lambda}_\T \U_\T^H$. Then,  we can apply GFT to each row of $\X$, and $\F_\T(\X) := \X \overline{\U}_\T$ remains valid. Whether the temporal topology is represented as a directed or undirected graph does not affect the subsequent theoretical results.

In the vertex domain, we apply GFT to each column of $\X$,
\begin{equation*} 
    \F_\G(\X) := \U_\G^H \X .
\end{equation*}
The graphical model of FTVGS, denoted by $\J$ (stands for ``joint''), is the Cartesian product of $\G$ and $\T$ (illustrated in \cref{fig:TVGS} (i)), 
\begin{equation*}
    \J = \G \times \T = (\V_\G \times \V_\T, \E_\J).
\end{equation*}
The product structure is proposed to take full advantage of the intrinsic structure of FTVGS, which alleviates the issue of the curse of dimensionality \cite{big2014,sampling2018}.

Then we can denote the Laplacian matrix of $\J$ as $\L_\J$ \cite{timevertex}, 
\begin{equation*}
    \L_\J = \L_\G \times \L_\T = (\U_\G \otimes \U_\T)(\mathbf{\Lambda}_\G \times \mathbf{\Lambda}_\T)(\U_\G \otimes \U_\T)^H = \U_\J \mathbf{\Lambda}_\J \U_\J^H .
\end{equation*}
JFT has been introduced by applying the transform on $\T$ in time domain and $\G$ in vertex domain, respectively \cite{timevertex}
\begin{equation}
\label{eq:JFT_f}
    \F_\J(\X) := \U_\G^H \X \overline{\U}_\T ,
\end{equation}
shown in \cref{fig:TVGS}(l). Note that $\F_\J(\X)$ is a matrix representing the values within one period in \cref{fig:TVGS}(l). In the vector form, the transformation can be expressed as
\begin{equation} 
\label{F_xf}
    \F_\J(\x) := \U_\J^H \x ,
\end{equation}
where $\U_\J = \U_\G \otimes \U_\T$, $\F_\J(\x) \in \R^{NT}$ is a vector.

The inverse transform IJFT of $\X$ is 
\begin{equation*}
    \X = \F_\J^{-1}(\F_\J(\X)) = \U_\G \F_\J(\X) \U_\T^T,
\end{equation*}
where $\U_\T^T$ is the transpose of $\U_\T$, and its vectorized version satisfies $ \x =\F_\J^{-1}(\F_\J(\x)) = \U_\J \F_\J(\x)$.

It is important to emphasize that although the graph domain topology of the three kinds of TVGS in this paper is modeled as an undirected graph $\G$, \emph{our theory still holds when modeling the graph domain topology as a directed graph}. In other words, the graph domain topology of a TVGS can be modeled as either a directed or undirected graph depending on the specific application.

\section{Problem Formulation}
\label{sec:prob}

\subsection{Sampling for CTVGS}
For a CTVGS $\X$, suppose that a function $\X(v,\cdot) \in L^2(\R)$ is sampled on a discrete set $\S_v=\{ t_{vz}:z \in \Z \} \subseteq \T$. The sampling on vertex $v$ can be expressed as
\begin{equation*}
    \sum_{z\in\Z} \X(v,t) \delta(t-t_{vz}),
\end{equation*}
by which the sampled signal on $\S_v$ is $\X(v,t \in \S_v)$, denoted as $\X(v,\S_v) $ for brevity and the same below.

The sampling set of $\X$ is denoted by $\S=\{ (v, t_{vz}): v \in \V_\G, t_{vz} \in \T \}$. We define the projected sampling set of $\S$ in vertex domain as $\S_\G=\{ v: (v, t_{vz}) \in \S \}$ and the projected sampling set of $\S$ in the time domain as 
\begin{equation*}
    \S_\T = \bigcup\limits_{v \in \S_\G} \S_v = \{ t_{vz}: (v, t_{vz}) \in \S \}.
\end{equation*}
The sampling of $\X$ in the vertex domain can be described as $\X(\S_\G, \cdot)$ such that
\begin{equation*}
    \X(\S_\G, t) := \left[ \begin{matrix} \X(v_{s_1}, t) \\ \X(v_{s_2}, t) \\ \vdots \\ \X(v_{s_{|\S_\G|}}, t) \end{matrix} \right] = \mathbf{\Psi}_\G \left[ \begin{matrix} \X(v_1, t) \\ \X(v_2, t) \\ \vdots \\ \X(v_N, t) \end{matrix} \right],
\end{equation*}
where $\mathbf{\Psi}_\G=[\psi (p,q)]\in \{ 0, 1 \}^{|\S_\G| \times N}$ is the sampling matrix corresponding to $\S_\G=\{ v_{s_1}, v_{s_2}, \dots, v_{s_{|\S_\G|}} \}$, defined as 
\begin{equation*}
    \psi (p,q) = \left \{ 
        \begin{array}{ll}
        1, & {q = v_i;} \\ 
        0, & {\text{otherwise}.} \end{array}\right.
\end{equation*}

The sampling set $\S_v$ can be either uniform or non-uniform in the time domain. To deal with non-uniform sampling sets, it is useful to define \emph{sampling densities} as
\begin{equation*}
    D(\S_v) := \mathop{{\rm lim\ inf}}\limits_{t \to \infty} \frac{|\S_v \cap [-t,t] |}{2t},
\end{equation*}
which is equivalent to the definition of sampling density in \cite{MIMO}. The \emph{total sampling density} of $\X$ can be expressed as
\begin{equation*}
    D(\S) := \mathop{{\rm lim\ inf}}\limits_{t \to \infty} \frac{|\S \cap \{ \V_\G \times [-t,t]\} |}{2t N},
\end{equation*}
where $N = |\V_\G|$.

For CTVGS, our goal is to develop a sampling theorem and give a feasible sampling and reconstruction procedure (\cref{alg:multi}). The sampling theorem for CTVGS includes 
the selection of $\S_\G$ (\cref{cor:rank_k}) and lower bounds on $D(\S_\Theta), \S_\Theta \subseteq \S$ (\cref{thm:subset_c}). The importance of proving the sampling theory is that 
\begin{enumerate}[(i)]
    \item Theoretically, we give the necessary conditions for reconstructing CTVGS from samples and prove the lower bounds of $D(\S_\Theta)$. Additionally, the sampling theory of CTVGS is the basis of the sampling theorems of DTVGS and FTVGS (see more detail in proof of \cref{thm:subset_d,thm:subset_f}).

    \item In practice, the sampling of CTVGS is not feasible in a discrete-time operating computer system. It does not mean that the sampling theorem and scheme are not valuable. Real signals are generally continuous-time signals, and the sampling theorem and scheme of CTVGS can guide us on how to arrange the location of the sensors, how to design the filters, and at what frequency each sensor should sample.
\end{enumerate}

\subsection{Sampling for DTVGS}

For a DTVGS $\X$, suppose $\S_v=\{ n_{vz}:z \in \Z \} \subseteq \T$ is the sampling set of $\X(v,\cdot) \in \ell^2(\Z)$. The sampled signal on $\S_v$ is $\X(v,n \in \S_v)$, denoted as $\X(v,\S_v)$ for brevity.

Analogously, the sampling set of $\X$ is $\S=\{ (v, n_{vz}): v \in \V_\G, n_{vz} \in \T \}$. We define the projected sampling set of $\S$ in vertex domain as $\S_\G=\{ v: (v, n_{vz}) \in \S \}$, and the projected sampling set of $\S$ in time domain as 
\begin{equation*}
    \S_\T = \bigcup\limits_{v \in \S_\G} \S_v = \{ n_{vz}: (v, n_{vz}) \in \S \}.
\end{equation*}
The sampling of $\X$ in the vertex domain is the same as that of CTVGS, expressed as $\X(\S_\G,\cdot) = \mathbf{\Psi}_\G \X$.

%<*tag:ratio>
The sampling of discrete sequences is measured by the \emph{sampling (downsampling) ratios} defined as
\begin{equation*}
    R_D(\S_v) := \mathop{{\rm lim\ inf}}\limits_{n \to \infty} \frac{|\S_v \cap [-n, n] |}{2n}.
\end{equation*}
The \emph{total sampling ratio} of $\X$ is
\begin{equation*}
    R_D(\S) := \mathop{{\rm lim\ inf}}\limits_{n \to \infty} \frac{|\S \cap \{ \V_\G \times [-n, n]\} |}{2n N },
\end{equation*}
where $N = |\V_\G|$.
%</tag:ratio>

For DTVGS, our goal is also to develop a sampling theorem and give a feasible sampling and reconstruction procedure. By establishing the relations between DTVGS and CTVGS, we propose the lower bounds on the sampling ratio $R_D(\S_\Theta), \S_\Theta \subseteq \S$ (\cref{thm:subset_d}) and the sampling scheme (\cref{subsec:multi_d}) of DTVGS. The significance of giving the sampling theory for DTVGS is that 
\begin{enumerate}[(i)]
    \item Theoretically, we provide the necessary conditions for reconstructing DTVGS from samples. Unlike the sampling of CTVGS, DTVGS are resampled by a rational factor in the time domain.

    \item In practice, once real signals are recorded by sensors, we only have their discretized approximation. The sampling theorem and scheme of DTVGS can help us to compress the signal to the smallest sample size without loss.
\end{enumerate}

\subsection{Sampling for FTVGS}
For an FTVGS $\X$, suppose $\S_v \subseteq \V_\T $ is the sampling set of $\X(v,\cdot)$. The sampling operation on the vectorized form of $\X(v,\cdot)$ can be expressed as $ \X(v,\S_v) = \X(v,\cdot) \mathbf{\Psi}_v^H \in \C^{1\times|\S_v|}, $
where $\mathbf{\Psi}_v \in \{ 0, 1 \}^{ |\S_v| \times T}$ are the sampling matrix corresponding to $\S_v$. 

The sampling set of $\X$ is denoted as $\S=\{ (v, n): v \in \V_\G, n \in \V_\T \}$. We define the projected sampling set of $\S$ in vertex domain as $\S_\G=\{ v: (v, n) \in \S \}$, and the projected sampling set of $\S$ in time domain as $\S_\T = \{ n: (v, n) \in \S \}$. 

The sampling of $\X$ in vertex and time domains respectively are expressed as $\X(\S_\G,\cdot) = \mathbf{\Psi}_\G \X$ and $\X(\cdot,\S_\T) = \X \mathbf{\Psi}_\T^H,$
where $\mathbf{\Psi}_\G \in \{ 0, 1 \}^{|\S_\G| \times N}$ and $\mathbf{\Psi}_\T \in \{ 0, 1 \}^{|\S_\T| \times T}$ are the sampling matrices corresponding to $\S_\G$ and $\S_\T$.

For each $v$, the \emph{sampling ratio} is defined as
\begin{equation*}
    R_F(\S_v) := \frac{|\S_v |}{T} ,
\end{equation*}
and the \emph{total sampling ratio} of $\X$ is given by
\begin{equation*}
    R_F(\S) := \frac{|\S |}{NT}.
\end{equation*}

For FTVGS, a special case of DTVGS, we formulate the sampling theorem in matrix form and provide a proof. The sampling theorem for FTVGS is meaningful in that it completes the framework of the sampling theory of TVGS, and the matrix form makes the theorem simpler and easier to understand compared to the sampling theorems for CTVGS and DTVGS. Also, the sampling and reconstruction procedure applicable to DTVGS applies to FTVGS. Therefore we do not dwell on the sampling scheme in \cref{sec:cri_samp_f} but give a simple example in \cref{subsec:exp1} to help understand our ideas about the sampling scheme.

In addition, we briefly summarise the similarities and differences in the sampling theories and methods for the three types of TVGS in \cref{tab:sum}.

\section{Critical Sampling and Reconstruction of CTVGS} 
\label{sec:cri_samp_c}

\subsection{Critical sampling of CTVGS}
\label{subsec:cri_c}

A key prerequisite of classical sampling theory is that the signal to be sampled is bandlimited, meaning its spectral support consists of finite measures. 

For a CTVGS $\X$ with JFT $\F_\J(\X)$, the $i$-th element $\F_\J(\X)(i, \cdot)$ is supported on a measurable set $\mathcal{F}_i$. We assume that $\mathcal{F}_i$ is a finite union of intervals with known locations. Let $\I_f=\{ i: f \in \mathcal{F}_i\}$ be the index set of nonzero elements of $\F_\J(\X)$ at frequency $f$. Taking $\F_\J(\X)$ in \cref{fig:TVGS} (d) as an example, 
\begin{equation*}
    \I_f=\{ i: f \in \mathcal{F}_i\} \left \{ 
\begin{array}{ll}
\{ 2 \}, & f \in [1, 2) \\ 
\{ 1, 2, 3 \}, & f \in [2, 5) \\
\{ 3 \}, & f \in [5, 6] \\
\emptyset, & {\text{otherwise}.} \end{array}\right.
\end{equation*}

\begin{definition}
\label{df:probth_c}
    Let $\mathcal{F} = \cup_i \mathcal{F}_i$, $B_\T = \mu(\mathcal{F})$ is defined as the projection bandwidth in the time domain, where $\mu(\cdot)$ is the Lebesgue measure. Let $\I = \cup_{f \in \mathcal{F}} \I_f$, $B_\G = |\I|$ is defined as the projection bandwidth in vertex domain. 
\end{definition}

\begin{definition}
\label{df:jotbth_c}
    For any CTVGS $\X$ with spectral functions $\F_\J(\X)$, the joint bandwidth of $\X$ is $B:= \sum_{i=1}^{B_\G} \mu( \mathcal{F}_i) $. Signal $\X$ is a JBL signal when $B < +\infty$.
\end{definition}

%<*tag:fig1exp>
We assume that $\F_\J(\X)(i, \cdot)$ are independent (\cref{df:JBLC}), which means that the $\F_\J(\X)(i, f), i \in \I, f \in \mathcal{F}_i$ can be an arbitrary value, provided the bandwidth condition for each $i$ is satisfied. 
%</tag:fig1exp>

For instance, a CTVGS $\X$ is shown in \cref{fig:TVGS} (b). The \cref{fig:TVGS} (c) and (d) are $\F_{FT}(\X)$ and $\F_\J(\X)$, respectively. Then we have $\mathcal{F}_1=[2,5]$, $\mathcal{F}_2=[1, 5] $, $\mathcal{F}_3=[2,6]$, $\mathcal{F}=\cup_{i=1}^{B_\G}\mathcal{F}_i=[1,6]$, and $\I = \{ 1, 2, 3 \}$. The projection bandwidths of $\X$ are $B_\G= |\I| =3$ and $B_\T=\mu(\mathcal{F})=5$. The joint bandwidth of $\X$ is $B=\sum_{i=1}^3 \mu( \mathcal{F}_i) =11$, so $\X$ is a JBL CTVGS.

\iffalse
For instance, a CTVGS $\X$ is shown in \cref{fig:TVGS} (b). The \cref{fig:TVGS} (c) is the spectra obtained by performing FT on each $\X(v, \cdot)$, \textit{i.e.},
\begin{equation*}
    \F_{FT}(\X) = \left[ \begin{matrix} \F_{FT}(\X)(v_1, f) \\ \F_{FT}(\X)(v_2, f) \\ \F_{FT}(\X)(v_3, f) \\ \F_{FT}(\X)(v_4, f) \end{matrix} \right]
    = \left[ \begin{matrix} 
    5 \cdot e^{ \pi(f-3.5)^2 } + \frac{8}{\sqrt{2}} \cdot e^{\frac{\pi(f-4)^2}{2} } \\ 
    5 \cdot e^{ \pi(f-3.5)^2 } \\ 
    5 \cdot e^{ \pi(f-3.5)^2 } - \frac{7}{\sqrt{2}} \cdot e^{ \frac{\pi(f-3)^2}{2} } - \frac{4}{\sqrt{2}} \cdot e^{ \frac{\pi(f-4)^2}{2} } \\
    5 \cdot e^{ \pi(f-3.5)^2 } + \frac{7}{\sqrt{2}} \cdot e^{ \frac{\pi(f-3)^2}{2} } - \frac{4}{\sqrt{2}} \cdot e^{ \frac{\pi(f-4)^2}{2} } 
    \end{matrix} \right].
\end{equation*}
The \cref{fig:TVGS} (d) shows the result of applying the GFT to $\F_{FT}(\X)$, \textit{i.e.},
\begin{equation*}
    \F_\J(\X) = \left[ \begin{matrix} \F_\J(\X)(i_1, f) \\ \F_\J(\X)(i_2, f) \\ \F_\J(\X)(i_3, f) \\ \F_\J(\X)(i_4, f) \end{matrix} \right]
    = \left[ \begin{matrix} 
    10 \cdot e^{ \pi(f-3.5)^2 } \\ 
    10 \cdot e^{ \frac{\pi(f-3)^2}{2} } \\ 
    10 \cdot e^{ \frac{\pi(f-4)^2}{2} } \\
    0
    \end{matrix} \right]. 
\end{equation*}
\fi

Since the frequency of the CTVGS lies in the interval $(-\infty, +\infty)$, $B$ is finite if and only if $B_\T$ is finite. Obviously, the relationship between projection bandwidths and the joint bandwidth is
\begin{equation*}
    B \le B_\G B_\T.
\end{equation*}

\begin{definition}
\label{df:JBLC}
    Let $\BL_C(\{ \mathcal{F}_1, \dots, \mathcal{F}_{B_\G} \}) = \{ \X \in L^2(\V_\G \times \T): \F_\J(\X)(i, f) = 0, \forall f \notin \mathcal{F}_i \}$ be the space of JBL CTVGS, in which $\F_\J(\X)(i, f), f \in \mathcal{F}_i$ can be assigned arbitrary values.
\end{definition}

In general, we can sample the signals on all $N$ vertices by their Nyquist rate \cite{nyquist, shannon}. However, in the vertex domain, the GFT allows us to express $\X$ more efficiently by projecting it onto orthogonal bases, enabling us to sample on a smaller subset of vertices. In the time domain, Landau established a lower bound on the sampling density for sampling of each signal $\X(v,\cdot)$, which is determined by the Lebesgue measure of its spectral support \cite{landau1, landau2}. 
For JBL CTVGS, only a portion of the spectral functions of $\F_\J(\X)$ lies within $\I$, and the spectral support of each $\F_\J(\X)(i,\cdot), i \in \I$ does not exceed $\mathcal{F}$. Therefore, $\F_\J(\X)(i,\cdot)$ in \cref{eq:IJFT_c} can be integrated within $\mathcal{F}$ instead of $\R$. A CTVGS $\X \in \BL_C(\{ \mathcal{F}_1, \dots, \mathcal{F}_{B_\G} \})$ thus admits a low-dimensional representation as
\begin{equation}
\label{eq:sep_samp_c}
    \X(v,t) = \frac{1}{2\pi} \sum_{i \in \I} ( \mathbf{u}_i(v) \int_{\mathcal{F}} \F_\J(\X)(i,\Omega) e^{j\Omega t} d\Omega ), v\in \V_\G,t \in \T. 
\end{equation}

Then we have a rank condition of vertices to be sampled in \cref{lem:rank}. Let $\U_\G(\S_\G', \I)$ be the submatrix formed by retaining rows indexed by $\S_\G'$ and columns by $\I$.

\begin{lemma}
\label{lem:rank}
    For a signal $\X \in \BL_C(\{ \mathcal{F}_1, \dots, \mathcal{F}_{B_\G} \})$, there exist a projection sampling set in vertex domain $\S_\G' \subseteq \V_\G$ with $|\S_\G'| = B_\G$ satisfying
        \begin{equation*}
        {\rm rank} (\U_\G(\S_\G', \I)) = B_\G,
    \end{equation*}
    such that $\X$ can be reconstructed.
\end{lemma}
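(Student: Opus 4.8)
The plan is to reduce the statement to an elementary full-rank fact about a submatrix of the unitary eigenmatrix $\U_\G$, and then to show that this rank condition is exactly what inverts the vertex-domain sampling relation implied by \cref{eq:sep_samp_c}. The guiding observation is that \cref{eq:sep_samp_c} expresses every $\X(v,t)$ as a fixed linear combination, with vertex-dependent coefficients $\mathbf{u}_i(v)$, of the $B_\G = |\I|$ time functions obtained by integrating $\F_\J(\X)(i,\cdot)$ over $\mathcal{F}$.

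First I would isolate the $N \times B_\G$ matrix $M := \U_\G(\V_\G, \I)$ formed by keeping all rows of $\U_\G$ but only the columns indexed by $\I$. Since $\U_\G$ is unitary, its columns $\{\mathbf{u}_i\}$ are orthonormal, so the $B_\G$ columns of $M$ are linearly independent; that is, $M$ has full column rank $B_\G$. Next I would invoke the standard equality of row rank and column rank: a matrix of full column rank $B_\G$ contains $B_\G$ linearly independent rows. Choosing $\S_\G' \subseteq \V_\G$ to index such a maximal independent set of rows (so $|\S_\G'| = B_\G$), the square submatrix $\U_\G(\S_\G', \I)$ is $B_\G \times B_\G$ with ${\rm rank}(\U_\G(\S_\G',\I)) = B_\G$, hence invertible. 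This settles the existence of the required $\S_\G'$.

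Finally I would verify that this $\S_\G'$ permits reconstruction. Writing $g_i(t) := \int_{\mathcal{F}} \F_\J(\X)(i,\Omega) e^{j\Omega t}\, d\Omega$ for $i \in \I$, \cref{eq:sep_samp_c} reads $\X(v,t) = \frac{1}{2\pi}\sum_{i\in\I}\mathbf{u}_i(v) g_i(t)$. Restricting to the sampled vertices yields the linear system $\X(\S_\G', t) = \frac{1}{2\pi}\,\U_\G(\S_\G', \I)\,\mathbf{g}(t)$, where $\mathbf{g}(t) = [g_i(t)]_{i\in\I}$. Because $\U_\G(\S_\G',\I)$ is invertible, $\mathbf{g}(t) = 2\pi\,\U_\G(\S_\G',\I)^{-1}\X(\S_\G',t)$ is determined for every $t$ from the signals on $\S_\G'$ alone; substituting back into the representation recovers $\X(v,t)$ at every remaining vertex.

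The existence argument is elementary linear algebra, so the step I would be most careful about is purely bookkeeping: confirming that $|\I| = B_\G$ forces $\U_\G(\S_\G',\I)$ to be exactly square, so that full rank coincides with invertibility. This squareness is guaranteed directly by \cref{df:probth_c}, which sets $B_\G = |\I|$, and by the choice $|\S_\G'| = B_\G$. I would also note that the reconstruction established here is the vertex-domain reduction only — from $N$ vertices down to $B_\G$ — while the actual time-domain sampling of the recovered functions $g_i(t)$ is a separate matter treated by the subsequent density results.
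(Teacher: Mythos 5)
Your proof is correct, but it takes a more elementary and self-contained route than the paper. The paper's proof reduces the lemma to the static graph-signal sampling theorem (Theorem~1 of the cited reference for \cref{lem:rank}) applied \emph{frequency by frequency}: for each $f \in \mathcal{F}$ it obtains a sampling matrix $\mathbf{\Psi}_f$ with ${\rm rank}(\mathbf{\Psi}_f \U_\G(\cdot,\I_f)) = |\I_f|$, then asserts that a single $\mathbf{\Psi}_\G'$ exists with ${\rm rank}(\mathbf{\Psi}_\G'\U_\G(\cdot,\I)) = |\I| = B_\G$, and reconstructs $\F_{FT}(\X)$ with the pseudoinverse interpolation operator $\mathbf{\Phi}_\G' = \U_\G(\cdot,\I)(\U_M^H\U_M)^{-1}\U_M^H$. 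You bypass both the per-frequency decomposition and the external theorem: orthonormality of the columns of $\U_\G$ gives full column rank of $\U_\G(\cdot,\I)$, the equality of row and column rank gives $B_\G$ linearly independent rows, and inverting the resulting square submatrix against the representation \cref{eq:sep_samp_c} recovers $\mathbf{g}(t)$ and hence $\X$. What your route buys is transparency: the paper's lifting step (``to ensure \cref{eq:S_f} holds for every $f$, there must exist $\mathbf{\Psi}_\G'$\dots'') is exactly the linear-algebra fact you prove explicitly, so your argument fills in what the paper leaves implicit; moreover, your existence step only needs linear independence of the columns of $\U_\G(\cdot,\I)$, which holds whenever $\U_\G$ is invertible, so it extends verbatim to the directed-graph case the paper claims to cover, where $\U_\G$ need not be unitary. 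What the paper's route buys is the connection to the per-frequency support structure $\I_f$, which is not needed for this lemma but is reused immediately afterwards (in \cref{cor:rank_k} and the rank integrals of \cref{thm:subset_c}), and a pseudoinverse reconstruction formula that also covers oversampled sets with $|\S_\G'| > B_\G$, whereas your inverse is tied to the exactly square case.
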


\begin{proof}
    %<*tag:correction1>
    According to \cref{eq:JFT_c}, we have$ \F_{FT}(\X) = \U_\G \F_\J(\X)$, where each frequency $f$ corresponds to a graph signal, that is, 
    \begin{equation*}
        \F_{FT}(\X)(\cdot, f) = \U_\G \F_\J(\X)(\cdot, f).
    \end{equation*}
    %</tag:correction1>
    For a graph signal $\F_{FT}(\X)(\cdot, f), \forall f \in \mathcal{F}$, according to Theorem 1 in \cite{theory}, we have
    \begin{equation}
    \label{eq:S_f}
        {\rm rank} (\mathbf{\Psi}_f \U_\G(\cdot, \I_f)) = |\I_f|
    \end{equation}
    for some sampling matrix as $\mathbf{\Psi}_f$, where $\U_\G(\cdot, \I_f)$ is the submatrix formed by retaining columns indexed by $\I_f$ and all rows. 

    Then we consider sampling on $\F_{FT}(\X)$. To ensure \cref{eq:S_f} holds for every $\forall f \in \mathcal{F}$, there must exist a sampling matrix $\mathbf{\Psi}_\G'$ such that
    \begin{equation*}
        {\rm rank} (\mathbf{\Psi}_\G' \U_\G(\cdot, \I)) = {\rm rank} (\U_\G(\S_\G', \I)) = |\I| = B_\G.
    \end{equation*}

    For recovering, $ \mathbf{\Phi}_\G' \mathbf{\Psi}_\G' \F_{FT}(\X) = \F_{FT}(\X)$, where $\mathbf{\Phi}_\G' = \U_\G(\cdot, \I)(\U_M^H \U_M)^{-1} \U_M^H$, where $\U_M = \mathbf{\Psi}_\G' \U_\G(\cdot, \I)$.
\end{proof}

Based on \cref{eq:sep_samp_c} and \cref{lem:rank}, a \emph{separate sampling scheme} is proposed \cite{ji2019hilbert}, whose main idea is as follows. In the vertex domain, $|\S_\G'| = B_\G$ vertices are selected for sampling following \cref{lem:rank}. The corresponding sampling matrix is $\mathbf{\Psi}_\G'$. 
In the time domain, each $\X(v, \cdot), v \in \S_\G'$ is sampled on a discrete set $\S_\T'$, which is a subset of $\T$ obtained based on $B_\T$. Let $ g(B_\T) = \sum_{i \in \Z} \delta \left( t- \frac{i}{B_\T} \right) $, the sampled signal on $\S_\G' \times \S_\T'$ will be 
\begin{equation*}
    \mathbf{\Psi}_\G' \X \  g(B_\T) 
\end{equation*} 
denoted as $\X(\S_\G', \S_\T') $.
Signal $\X$ is a column vector with functions as elements. Pre-multiplying $\mathbf{\Psi}_\G'$ selects the elements corresponding to $\S_\G'$ from $\X$, while post-multiplying by $g(B_\T)$ samples each element at a rate of $B_\T$. 

The separate sampling scheme helps us get a sampling set $\S' = \{ \S_\G' \times \S_\T' \}$ by sampling $\X$ in vertex domain and time domain respectively, with $D(\S') = (B_\G B_\T)/N$. However, the separate sampling scheme may not give a sampling set with the minimum number of samples.

The correlation between $\X(v, \cdot), v \in \V_\G$ remains unchanged after applying FT, as FT is a linear transform. In other words, spectral functions in $\F_{FT}(\X)$ are as related as $\X(v, \cdot), v \in \V_\G$. \emph{Therefore, we can use GFT to map $\F_{FT}(\X)$ into a more compact spectrum $\F_\J(\X)$ whose spectral functions are independent.} When we want to further reduce the sampling rate calculated based on $\F_{FT}(\X)$, it is natural to sample $\X$ based on $\F_\J(\X)$. As $B$ is the bandwidth defined on the JFT spectrum and $ B \le B_\G B_\T $, we are interested in exploring sampling schemes that achieve a sampling density of $B/N$. 

In the problem of CTVGS sampling, $B_\G < N$ implies that there is a correlation between $\X(v, \cdot), v \in \V_\G$. That is, it is not necessary to sample on all vertices. For whole $\F_\J(\X)$, we get an $\S_\G'$ with $|\S_\G'| = B_\G$. It can be seen that perfect recovery $\F_{FT}(\X) = \mathbf{\Phi}_\G' \mathbf{\Psi}_\G' \F_{FT}(\X)$ can be achieved. Therefore, as long as $\mathbf{\Psi}_\G' \F_{FT}(\X) = \F_{FT}(\X)(\S_\G', \cdot)$ can be stably reconstructed from the samples, $\F_{FT}(\X)$ can be stably reconstructed. That is, $\X$ completely determined by $\X(\S_\G', \cdot)$. $\X(\S_\G', \cdot)$ is also in $\BL_C(\{ \mathcal{F}_1, \dots, \mathcal{F}_{B_\G} \})$ and has joint bandwidth $B$.

Now let us consider sampling on $\X(\S_\G', \cdot)$, whose corresponding sampling set is $\S$. We find a connection between TVGS sampling and MIMO sampling, so we generalize Theorem 1 in \cite{MIMO} to CTVGS. 

To better present the sampling theorem, the concept of stable sampling is first introduced. The inner product on $L^2(\R)$ is $\langle x, y \rangle = \int_{\R} x(t)y^*(t) dt$. The inner product on $\ell^2(\Z)$ is $\langle x, y \rangle = \sum_{n \in \Z} x(n)y^*(n) $. The inner product on the vector space of complex-valued finite-dimensional vectors is $\langle x, y \rangle := \sum_{n=0}^{N-1} x(n)y^*(n)$. The sampling operation of TVGS can be expressed as an inner product\cite{foundations}. We regard a TVGS $\X$ as a vector, where each element is a function or sequence. Then the inner product $\langle \X, \X \rangle$ can be calculated from the vector inner product formula. Then the norm is introduced by the inner product: $\| \X \| = \sqrt{\langle \X, \X \rangle}$. 
\begin{definition}
\label{df:stable}
    \cite{stable,MIMO} Let $\Gamma$ be a subset of TVGS, which is either CTVGS, DTVGS, or FTVGS. Then a set $\S = \{ (v, l) \}$ is called a \emph{stable sampling set} with respect to $\Gamma$ if there exist $0 < A$ and $B < +\infty$ such that
    %<*tag:correction2>
    \begin{equation*}
        A \| \X \|^2 \le \sum_v \sum_l |\X(v,l)|^2 \le B \| \X \|^2
    \end{equation*}
    %</tag:correction2>
    for any $\X \in \Gamma$.
\end{definition}
Stable sampling means that any error in the observed values on $\X$ will cause a controllable error in the reconstruction signals.

\begin{theorem}
\label{thm:subset_c}
    For a signal $\X(\S_\G', \cdot) \in \BL_C(\{ \mathcal{F}_1, \dots, \mathcal{F}_{B_\G} \})$ with joint bandwidth $B$, suppose that $\Theta \subseteq \S_\G'$, and $\S_\Theta = \{ (v, t_{vz}): v \in \Theta \} \subseteq \S $ is a stable sampling set with $D(\S_\Theta)$. Then
    \begin{equation*}
        D(\S_\Theta) \ge \frac{1}{N} \left( B - \int_\mathcal{F} {\rm rank}(\U_\G(\Theta^c, \I_f))df \right),
    \end{equation*}
    where $\Theta^c$ is the complement of $\Theta$ in $\S_\G'$.
\end{theorem}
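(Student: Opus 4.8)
The plan is to recast the recovery of $\X(\S_\G', \cdot)$ as a MIMO sampling problem and then invoke a density bound in the spirit of Theorem 1 in \cite{MIMO}. By \cref{eq:sep_samp_c} and \cref{lem:rank}, reconstructing $\X(\S_\G', \cdot)$ is equivalent to recovering the $B_\G$ independent spectral functions $\{ \F_\J(\X)(i, \cdot) \}_{i \in \I}$, each supported on $\mathcal{F}_i$. Working frequency by frequency, the relation $\F_{FT}(\X)(\cdot, f) = \U_\G \F_\J(\X)(\cdot, f)$ from the proof of \cref{lem:rank} restricts to $\F_{FT}(\X)(\cdot, f) = \U_\G(\cdot, \I_f)\, \F_\J(\X)(\I_f, f)$ for $f \in \mathcal{F}$, since $\F_\J(\X)(i, f) = 0$ whenever $i \notin \I_f$. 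This is exactly the structure of a MIMO system whose inputs are the independent JFT components and whose outputs are the time-domain signals on the sampled vertices, so the density theory of \cite{MIMO} applies once the correct spectral occupancy is identified.

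First I would account for the information carried by the vertices $\Theta^c = \S_\G' \setminus \Theta$. At a fixed frequency $f$, any observation available on $\Theta^c$ is of the form $\U_\G(\Theta^c, \I_f)\, \F_\J(\X)(\I_f, f)$, so even full knowledge on $\Theta^c$ determines only the projection of the unknown vector $\F_\J(\X)(\I_f, f) \in \C^{|\I_f|}$ onto the row space of $\U_\G(\Theta^c, \I_f)$, a subspace of dimension ${\rm rank}(\U_\G(\Theta^c, \I_f))$. The complementary $|\I_f| - {\rm rank}(\U_\G(\Theta^c, \I_f))$ degrees of freedom lie in the kernel and cannot be resolved by $\Theta^c$; they must be supplied by the samples taken on $\Theta$. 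Because each $\mathcal{F}_i$ is a finite union of intervals, $\I_f$ is piecewise constant in $f$, hence $\U_\G(\Theta^c, \I_f)$ takes finitely many values and the residual occupancy $f \mapsto |\I_f| - {\rm rank}(\U_\G(\Theta^c, \I_f))$ is a measurable, piecewise-constant function whose integral over $\mathcal{F}$ is well defined.

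Next I would apply the generalized MIMO density bound to this reduced recovery problem. Stability of the sampling set $\S_\Theta$ (in the sense of \cref{df:stable}) forces the total sampling density on $\Theta$, which by the definition of $D(\cdot)$ equals $N\, D(\S_\Theta) = \liminf_{t \to \infty} \frac{1}{2t}\sum_{v \in \Theta} |\S_v \cap [-t,t]|$, to be at least the integrated residual occupancy:
\begin{equation*}
    N\, D(\S_\Theta) \;\ge\; \int_\mathcal{F} \big( |\I_f| - {\rm rank}(\U_\G(\Theta^c, \I_f)) \big)\, df .
\end{equation*}
Using the identity $B = \sum_{i \in \I} \mu(\mathcal{F}_i) = \int_\mathcal{F} |\I_f|\, df$ and dividing by $N$ then rearranges this into the claimed bound $D(\S_\Theta) \ge \frac{1}{N}\big( B - \int_\mathcal{F} {\rm rank}(\U_\G(\Theta^c, \I_f))\, df \big)$.

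The step I expect to be the main obstacle is the faithful transfer of the Landau-type necessary condition of \cite{MIMO} to the present setting. Two points demand care. First, I must show that conditioning on $\Theta^c$ removes precisely ${\rm rank}(\U_\G(\Theta^c, \I_f))$ dimensions of the problem \emph{uniformly} across $\mathcal{F}$, so that the reduced system is genuinely a MIMO signal of the stated effective occupancy and the lower frame bound $A$ in \cref{df:stable} survives the reduction. Second, I must verify that the sampling density used in \cite{MIMO} coincides with the $\liminf$ density adopted here and that the argument remains valid for non-uniform sets $\S_v$. Once the effective spectral support is correctly pinned down, the density inequality is a direct instantiation of the MIMO theorem, with the bookkeeping identity $B = \int_\mathcal{F} |\I_f|\, df$ completing the proof.
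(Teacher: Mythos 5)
Your proposal is correct and takes essentially the same route as the paper: both recast the recovery of $\X(\S_\G', \cdot)$ as a MIMO sampling problem whose independent inputs are the JFT components $\F_\J(\X)(\I,\cdot)$ and whose channel matrix is the corresponding submatrix of $\U_\G$, and then invoke the Landau-type density bound (Eq.~(17) of \cite{MIMO}) to obtain the stated inequality. Your explicit bookkeeping identity $B = \int_\mathcal{F} |\I_f|\, df$ and the frequency-wise rank accounting merely spell out what the paper leaves implicit when citing that bound.
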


\begin{proof}
    According to the definition of GFT in \cref{eq:X_Gf_c}, we get inverse graph Fourier transform (IGFT) of $\X(\S_\G', \cdot)$: 
    \begin{equation}
    \label{eq:channel}
        \X(\S_\G', \cdot) = \U_\G(\S_\G', \I) \F_\G(\X)(\I, \cdot).
    \end{equation}
    
    Therefore, 
    \begin{equation}
    \label{eq:channel_f}
        \F_{FT}(\X)(\S_\G', \cdot) = \U_\G(\S_\G', \I) \F_\J(\X)(\I, \cdot) .
    \end{equation}
    
    Let us consider \cref{eq:channel} and \cref{eq:channel_f} in another way. \cref{eq:channel} shows that $\F_\G(\X)(i, \cdot), i \in \I$ are linearly transformed by multiplying the matrix $\U_\G(\S_\G', \I)$ to obtain $\X(\S_\G', \cdot)$. This expression can be regarded as that of a MIMO channel consisting of linear time-invariant filters, where $\F_\G(\X)(\I, \cdot)$ indicates independent inputs of the channel, and $\X(\S_\G', \cdot)$ indicates outputs of the channel. \cref{eq:channel_f} is the frequency domain expression of \cref{eq:channel}.
    
    Now, we want to deduce the lower bound on the sampling densities from $\F_\J(\X)(\I, \cdot)$, which ensures the stable reconstruction of $\X(\S_\G', \cdot)$. Referring to the analysis of the MIMO sampling problem, Eq. (17) in \cite{MIMO} indicates that 
    \begin{equation*}
        \mathop{{\rm lim\ inf}}\limits_{t \to \infty} \frac{|\S \cap \{ \V_\G \times [-t,t]\} |}{2t} = B - \int_\mathcal{F} {\rm rank}(\U_\G(\Theta^c, \I_f))df.
    \end{equation*}
    Thus \cref{thm:subset_c} must be true.
\end{proof}

According to \cref{eq:channel_f}, at frequency $f$, $\F_\J(\X)(\I_f, f) \in \C^{|\I_f| }$ and $\F_{FT}(\X)(\Theta^c, f) \in \C^{|\Theta^c| }$ are vectors, thus 
\begin{equation*}
    \F_{FT}(\X)(\Theta^c, f) = \U_\G(\Theta^c, \I_f) \F_\J(\X)(\I_f, f).
\end{equation*}
Intuitively, ${\rm rank}(\U_\G(\Theta^c, \I_f))$ represents the number of independent components of $\F_\J(\X)$ at frequency $f$ that can be determined solely from $\F_{FT}(\X(\Theta^c,f))$. 
Matrix $\U_\G(\Theta^c, \I_f)$ varies with frequency $f$ and is always a real matrix. Therefore, 
\begin{equation*}
    \mathop{{\rm ess\ inf}}\limits_{f \in \mathcal{F}} \sigma_{\text{min}} ( \U_\G(\Theta^c, \I_f) ) > 0 ,
\end{equation*}
where ${\rm ess\ inf}$ is the essential infimum of a function, and $\sigma_{\text{min}} (\cdot)$ is the smallest nonzero singular value of a matrix.

\cref{thm:subset_c} provides the lower bounds on the sampling densities of $\X(\Theta,\cdot)$ for all subsets $\Theta \subseteq \S_\G'$. This theorem can be illustrated through an example. In a sensor network, sensors equipped with high-speed ADCs are more costly than those with low-speed ADCs. We can reduce the sampling density of $\X(\Theta, \cdot)$ by increasing the sampling densities of neighboring vertices, instead of assuming that $\X(\Theta^c, \cdot)$ is known for all $t \in \R$ as in \cite{MIMO}. But there is still a lower bound on the sampling density of each $\X(\Theta, \cdot), \Theta \subseteq \S_\G'$, which is given by \cref{thm:subset_c}. 
In particular, we have the following corollaries.
\begin{corollary}
\label{cor:DS_c}
    Under the assumption of \cref{thm:subset_c}, when $\Theta = \S_\G'$, we have 
    \begin{equation*}
        D(\S_\Theta) = D(\S) \ge \frac{B}{N}.
    \end{equation*}
\end{corollary}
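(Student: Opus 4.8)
The plan is to derive this as the maximal-$\Theta$ specialization of \cref{thm:subset_c}. First I would set $\Theta = \S_\G'$, the largest admissible choice. Under this choice $\S_\Theta = \{ (v, t_{vz}): v \in \S_\G' \}$ coincides with the full sampling set $\S$ of $\X(\S_\G', \cdot)$, so immediately $D(\S_\Theta) = D(\S)$, establishing the claimed equality.

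Next I would evaluate the integral correction term appearing in the bound of \cref{thm:subset_c}. Since $\Theta = \S_\G'$, the complement $\Theta^c = \S_\G' \setminus \S_\G' = \emptyset$. Consequently, for every frequency $f \in \mathcal{F}$ the submatrix $\U_\G(\Theta^c, \I_f)$ retains no rows, i.e.\ it is a $0 \times |\I_f|$ empty matrix, whose rank is $0$. Hence
\begin{equation*}
    \int_\mathcal{F} {\rm rank}(\U_\G(\Theta^c, \I_f))\, df = \int_\mathcal{F} 0 \, df = 0.
\end{equation*}

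Finally, substituting this into the inequality supplied by \cref{thm:subset_c} yields
\begin{equation*}
    D(\S) = D(\S_\Theta) \ge \frac{1}{N}\left( B - 0 \right) = \frac{B}{N},
\end{equation*}
which is exactly the assertion. There is no real obstacle here: the corollary is a direct consequence of the theorem, and the only point requiring care is the (standard) convention that an empty submatrix has rank zero, so that the correction term vanishes and the generic bound collapses to the total-density bound $B/N$.
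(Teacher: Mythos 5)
Your proof is correct and is precisely the argument the paper intends: the corollary is the specialization $\Theta = \S_\G'$ of \cref{thm:subset_c}, where $\Theta^c = \emptyset$ makes the rank-integral correction vanish and $\S_\Theta = \S$ gives the density identity. The paper leaves this specialization implicit, and your only added care point --- that an empty submatrix has rank zero --- is exactly the convention needed, so there is nothing to correct.
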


Since $\S_\G'$ satisfies \cref{lem:rank}, $D(\S)$ is the total sampling density of $\X$. \cref{cor:DS_c} points out that the lower bound on $D(\S)$ must be no less than the joint bandwidth. 

\begin{corollary}
\label{cor:DSv}
    Under the assumption of \cref{thm:subset_c}, when $\Theta = \{ v \}$, the sampling set on vertex $v$ is $\S_v$, and $\Theta^c = \S_\G' \backslash v$. Then we have
    \begin{equation*}
        D(\S_v) \ge B - \int_\mathcal{F} {\rm rank}(\U_\G(\S_\G' \backslash v, \I_f))df.
    \end{equation*}
\end{corollary}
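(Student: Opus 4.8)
The plan is to obtain \cref{cor:DSv} as the singleton specialization of \cref{thm:subset_c}. First I would set $\Theta = \{ v \}$ in the theorem, so that $\Theta^c = \S_\G' \backslash v$ and the restricted sampling set reduces to $\S_\Theta = \{ (v, t_{vz}) \} = \S_v$, i.e.\ precisely the time samples taken on the single vertex $v$. Substituting these into the theorem bound yields
\begin{equation*}
    D(\S_\Theta) \ge \frac{1}{N} \left( B - \int_\mathcal{F} {\rm rank}(\U_\G(\S_\G' \backslash v, \I_f)) df \right).
\end{equation*}

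The remaining step is to translate the aggregate density $D(\S_\Theta)$ into the per-vertex density $D(\S_v)$. Because $\Theta$ contains only $v$, the intersection $\S_\Theta \cap \{ \V_\G \times [-t,t] \}$ counts exactly the samples in $\S_v \cap [-t,t]$. Comparing the two defining limits --- the aggregate density $D(\S_\Theta)$ normalizes by $2tN$, whereas the per-vertex density $D(\S_v)$ normalizes only by $2t$ --- gives the identity $D(\S_\Theta) = \frac{1}{N} D(\S_v)$. Inserting this into the displayed inequality and multiplying both sides by $N$ cancels the factor $1/N$ and produces the claimed bound on $D(\S_v)$.

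Since this is a direct corollary, I do not expect a substantive obstacle; the one point demanding care is the bookkeeping of the two density normalizations. The per-vertex quantity $D(\S_v)$ is defined without the $N$ in its denominator, while $D(\S_\Theta)$ inherits the $1/N$ normalization of the total density $D(\S)$ (consistent with \cref{cor:DS_c}, where $\Theta = \S_\G'$ forces $\Theta^c = \emptyset$ and hence $D(\S) \ge B/N$). Keeping these factors straight is exactly what makes the powers of $N$ cancel to give the stated inequality with no residual $1/N$.
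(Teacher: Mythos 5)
Your proposal is correct and follows exactly the route the paper intends: \cref{cor:DSv} is stated as an immediate specialization of \cref{thm:subset_c} with $\Theta = \{ v \}$, and the paper supplies no further argument. Your explicit bookkeeping of the two normalizations --- $D(\S_\Theta)$ inheriting the $1/(2tN)$ normalization of the total density while $D(\S_v)$ is normalized by $2t$ alone, so that $D(\S_\Theta) = D(\S_v)/N$ and the factors of $N$ cancel --- is precisely the step the paper leaves implicit.
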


\cref{cor:DSv} provide a lower bound on the sampling density of each $\X(v, \cdot), v \in \S_\G'$. \emph{It does not mean that $\X$ can be stably recovered by sampling all $\X(v, \cdot), v \in \S_\G'$ at their lowest sampling density simultaneously. After all, the total sampling density should be no less than $B/N$}.

Based on \cref{cor:DS_c}, we introduce the concept of critical sampling, which is not necessarily unique.
\begin{definition}
\label{df:cri_c}
    A stable sampling set $\S$ of $\X \in \BL_C(\{ \mathcal{F}_1, \dots, \mathcal{F}_{B_\G} \})$ is a critical sampling set when $ D(\S) = B/N $. 
\end{definition}

Although Venkataramani \emph{et al.} proved the lower bound of MIMO sampling density in \cite{MIMO}, they neither clarify whether the lower bounds on density are achievable nor give a feasible sampling method. Focus on sampling JBL CTVGS with the lowest sampling density, we have the following theorem.

\begin{theorem}
\label{thm:achiv_c}
    For any $\X \in \BL_C(\{ \mathcal{F}_1, \dots, \mathcal{F}_{B_\G} \})$, there must be a sampling set $\S$ such that $ D(\S) = B/N $ and $\S_\G=\{ v: (v, t_{vz}) \in \S \}$ satisfies $|\S_\G| \le B_\G$. Thus the critical sampling is achievable.
\end{theorem}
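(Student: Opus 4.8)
The plan is to prove achievability constructively, by exhibiting the multi-band sampling scheme of \cref{alg:multi} and then checking that it has density exactly $B/N$ with $|\S_\G| \le B_\G$ and that it permits stable reconstruction in the sense of \cref{df:stable}. First I would invoke \cref{lem:rank} to fix a vertex set $\S_\G'$ with $|\S_\G'| = B_\G$ and ${\rm rank}(\U_\G(\S_\G', \I)) = B_\G$. Since this matrix is square of full rank, it is invertible, and by \cref{lem:rank} the entire signal $\X$ is determined by $\X(\S_\G', \cdot)$; thus it suffices to sample on $\S_\G'$, which already forces $|\S_\G| \le |\S_\G'| = B_\G$, and to recover the active spectrum $\F_\J(\X)(\I, \cdot)$ from those samples.

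Next I would exploit the multi-band structure of the inputs. Because each $\mathcal{F}_i$ is a finite union of intervals, $\mathcal{F}$ partitions into finitely many intervals $J_1, \dots, J_K$ on each of which the active index set $\I_f$ equals a constant set $\I_k$. On $J_k$ the channel relation \cref{eq:channel_f} reads $\F_{FT}(\X)(\S_\G', f) = \U_\G(\S_\G', \I_k)\F_\J(\X)(\I_k, f)$, and $\U_\G(\S_\G', \I_k)$ has full column rank $|\I_k|$, so I can select $\Theta_k \subseteq \S_\G'$ with $|\Theta_k| = |\I_k|$ for which the square submatrix $\U_\G(\Theta_k, \I_k)$ is invertible. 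I would then assign subband $J_k$ to the vertices of $\Theta_k$: for each $v \in \S_\G'$ put $\mathcal{F}(v) = \bigcup_{k:\, v \in \Theta_k} J_k$, bandpass-prefilter $\X(v,\cdot)$ to $\mathcal{F}(v)$, and sample the result on a uniform bandpass (multi-coset) grid of rate $\mu(\mathcal{F}(v))$. Since the $J_k$ are disjoint, summing the per-vertex rates gives $\sum_{v} \mu(\mathcal{F}(v)) = \sum_k |\I_k|\,\mu(J_k) = \int_\mathcal{F} |\I_f|\,df = B$, so the total density is $D(\S) = B/N$, the critical value.

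For reconstruction I would proceed subband by subband: on $J_k$ the prefiltered and interpolated outputs of the vertices in $\Theta_k$ recover $\F_{FT}(\X)(\Theta_k, f)$ for $f \in J_k$, and left-multiplying by $\U_\G(\Theta_k, \I_k)^{-1}$ yields $\F_\J(\X)(\I_k, f)$. Assembling over $k$ gives the whole $\F_\J(\X)$, and hence $\X$ via the IJFT together with \cref{lem:rank}. Stability follows because there are only finitely many subbands and each step is a bounded operator with bounded inverse: the bandpass sampling operators carry uniform frame bounds and each $\|\U_\G(\Theta_k, \I_k)^{-1}\|$ is finite, so their composition produces global constants $0 < A \le B < +\infty$ as required by \cref{df:stable}.

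The step I expect to be the main obstacle is the interaction between the per-channel undersampling and the cross-channel demixing. One must check that the subband-to-vertex assignment is simultaneously consistent for all vertices, so that each $\mathcal{F}(v)$ is well defined and the total rate collapses exactly to $B$, and that the bandpass sampling of $\X(v,\cdot)$ over the disjoint union $\mathcal{F}(v)$ does not fold one assigned subband onto another. Securing these two properties — the uniform invertibility of the finitely many submatrices $\U_\G(\Theta_k, \I_k)$ and the aliasing-free bandpass sampling at the Landau rate $\mu(\mathcal{F}(v))$ — is precisely what guarantees the uniform lower frame bound $A > 0$ and hence genuine stable, critical sampling.
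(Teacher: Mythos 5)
Your overall route is the paper's own: select $\S_\G'$ by \cref{lem:rank}, partition $\mathcal{F}$ into maximal subbands on which $\I_f$ is constant, pick for each band an invertible square submatrix $\U_\G(\Theta_k,\I_k)$ (this is exactly \cref{cor:rank_k}), count $\sum_k |\I_k|\,\mu(J_k) = B$, and reconstruct band-by-band by inverting these submatrices and summing — precisely the multi-band scheme of \cref{alg:multi} and \cref{subsec:multi_c}.

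The one place you deviate is also where your argument has a genuine gap: the per-vertex merging step. You collapse all subbands assigned to a vertex into a single multiband spectrum $\mathcal{F}(v) = \bigcup_{k:\,v\in\Theta_k} J_k$ and claim you can sample the correspondingly prefiltered signal on a single ``uniform bandpass (multi-coset) grid'' at rate exactly $\mu(\mathcal{F}(v))$ without folding one assigned subband onto another. For a union of intervals with \emph{arbitrary} real band edges (and here the edges $f_k$ are dictated by the signal's spectral support, so you cannot choose them), exact critical-rate multi-coset sampling is not available in general: the cells of a multi-coset pattern must be commensurate with the band edges, and otherwise the achievable density is strictly larger than the Landau rate $\mu(\mathcal{F}(v))$, approaching it only as the coset period is refined. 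Moreover, your reconstruction step silently assumes band isolation — that from the samples of the merged stream you can recover $\F_{FT}(\X)(\Theta_k,f)$ separately for $f \in J_k$ — which is exactly the aliasing-free property you flagged but did not establish. So as written, your scheme proves $D(\S)$ can be made close to $B/N$, not equal to it, which falls short of the theorem's exactness claim.

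The paper sidesteps this entirely by never merging: at each vertex it keeps the $K$ filtered streams $\P^k(\X)(v,\cdot)$ separate. Each such stream is supported on the \emph{single} interval $[f_k,f_{k+1}]$, so after modulation it can be sampled uniformly at exactly $(f_{k+1}-f_k)$ Hz and recovered with the kernel $\mathrm{sinc}((f_{k+1}-f_k)t)e^{-j\pi f_k t}$ — no multi-coset theory, no commensurability condition, and no aliasing analysis is needed. The sampling set $\S$ is then the union of the per-band sets, its density is exactly $\frac{1}{N}\sum_k |\I^k|(f_{k+1}-f_k) = B/N$, and $\S_\G = \bigcup_k \S_\G^k \subseteq \S_\G'$ gives $|\S_\G| \le B_\G$. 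If you replace your merged per-vertex grid by these separate per-band streams, the rest of your argument (including the stability count via finitely many bounded inversions $\U_\G(\Theta_k,\I_k)^{-1}$, consistent with \cref{df:stable}) goes through and coincides with the paper's proof.
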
 

\begin{proof}
    We construct a \emph{multi-band sampling} scheme to obtain such a sampling set and prove that the critical sampling of any JBL CTVGS is achievable. See details in \cref{subsec:multi_c}.
\end{proof}

\subsection{Multi-band sampling scheme for CTVGS}
\label{subsec:multi_c}

\subsubsection{Sampling}

In this subsection, we construct a multi-band sampling scheme that samples $\X$ at the lowest total sampling density $B$.

First, we select a suitable set of vertices $\S_\G'$ based on \cref{lem:rank}.

%<*tag:subband>
Then, we consider dividing $\F_\J(\X)$ into $K$ subbands. Let $\P^k := \P_{[f_k, f_{k+1}]}$ be a bandpass pre-filter \cite{foundations}, and $\P^k(\F_\J(\X))$ be the portion of $\F_\J(\X)$ in $[f_k, f_{k+1}]$. Each band $[f_k, f_{k+1}]$ should be the largest interval such that $\P^k(\F_\J(\X))(i, \cdot), i \in \I$ is either fully zero or fully nonzero. 

To show how we divide the bands, the $B_\G$ spectral functions of $\F_\J(\X)$ are shown in \cref{fig:stage} (a). The first and third spectral functions are nonzero in $[f_1, f_2]$, while the second function is zero. Thus the first subband is $[f_1, f_2]$, as shown in \cref{fig:stage} (b). In the remaining part (shown in \cref{fig:stage} (c)), the second and third spectral functions are non-zero in $[f_2, f_3]$. Then \cref{fig:stage} (c) is divided into two parts, \cref{fig:stage} (d) ($[f_2, f_3]$) and \cref{fig:stage} (e) ($[f_3, f_4]$).
%</tag:subband>

\begin{figure} [htbp] 
	\centering
	\includegraphics[width=0.8\columnwidth]{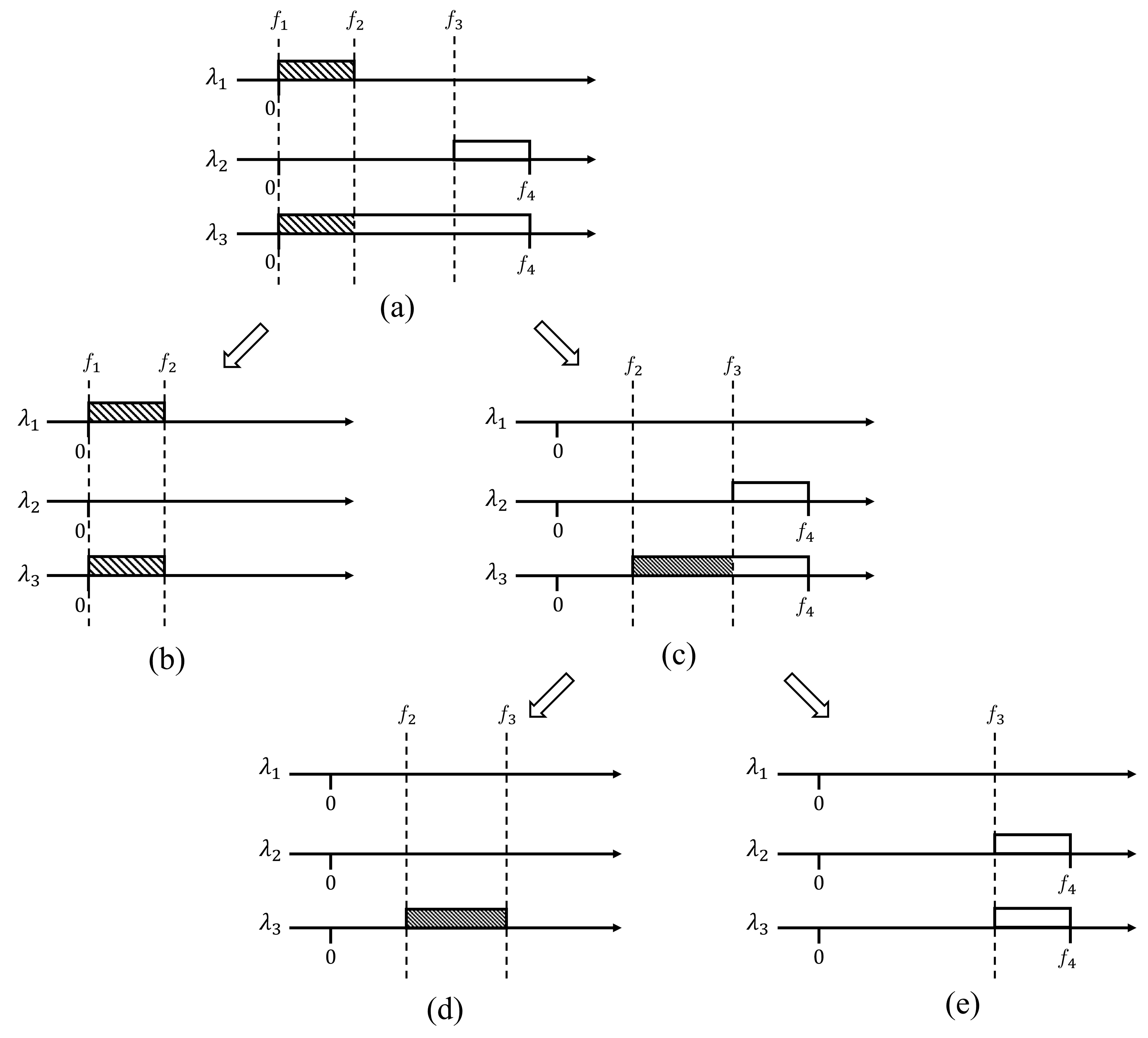}
	\caption{Schematic diagram of dividing subbands. (a) $\F_\J(\X)$ is divided into three parts. (b) Schematic diagram of $\P^1(\F_\J(\X))$ in the interval $[f_1, f_2]$. (c) Schematic diagram of $\F_\J(\X) - \P^1(\F_\J(\X))$, which is divided into two parts. (d) Schematic diagram of $\P^2(\F_\J(\X))$ in the interval $[f_2, f_3]$. (e) Schematic diagram of $\P^3(\F_\J(\X))$ in the interval $[f_3, f_4]$. } 
	\label{fig:stage}
\end{figure}

Within each band, we further discuss how to sample $\P^k(\X)$ based on $\P^k(\F_\J(\X))$.

\emph{Within band $k$}: We first consider sampling on $\P^k(\F_\G(\X))$, where $\P^k(\F_\G(\X)) = \F^{-1}(\P^k(\F_\J(\X)))$. Each $\P^k(\F_\G(\X))(i, \cdot)$ can be sampled at a rate of $(f_{k+1} - f_k)$ Hz, and the corresponding sampling set is $\S_\T^k$.

Now we analyze how to get the sampling set on $\P^k(\X)$ instead of sampling on $\P^k(\F_\G(\X))$. Since the projection operation is linear, the order of the IGFT and projection operation can be exchanged. The following relationship is established
\begin{equation*}
    \U_\G \P^k(\F_\G(\X))= \P^k(\U_\G \F_\G(\X))= \P^k(\X) .
\end{equation*}
Therefore, $\P^k(\X)$ also can be sampled on $\S_\T^k$. Similarly, the sampling operation is also linear, thus
\begin{equation*}
    \U_\G \P^k(\F_\G(\X)) g(f_{k+1} - f_k) = \P^k(\X) g(f_{k+1} - f_k).
\end{equation*}

Then we need to determine which vertices of $\P^k(\X)$ to be sampled. Let $\I_f^k$ be the index set of nonzero elements of $\F_\J(\P^k(\X))$ at frequency $f$, and $\I^k = \cup_f \I_f^k$. 

\begin{corollary}
\label{cor:rank_k}
    Under the assumption of \cref{lem:rank}, within band $k$, there exist an $\S_\G^k \subseteq \S_\G'$ with $|\S_\G^k| = |\I^k|$ satisfying
        \begin{equation*}
        {\rm rank} (\U_\G(\S_\G^k, \I^k)) = |\I^k|,
    \end{equation*}
    such that $\P^k(\X)$ can be reconstructed.
\end{corollary}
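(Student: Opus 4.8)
The plan is to read \cref{cor:rank_k} as nothing more than \cref{lem:rank} restricted to the $k$-th subband, so that the bulk of the work is already done once one records the inclusion $\I^k \subseteq \I$. Indeed, the bandpass pre-filter $\P^k$ retains only the content of $\F_\J(\X)$ on $[f_k, f_{k+1}]$, so the only spectral functions of $\F_\J(\P^k(\X))$ that survive are those whose index lies in $\I^k$; since every such index already belongs to $\I = \cup_{f \in \mathcal{F}} \I_f$, we obtain $\I^k \subseteq \I$, and $\P^k(\X)$ lies in a JBL space whose active vertex-indices are precisely $\I^k$.

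First I would invoke \cref{lem:rank}, which supplies $\S_\G'$ with ${\rm rank}(\U_\G(\S_\G', \I)) = |\I| = B_\G$; equivalently, the columns of $\U_\G(\S_\G', \I)$ are linearly independent. Deleting the columns whose index lies in $\I \setminus \I^k$ cannot create a linear dependence, so the column-submatrix $\U_\G(\S_\G', \I^k)$ retains full column rank $|\I^k|$. I would then apply the standard fact that a matrix of full column rank $r$ possesses $r$ linearly independent rows: selecting such rows singles out a set $\S_\G^k \subseteq \S_\G'$ with $|\S_\G^k| = |\I^k|$ for which the square matrix $\U_\G(\S_\G^k, \I^k)$ has rank $|\I^k|$ and is therefore invertible. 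This already yields the rank identity asserted in the statement.

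It remains to argue that $\P^k(\X)$ is reconstructible from its samples on $\S_\G^k$. Restricting \cref{eq:channel_f} to band $k$ and to the active indices gives, for each $f \in [f_k, f_{k+1}]$,
\begin{equation*}
    \F_{FT}(\P^k(\X))(\S_\G^k, f) = \U_\G(\S_\G^k, \I^k)\, \F_\J(\P^k(\X))(\I^k, f).
\end{equation*}
Since $\U_\G(\S_\G^k, \I^k)$ is invertible, I would solve $\F_\J(\P^k(\X))(\I^k, f) = [\U_\G(\S_\G^k, \I^k)]^{-1} \F_{FT}(\P^k(\X))(\S_\G^k, f)$ pointwise in $f$ and then recover $\P^k(\X)$ by the IJFT. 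Because each $\P^k(\F_\G(\X))(i, \cdot)$ is bandlimited to an interval of width $(f_{k+1}-f_k)$, the time-domain samples on $\S_\T^k$ determine these spectra by the classical single-channel argument, so sampling on $\S_\G^k \times \S_\T^k$ suffices.

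I do not anticipate a genuine obstacle: the corollary is essentially \cref{lem:rank} applied band-by-band. The only points that warrant care are the inclusion $\I^k \subseteq \I$, which guarantees that the matrix in question is a column-subset of one already known to have full column rank, and the passage from full column rank of the tall matrix $\U_\G(\S_\G', \I^k)$ to the existence of an invertible $|\I^k| \times |\I^k|$ block, both of which are routine linear algebra.
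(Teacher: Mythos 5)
Your proposal is correct and follows essentially the same route as the paper's proof: both rest on the inclusion $\I^k \subseteq \I$ together with the full rank of $\U_\G(\S_\G', \I)$ guaranteed by \cref{lem:rank}, from which a row subselection $\S_\G^k \subseteq \S_\G'$ with ${\rm rank}(\U_\G(\S_\G^k, \I^k)) = |\I^k|$ must exist. You merely spell out the routine linear algebra (column deletion preserves independence, row rank equals column rank) and the pointwise-in-$f$ inversion step that the paper leaves implicit.
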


\begin{proof}
    We know that $\I^k \subseteq \I$ and matrix $\mathbf{\Psi}_\G' \U_\G(\cdot, \I)$ is full rank. Then there must exist a $\hat{\mathbf{\Psi}}_\G^k$ such that 
\begin{equation*}
    {\rm rank}(\hat{\mathbf{\Psi}}_\G^k \mathbf{\Psi}_\G' \U_\G(\cdot, \I^k) ) = {\rm rank}(\mathbf{\Psi}_\G^k \U_\G(\cdot, \I^k) ) = |\I^k|.
\end{equation*}

    So we can always find such an $\S_\G^k \subseteq \S_\G'$, where $\mathbf{\Psi}_\G^k$ is the corresponding sampling matrix of $\S_\G^k$.

\end{proof}

As a result, we have the multi-band sampled signal at the $k$-th band:
\begin{equation*}
    \mathbf{\Psi}_\G^k \P^k(\X) g(\S_\T^k)
\end{equation*}
denoted as $\tilde{\X}^k(\S_\G^k, \S_\T^k)$. In this band, the sampling rate of each $\P^k(\X)(v, \cdot), v \in \S_\G^k$ is $(f_{k+1} - f_k)$ Hz. The sampling set on $\P^k(\X)$ is recorded as $\S^k = \{ \S_\G^k \times \S_\T^k \}$. 

The process of the multi-band sampling scheme is shown in \cref{fig:sampling_c}. 

\begin{figure} [htbp] 
	\centering
	\includegraphics[width=0.5\columnwidth]{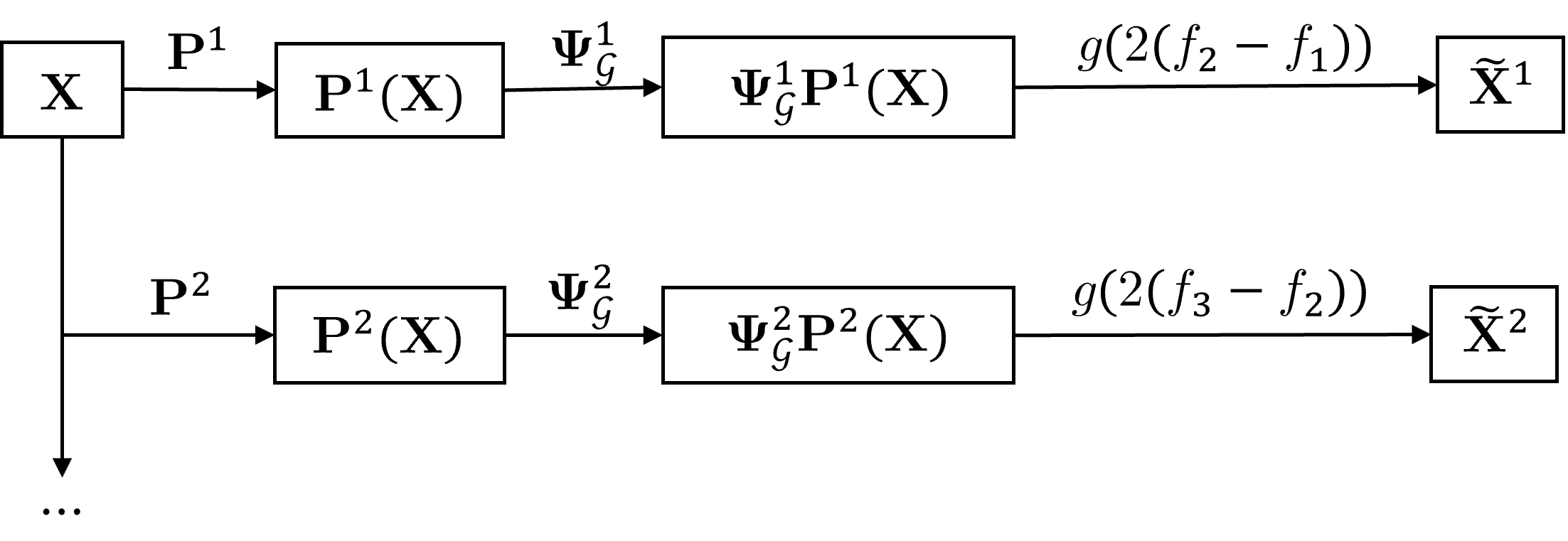}
	\caption{Flow chart of the multi-band sampling scheme for a CTVGS $\X$} 
	\label{fig:sampling_c}
\end{figure}

To more clearly state the idea of the multi-band sampling scheme, we summarize the above process into \cref{alg:multi}.

\begin{algorithm}[htbp] 
\caption{Multi-band sampling scheme}
\label{alg:multi}
\begin{algorithmic}[1]

\REQUIRE $\X$, $\U_\G$

\STATE Calculate $\F_\J(\X)$ according to \cref{eq:JFT_c} and obtain $\I$ according to \cref{df:jotbth_c}.

\STATE Select a suitable set $\S_\G'$, which is the index set of maximal linearly independent rows of $\U_\G(\cdot, \I)$.

\STATE Let $\P^k := \P_{[f_k, f_{k+1}]}$ and divide $\F_\J(\X)$ into $K$ subbands, ensuring that $\P^k(\F_\J(\X))(i, \cdot), i \in \I$ is either fully zero or fully nonzero within each band.

\FOR{$k = 1$ to $K$}
    \STATE  Let $\I_f^k$ be the index set of nonzero elements of $\F_\J(\P^k(\X))$ at frequency $f$, and $\I^k = \cup_f \I_f^k$.
    \STATE Get $\S_\G^k \subseteq \S_\G'$ such that ${\rm rank}(\mathbf{\Psi}_\G^k \U_\G(\cdot, \I^k) ) = |\I^k|$.
    \STATE Get $\tilde{\X}^k(\S_\G^k, \S_\T^k)$ by sampling $\P^k(\X)(v, \cdot), v \in \S_\G^k$ at $(f_{k+1} - f_k)$ Hz (downsampling at $(f_{k+1} - f_k)/f_s$ for DTVGS and FTVGS).
\ENDFOR

\ENSURE Samples $\{ \tilde{\X}^1(\S_\G^1, \S_\T^1), \dots, \tilde{\X}^K(\S_\G^K, \S_\T^K) \}$

\end{algorithmic}
\end{algorithm}

\emph{Critical sampling set}: The sampling set $\S$ for the entire signal $\X$ is the set of the sampling sets of all the bands. Thus the total sampling density of $\X$ is
\begin{equation*}
    D(\S) = \frac{1}{N} \left( \sum_k |\I^k| (f_{k+1}-f_k) \right) = \frac{B}{N}.
\end{equation*}

In addition, we have $\S_\G^k \subseteq \S_\G'$ for all $k = 1, \dots, K$. Then $\S_\G = \cup_k \S_\G^k$, and $|\S_\G| \le |\S_\G'| = B_\G$. Therefore, \cref{thm:achiv_c} holds.

%% recover...
\subsubsection{Recovery}

Each $\tilde{\X}^k(\S_\G^k, \S_\T^k), k = 1, \dots, K$ can be interpolated by $\text{sinc}((f_{k+1}-f_k)t)e^{-j\pi f_kt}$ to get $\mathbf{\Psi}_\G^k \P^k(\X)$. Then $\mathbf{\Psi}_\G^k \P^k(\X)$ is pre-multiplied by $\mathbf{\Phi}_\G^k = \U_\G^k(\cdot, \I^k)((\U_M^k)^H \U_M^k)^{-1}(\U_M^k)^H$, where $\U_M^k = \mathbf{\Psi}_\G^k \U_\G^k(\cdot, \I^k)$, to recover the signal in band $[f_k, f_{k+1}]$, \textit{i.e.}, $\hat{\X}^k = \P^k(\X)$.

We add all the recovered sub-band signals together to get the recovered original signal  
\begin{equation*}
    \hat{\X}= \sum_k \hat{\X}^k =\X.
\end{equation*}
The flow chart of CTVGS recovery is shown in \cref{fig:recover_c}. From the perspective of the spectrum, each recovered signal recovers a part of the spectrum of $\X$. 

\begin{figure} [htbp] 
	\centering
	\includegraphics[width=0.5\columnwidth]{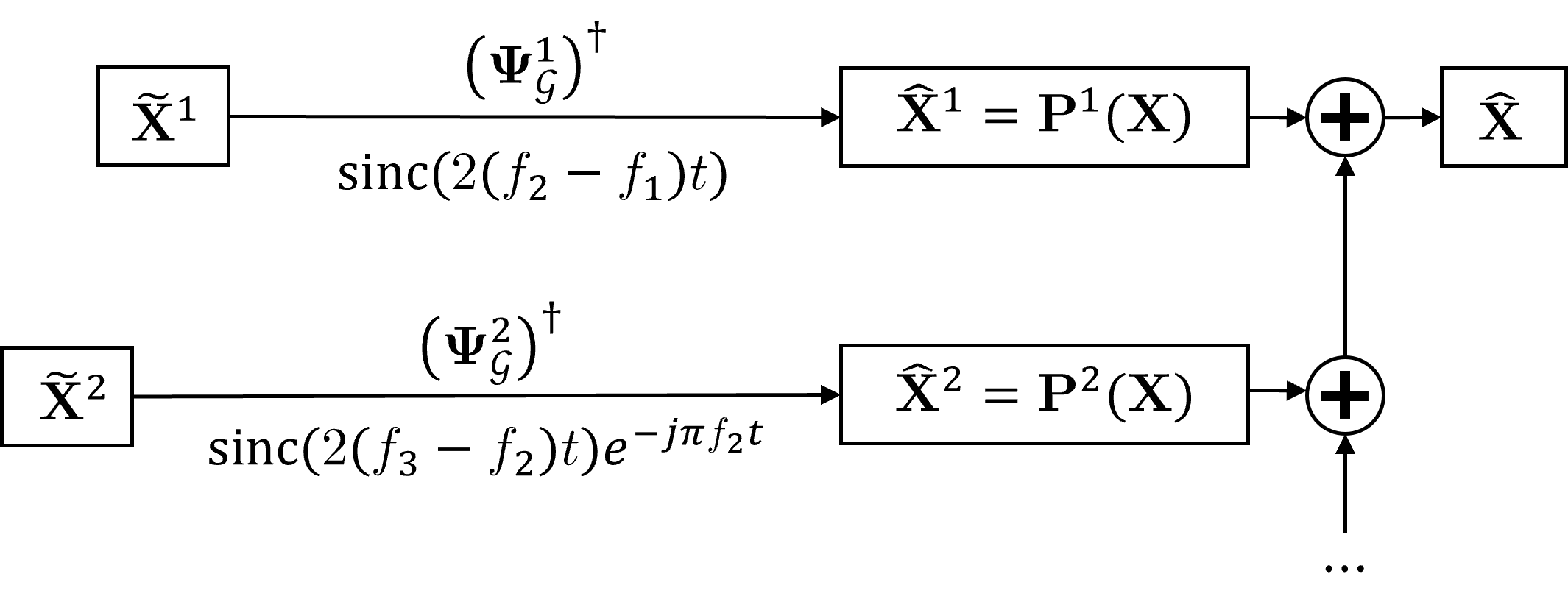}
	\caption{Recovery flow chart of a CTVGS $\X$} 
	\label{fig:recover_c}
\end{figure}

\section{Critical Sampling and Reconstruction of DTVGS} 
\label{sec:cri_samp_d}

\subsection{Critical sampling of DTVGS}
\label{subsec:cri_d}

Assuming that $f_s$ is much higher than the Nyquist rate of the signal on each vertex, the spectra of DTVGS do not overlap. We reduce the redundancy of the original DTVGS by downsampling, which is somewhat different from sampling on a CTVGS.

Downsampling is performed based on the spectral analysis of DTVGS. In \cref{fig:TVGS} (g) and (h), it is observed that both $\F_{FT}(\X)(v, \cdot)$ and $\F_\J(\X)(i, \cdot)$ are continuous periodic functions. Therefore, our analysis focuses on a single period of these functions. Similar to CTVGS, there are projection bandwidths $B_\G, B_\T$, and joint bandwidth $B$ of DTVGS. The relationship between the projection bandwidths and the joint bandwidth is $B \le B_\G B_\T$.

\begin{definition}
\label{df:JBLD}
    Let $\BL_D(\{ \mathcal{F}_1, \dots, \mathcal{F}_{B_\G} \}) = \{ \X \in \ell^2(\V_\G \times \T): \F_\J(\X)(i, f) = 0, \forall f \notin \mathcal{F}_i \}$ be the space of JBL DTVGS, in which $\F_\J(\X)(i, f), f \in \mathcal{F}_i$ can be assigned arbitrary values.
\end{definition}

Then $\X \in \BL_D(\{ \mathcal{F}_1, \dots, \mathcal{F}_{B_\G} \})$ admits a low-dimensional representation as
\begin{equation}
    \X(v,n) = \frac{1}{2\pi} \sum_{i \in \I} ( \mathbf{u}_i(v) \int_{\mathcal{F}} \F_\J(\X)(i, e^{j \omega}) e^{j\omega n} d\omega ), v \in \V_\G, n \in \T .
\end{equation}

The discretization of the signal in the time domain does not affect the correlation between vertices, so \cref{lem:rank} remains valid for DTVGS.

Then we can sample $\X \in \BL_D(\{ \mathcal{F}_1, \dots, \mathcal{F}_{B_\G} \})$ separately. In the vertex domain, we obtain the sampling set $\S_\G'$ according to \cref{lem:rank}. In the time domain, each $\X(v, \cdot), v \in \S_\G'$ can be downsampled with a ratio of at least $B_\T$. Downsampling a sequence with ratio $B_\T = \alpha / \beta, \alpha<\beta (\alpha, \beta \in \mathbb{N})$ is equivalent to upsampling the sequence by $\alpha$ followed by downsampling it by $\beta$, denoted as $(\downarrow \frac{\alpha}{\beta} )$ \cite{r_conv}. Thus, the sampled DTVGS will be
\begin{equation*}
    \mathbf{\Psi}_\G' \X (\downarrow \frac{\alpha}{\beta} )
\end{equation*}
denoted as $\X(\S_\G', \S_\T') $.
Apply the separate sampling scheme\cite{ji2019hilbert} to $\X$ with a total sampling ratio of $R_D(\S_\G' \times \S_\T') = (B_\G B_\T)/N $. 

For JBL DTVGS, the relationship $B \le B_\G B_\T$ still holds, indicating that we aim to sample $\X$ with a total sampling ratio of $B/N$. Once again, the vertices to be sampled need to be constrained as well. Upon satisfying \cref{lem:rank}, we sample $\X(\S_\G', \cdot)$ instead of $\X$. Then we prove a theorem that provides lower bounds on the sampling ratios for all $\Theta \subseteq \S_\G'$, which is not given in \cite{MIMO}.

\begin{theorem}
\label{thm:subset_d}
%<*tag:ratio2>
    For a signal $\X(\S_\G', \cdot) \in \BL_D(\{ \mathcal{F}_1, \dots, \mathcal{F}_{B_\G} \})$ with joint bandwidth $B$, suppose that $\Theta \subseteq \S_\G'$, and $\S_\Theta = \{ (v, n_{vz}): v \in \Theta \} \subseteq \S $ is a stable sampling set with $R_D(\S_\Theta)$. Then
    \begin{equation*}
        R_D(\S_\Theta) \ge \frac{1}{|\Theta|} \left(B - \int_\mathcal{F} {\rm rank}(\U_\G(\Theta^c, \I_f))df \right),
    \end{equation*}
    where $\Theta^c$ is the complement of $\Theta$ in $\S_\G'$. 
%</tag:ratio2>
\end{theorem}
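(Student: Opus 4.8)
The plan is to reduce \cref{thm:subset_d} to its continuous-time counterpart \cref{thm:subset_c}, exploiting that a DTVGS is a clock-synchronized discretization of a CTVGS. Since $f_s$ is assumed above the Nyquist rate so that the spectral replicas do not overlap, every $\X \in \BL_D(\{\mathcal{F}_1,\dots,\mathcal{F}_{B_\G}\})$ is the sample sequence of a unique bandlimited CTVGS $\tilde{\X}$ with $\X(v,n) = \tilde{\X}(v,nT_s)$. First I would make this correspondence precise: bandlimited interpolation $\X\mapsto\tilde{\X}$ is a bounded linear bijection onto the matching $\BL_C$ space whose inverse (pointwise sampling at rate $f_s$) is also bounded, so that $\|\X\|_{\ell^2}$ and $\|\tilde{\X}\|_{L^2}$ agree up to a fixed multiplicative constant.

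Next I would transfer the two ingredients that \cref{thm:subset_c} consumes. The vertex-domain structure carries over verbatim under the frequency rescaling $\omega = \Omega/f_s$: the GFT acts only across vertices and is untouched by time discretization, so the channel decompositions \cref{eq:channel,eq:channel_f} hold in the $\omega$-domain, and the index sets $\I_f$, the matrices $\U_\G(\Theta^c,\I_f)$, and their ranks coincide for $\X$ and $\tilde{\X}$ at corresponding frequencies. For stability, a sample of $\X$ at index $n_{vz}$ equals the value of $\tilde{\X}$ at time $t_{vz} = n_{vz}/f_s$; hence if $\S_\Theta$ is a stable sampling set for $\X$ in the sense of \cref{df:stable}, then $\tilde{\S}_\Theta = \{(v, n_{vz}/f_s): v\in\Theta\}$ is a stable sampling set for $\tilde{\X}$, with frame bounds rescaled by the same constant.

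I would then apply \cref{thm:subset_c} to $\tilde{\X}$. Writing the total (un-normalized) count density $\rho(\S_\Theta) := \liminf_{t\to\infty}\frac{|\S_\Theta \cap \{\V_\G\times[-t,t]\}|}{2t}$, \cref{thm:subset_c} (through the MIMO identity it invokes) yields $\rho(\tilde{\S}_\Theta) \ge \tilde{B} - \int_{\tilde{\mathcal{F}}}{\rm rank}(\U_\G(\Theta^c,\I_f))\,df$, where $\tilde B$ and $\tilde{\mathcal{F}}$ are the joint bandwidth and spectral support of $\tilde{\X}$. The conversion is then purely a change of window scale: since a continuous window $[-t,t]$ contains exactly the integer indices in $[-tf_s, tf_s]$ and each discrete sample is one continuous sample, the continuous and discrete count densities satisfy $\rho(\tilde{\S}_\Theta) = f_s\,\rho_D(\S_\Theta)$, where $\rho_D$ denotes the discrete count density. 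Because the joint bandwidth and the rank integral scale by the same factor under $\omega = \Omega/f_s$, the right-hand side also equals $f_s\bigl(B - \int_{\mathcal{F}}{\rm rank}\,df\bigr)$, so the factor $f_s$ cancels and $\rho_D(\S_\Theta) \ge B - \int_{\mathcal{F}}{\rm rank}(\U_\G(\Theta^c,\I_f))\,df$. Dividing by $|\Theta|$ and using $R_D(\S_\Theta) = \rho_D(\S_\Theta)/|\Theta|$ delivers the claimed bound; the change of normalizer from $N$ in \cref{thm:subset_c} to $|\Theta|$ here is exactly this last division.

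The main obstacle I expect is bookkeeping rather than conceptual. One must fix a single frequency normalization so that the scaling of the bandwidth against the count-density conversion cancels exactly, with no stray factor of $2\pi$ surviving into the final bound; this is precisely where the discrete joint bandwidth $B$ must be read off in units commensurate with the Landau density of the continuous interpolant. One must also verify that the rational resampling factor $B_\T = \alpha/\beta$ underlying DTVGS downsampling does not disturb the stable-sampling equivalence, i.e.\ that restricting $\tilde{\X}$ to the grid times $n_{vz}/f_s$ still produces a set to which the lower bound of \cref{thm:subset_c} applies. Establishing the two-sided norm equivalence between $\ell^2$ and $L^2$ on the bandlimited spaces, and hence the faithful transfer of the frame constants of \cref{df:stable}, is the step that most deserves care.
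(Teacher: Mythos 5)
Your proposal is correct and follows essentially the same route as the paper: both reduce the discrete statement to \cref{thm:subset_c} by sinc-interpolating the DTVGS into a bandlimited CTVGS, transferring the stable-sampling property and spectral structure (the paper via the Poisson summation formula, you via the rescaling $\omega = \Omega/f_s$), and converting count densities so that the factor $f_s$ cancels. The only cosmetic differences are that the paper phrases the final step as a contradiction while you apply the continuous bound directly, and that your explicit treatment of the $\ell^2$--$L^2$ norm equivalence and frame-bound transfer spells out what the paper leaves implicit.
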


\begin{proof}
    We can construct a CTVGS $\X_o$ through $\text{sinc}$ function interpolation on $\X(v,n)=\X(v,n T_s), v \in \S_\G'$:
    \begin{equation*}
        \X_o(v,t) = \sum^{+\infty}_{n=-\infty} \X(v,nT_s) \text{sinc}((t-nT_s)\frac{\pi}{T_s}), 
    \end{equation*}
    which is naturally a bandlimited signal. The FT spectrum of $\X_o$ in terms of frequency $f$ is $\F_{FT}(\X_o)(v,f)$.
    
    According to the Poisson summation formula, $\F_{DT}(\X)(\S_\G', \cdot)$ is a periodic replica of $\F_{FT}(\X_o)(v,f)$:
    \begin{equation*}
        \F_{DT}(\X)(v,f) = \sum^{+\infty}_{n=-\infty} \X(v,n) e^{-j2\pi fT_sn} = \frac{1}{T_s} \sum^{+\infty}_{c=-\infty} \F_{FT}(\X_o)(v,f-\frac{c}{T_s}).
    \end{equation*}
    In a single period, we obtain
    \begin{equation}
    \label{eq:surjc}
        \F_{DT}(\X)(v,f) \! = \! \frac{1}{T_s} \F_{FT}(\X_o)(v,f), f \! \in \! [-\frac{f_s}{2}, \frac{f_s}{2}], v \in \S_\G',
    \end{equation}
    which is a surjection. 
    
    Assume that $\X(\S_\G', \cdot)$ can be stably reconstructed with $R_D(\S_\Theta') < \frac{1}{|\Theta|} \left(B - \int_\mathcal{F} {\rm rank}(\U_\G(\Theta^c, \I_f))df \right).$
    From \cref{eq:surjc}, we conclude that there is a method that ensures $\X_o$ is stably reconstructed with \\ $D(\S_\Theta') < \frac{1}{N} \left( B - \int_\mathcal{F} {\rm rank}(\U_\G(\Theta^c, \I_f))df \right),$
    which is impossible according to \cref{thm:subset_c}. The assumption does not hold.
\end{proof}

Since $\U_\G(\Theta^c, \I_f)$ is a real matrix, 
\begin{equation*}
    \mathop{{\rm ess\ inf}}\limits_{f \in \mathcal{F}} \sigma_{\text{min}} ( \U_\G(\Theta^c, \I_f) ) > 0
\end{equation*}
holds for all $f$. Particularly, we have the following corollaries.

\begin{corollary}
\label{cor:RDS}
    Under the assumption of \cref{thm:subset_d}, when $\Theta = \S_\G'$, we have
    \begin{equation*}
        R_D(\S_\Theta) = R_D(\S) \ge \frac{B}{N}.
    \end{equation*}
\end{corollary}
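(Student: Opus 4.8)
The plan is to obtain \cref{cor:RDS} as the boundary case $\Theta = \S_\G'$ of \cref{thm:subset_d}, in direct analogy with how \cref{cor:DS_c} specializes \cref{thm:subset_c} in the continuous setting. The whole content of the corollary is the observation that enlarging $\Theta$ to all of the sampled vertices $\S_\G'$ forces its relative complement $\Theta^c = \S_\G' \setminus \Theta$ to be empty, which annihilates the subtracted correction term in the lower bound and leaves only the joint bandwidth $B$.

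First I would substitute $\Theta = \S_\G'$ into the inequality of \cref{thm:subset_d}. Since $\Theta^c = \emptyset$, the submatrix $\U_\G(\Theta^c, \I_f)$ has no rows for every frequency $f \in \mathcal{F}$; an empty matrix has rank $0$, so ${\rm rank}(\U_\G(\Theta^c, \I_f)) = 0$ for almost every $f$ and hence $\int_\mathcal{F} {\rm rank}(\U_\G(\Theta^c, \I_f))\,df = 0$. The theorem then collapses to $R_D(\S_\Theta) \ge B/|\S_\G'|$, and invoking $|\S_\G'| = B_\G$ from \cref{lem:rank} gives $R_D(\S_\Theta) \ge B/B_\G$.

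It then remains to read off the stated form. When $\Theta = \S_\G'$ the partial sampling set $\S_\Theta$ is exactly the full sampling set $\S$, so the two sampling ratios coincide, $R_D(\S_\Theta) = R_D(\S)$; because $\S_\G'$ was chosen to satisfy \cref{lem:rank}, this is genuinely the total sampling ratio of $\X$. Finally, $\S_\G'$ contains $B_\G \le N$ vertices, so $B/B_\G \ge B/N$, and chaining the inequalities yields $R_D(\S) \ge B/N$.

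The one step that requires genuine care, and which I expect to be the main obstacle, is the normalization bookkeeping. \cref{thm:subset_d} writes its bound with the per-subset factor $1/|\Theta|$, whereas the total ratio $R_D(\S)$ is normalized by $N$; I would therefore make the passage between the $|\Theta|$-normalized bound for $\S_\Theta$ and the $N$-normalized total ratio fully explicit, rather than silently assuming that the two normalizations agree. Everything else --- the vanishing of the rank integral and the inclusion $B_\G \le N$ --- is immediate.
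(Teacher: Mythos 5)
Your proof is correct and takes the same route the paper intends: \cref{cor:RDS} is treated there as an immediate specialization of \cref{thm:subset_d}, where $\Theta = \S_\G'$ makes $\Theta^c$ empty so the rank integral vanishes, $\S_\Theta$ coincides with $\S$, and $|\S_\G'| = B_\G \le N$ turns the resulting bound $B/B_\G$ into the stated $B/N$. Your explicit attention to the mismatch between the $1/|\Theta|$ normalization in \cref{thm:subset_d} and the $1/N$ normalization in the definition of $R_D(\S)$ is warranted, since the paper passes over this point silently and your bridging inequality $B/B_\G \ge B/N$ is precisely what makes the conclusion hold under either reading.
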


In other words, when the vertices in $\S_\G'^c$ no longer provide new information for stable reconfiguration of $\X$, the total sampling ratio must be no less than $B/N$. If we want to reduce the sampling ratio of $\X(v, \cdot), v \in \S_\G'$, the lower bound on the sampling ratio of each sampled vertex is provided in \cref{cor:RDSv}. 
\begin{corollary}
\label{cor:RDSv}
    Under the assumption of \cref{thm:subset_d}, when $\Theta = \{ v \}$, the sampling set on vertex $v$ is $\S_v$, and $\Theta^c = \S_\G' \backslash v$. Then we have
    \begin{equation*}
        R_D(\S_v) \ge B - \int_\mathcal{F} {\rm rank}(\U_\G(\S_\G' \backslash v, \I_f))df.
    \end{equation*}
\end{corollary}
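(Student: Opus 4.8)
The plan is to recognize that \cref{cor:RDSv} is precisely the single-vertex instance $\Theta=\{v\}$ of \cref{thm:subset_d}, in exact analogy with how \cref{cor:DSv} is obtained from \cref{thm:subset_c} in the continuous-time setting. All the analytic work has already been carried out in \cref{thm:subset_d}, whose proof reduces the DTVGS problem to the CTVGS bound of \cref{thm:subset_c} through the $\text{sinc}$-interpolation and Poisson-summation surjection recorded in \cref{eq:surjc}. Consequently, the only task is a specialization of parameters.

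First I would confirm that $\Theta=\{v\}$ is an admissible choice in \cref{thm:subset_d}: we require $v\in\S_\G'$, that the underlying signal $\X(\S_\G',\cdot)$ lie in $\BL_D(\{\mathcal{F}_1,\dots,\mathcal{F}_{B_\G}\})$ with joint bandwidth $B$, and that $\S_\Theta$ be a stable sampling set. With $\Theta=\{v\}$ the associated set is $\S_\Theta=\{(v,n_{vz}):v\in\{v\}\}=\S_v$, so the stability hypothesis on $\S_\Theta$ is exactly the stated stability of the per-vertex sampling set $\S_v$. These are the standing assumptions, so the hypotheses of \cref{thm:subset_d} hold verbatim.

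Next I would substitute. For $\Theta=\{v\}$ we have $|\Theta|=1$, so the prefactor $\tfrac{1}{|\Theta|}$ equals $1$; the complement is $\Theta^c=\S_\G'\backslash v$; and since $\S_\Theta=\S_v$ we have $R_D(\S_\Theta)=R_D(\S_v)$. Inserting these into the conclusion of \cref{thm:subset_d},
\[
    R_D(\S_\Theta)\ge\frac{1}{|\Theta|}\left(B-\int_\mathcal{F}{\rm rank}(\U_\G(\Theta^c,\I_f))\,df\right),
\]
collapses it to
\[
    R_D(\S_v)\ge B-\int_\mathcal{F}{\rm rank}(\U_\G(\S_\G'\backslash v,\I_f))\,df,
\]
which is exactly the asserted bound.

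There is no genuine obstacle here, since the inequality is inherited directly from \cref{thm:subset_d}; the hardest part is merely bookkeeping to check that the single-vertex $\Theta$ meets every hypothesis. The only remark worth making is interpretive: because $\U_\G(\S_\G'\backslash v,\I_f)$ is a fixed real matrix depending measurably on $f$, the integral is well-defined, and the right-hand side gives the minimal ratio at which vertex $v$ may be sampled once all the remaining vertices $\S_\G'\backslash v$ are retained. As the text already notes for the continuous case in \cref{cor:DSv}, these per-vertex floors cannot be achieved for all $v\in\S_\G'$ simultaneously, since \cref{cor:RDS} forces the total ratio to remain at least $B/N$.
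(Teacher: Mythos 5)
Your proof is correct and matches the paper's treatment: the paper states \cref{cor:RDSv} as an immediate specialization of \cref{thm:subset_d} with $\Theta=\{v\}$, $|\Theta|=1$, $\Theta^c=\S_\G'\backslash v$, and $\S_\Theta=\S_v$, exactly as you do. Your closing remark that the per-vertex lower bounds cannot all be attained simultaneously also mirrors the paper's own commentary following the corollary.
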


Note that it is generally not possible to sample all $\X(v, \cdot), v \in \S_\G'$ at their respective lowest rates simultaneously.

In practice, it is more feasible to handle DTVGS than CTVGS. Each component in $\F_\J(\X)$ for DTVGS is a periodic replica of $\F_\J(\X_o)$, so $\F_\J(\X_o)$ and $\F_\J(\X)$ in a single period are both continuous functions supported on a measurable set $\mathcal{F}$. Thus we derive \cref{thm:subset_d} and its corollaries, which provide lower bounds on the sampling ratios of DTVGS. These lower bounds can guide the sampling of CTVGS by multiplying them by $f_s$ in practical experiments.

In addition, the critical sampling of JBL DTVGS can be defined as follows.
\begin{definition}
\label{df:cri_d}
    A stable sampling set $\S$ of $\X \in \BL_D(\{ \mathcal{F}_1, \dots, \mathcal{F}_{B_\G} \})$ is a critical sampling set when $ R_D(\S) = B/N $. 
\end{definition}

Concentrating on whether the ratio satisfying \cref{cor:RDS} is achievable and how to achieve it, we give the following theorem.

\begin{theorem}
\label{thm:achiv_d}
    For any $\X \in \BL_D(\{ \mathcal{F}_1, \dots, \mathcal{F}_{B_\G} \})$, there must be a sampling set $\S$ such that $ R_D(\S) = B/N $ and $\S_\G=\{ v: (v, t_{vz}) \in \S \}$ satisfies $|\S_\G| \le B_\G$. Thus the critical sampling is achievable.
\end{theorem}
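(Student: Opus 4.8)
The plan is to prove achievability constructively, mirroring the proof of \cref{thm:achiv_c}: I would run the multi-band sampling scheme of \cref{subsec:multi_c} (\cref{alg:multi}) on $\X$, with the single modification that the in-band time sampling is realized by rational downsampling rather than by continuous-time sampling. Because the DTVGS spectrum $\F_\J(\X)(i,\cdot)$ is $2\pi$-periodic and, under the standing assumption that $f_s$ exceeds the Nyquist rate, its support within one period is confined to $\mathcal{F}$ with no overlap between replicas, it suffices to carry out the entire construction on a single period and then extend periodically.

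First I would apply \cref{lem:rank}, which the text has already noted remains valid for DTVGS, to select $\S_\G'\subseteq\V_\G$ with $|\S_\G'|=B_\G$ and $\mathrm{rank}(\U_\G(\S_\G',\I))=B_\G$, so that recovering $\X(\S_\G',\cdot)$ recovers $\X$. Next I would partition the single-period spectrum into $K$ maximal subbands $[f_k,f_{k+1}]$ on which every $\P^k(\F_\J(\X))(i,\cdot)$, $i\in\I$, is either fully zero or fully nonzero, exactly as in the CTVGS case. Within band $k$ the vertex-domain argument is unchanged, since the GFT is identical for DTVGS, so \cref{cor:rank_k} yields $\S_\G^k\subseteq\S_\G'$ with $|\S_\G^k|=|\I^k|$ and full rank. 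The only new ingredient is the time sampling: I would downsample each $\P^k(\X)(v,\cdot)$, $v\in\S_\G^k$, by the rational factor $(f_{k+1}-f_k)/f_s$, implemented as in \cref{subsec:cri_d} by upsampling by $\alpha$ followed by downsampling by $\beta$.

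Accounting for the samples, band $k$ retains $|\I^k|$ vertices each downsampled at ratio $(f_{k+1}-f_k)/f_s$, so $R_D(\S)=\frac{1}{N}\sum_k |\I^k|\,\frac{f_{k+1}-f_k}{f_s}=\frac{B}{N}$, which meets the lower bound of \cref{cor:RDS}, while $\S_\G=\bigcup_k\S_\G^k\subseteq\S_\G'$ gives $|\S_\G|\le B_\G$; these two facts are exactly the claim. Recovery would then follow the CTVGS recovery step band-by-band: rational-interpolate each band back to the native rate, left-multiply by $\mathbf{\Phi}_\G^k$ built from $\U_\G(\cdot,\I^k)$, and sum over $k$ to reconstruct $\X(\S_\G',\cdot)$ and hence $\X$.

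The step I expect to require the most care is justifying that in-band rational downsampling preserves recoverability, i.e.\ that downsampling by $\beta$ does not alias the retained band onto itself. For CTVGS this is automatic from bandpass sampling, but for DTVGS I must argue that each isolated band $[f_k,f_{k+1}]$, once shifted to baseband, occupies a small enough fraction of the Nyquist interval that the $\beta$-fold spectral replicas do not overlap, which relies on the no-overlap guarantee from $f_s$ exceeding the Nyquist rate. A secondary technical point is that $(f_{k+1}-f_k)/f_s$ need not be rational for an arbitrary spectral support; I would handle this either by taking the band edges at rational points, incurring an arbitrarily small excess ratio, or by transferring the exact continuous-time construction through the DTVGS--CTVGS correspondence established in the proof of \cref{thm:subset_d}.
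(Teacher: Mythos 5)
Your proposal follows essentially the same route as the paper: the paper proves \cref{thm:achiv_d} precisely by running the multi-band scheme of \cref{alg:multi} on the DTVGS (see \cref{subsec:multi_d}), i.e.\ selecting $\S_\G'$ via \cref{lem:rank}, splitting one period of $\F_\J(\X)$ into maximal subbands, choosing $\S_\G^k$ via \cref{cor:rank_k}, downsampling each $\P^k(\X)(v,\cdot)$, $v\in\S_\G^k$, by the rational factor $\alpha^k/\beta^k=(f_{k+1}-f_k)/f_s$, and recovering band-by-band with $\mathbf{\Phi}_\G^k$ before summing. The two technical caveats you flag (in-band aliasing under rational downsampling, and rationality of $(f_{k+1}-f_k)/f_s$) are left implicit in the paper's own treatment as well, so your argument is, if anything, slightly more careful than the published one.
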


\begin{proof}
    Our constructed \emph{multi-band sampling} scheme can prove the critical sampling of any JBL DTVGS is achievable and gives a way to obtain such a sampling set. See details in \cref{subsec:multi_d}.
\end{proof}

\subsection{Multi-band sampling scheme for DTVGS}
\label{subsec:multi_d}

\subsubsection{Sampling}

The main difference between CTVGS and DTVGS is their temporal topology. For the discrete temporal topology, a JBL DTVGS is downsampled in \cref{alg:multi}. The flow of the multi-band sampling scheme is shown in \cref{fig:sampling_d}.

We first select a set of vertices $\S_\G'$ according to \cref{lem:rank} such that $|\S_\G'| = B_\G$. Then, define a projection operator $\P^k := \P_{[f_k, f_{k+1}]}, k = 1, \dots, K$ and divide $\F_\J(\X)$ into $K$ subbands. The band is the largest interval such that $\P^k(\F_\J(\X))(i, \cdot), i \in \I$ is either entirely zero or entirely non-zero within the band.

\emph{Within band $k$}: Let $\I_f^k$ be the index set of nonzero elements of $\F_\J(\P^k(\X))$ at frequency $f$, and $\I^k = \cup_f \I_f^k$. Satisfying \cref{cor:rank_k}, \emph{i.e.}, ${\rm rank}(\mathbf{\Psi}_\G^k \U_\G^k(\cdot, \I^k)) = |\I^k|$, we get the sampling set $\S_\G^k \subseteq \S_\G'$ with $|\S_\G^k| = |\I^k|$.

Then $\P^k(\X)(v, \cdot), v \in \S_\G^k$ can be downsampled at a rate of $\alpha^k / \beta^k =(f_{k+1}-f_k)/f_s$: 
\begin{equation*}
    \mathbf{\Psi}_\G^k \P^k(\X) (\downarrow \frac{\alpha^k}{\beta^k})
\end{equation*}
denoted as $\tilde{\X}^k(\S_\G^k, \S_\T^k)$. The sampling set on $\P^k(\X)$ in band $k$ is recorded as $\S^k = \{ \S_\G^k \times \S_\T^k \}$. 

\emph{Critical sampling set}: Finally, the sampling set $\S$ for the entire signal $\X$ is the set of the sampling sets of all the bands. The total sampling ratio of $\X$ is
\begin{equation*}
    R_D(\S) = \frac{1}{N} \sum_k |\I^k| (f_{k+1}-f_k) = \frac{B}{N}.
\end{equation*}

In vertex domain, we have $\S_\G = \cup_k \S_\G^k$, where $\S_\G^k \subseteq \S_\G'$ for all $k = 1, \dots, K$. Then $|\S_\G| \le |\S_\G'| = B_\G$. Therefore, \cref{thm:achiv_d} holds.

\begin{figure} [htbp] 
	\centering
	\includegraphics[width=0.5\columnwidth]{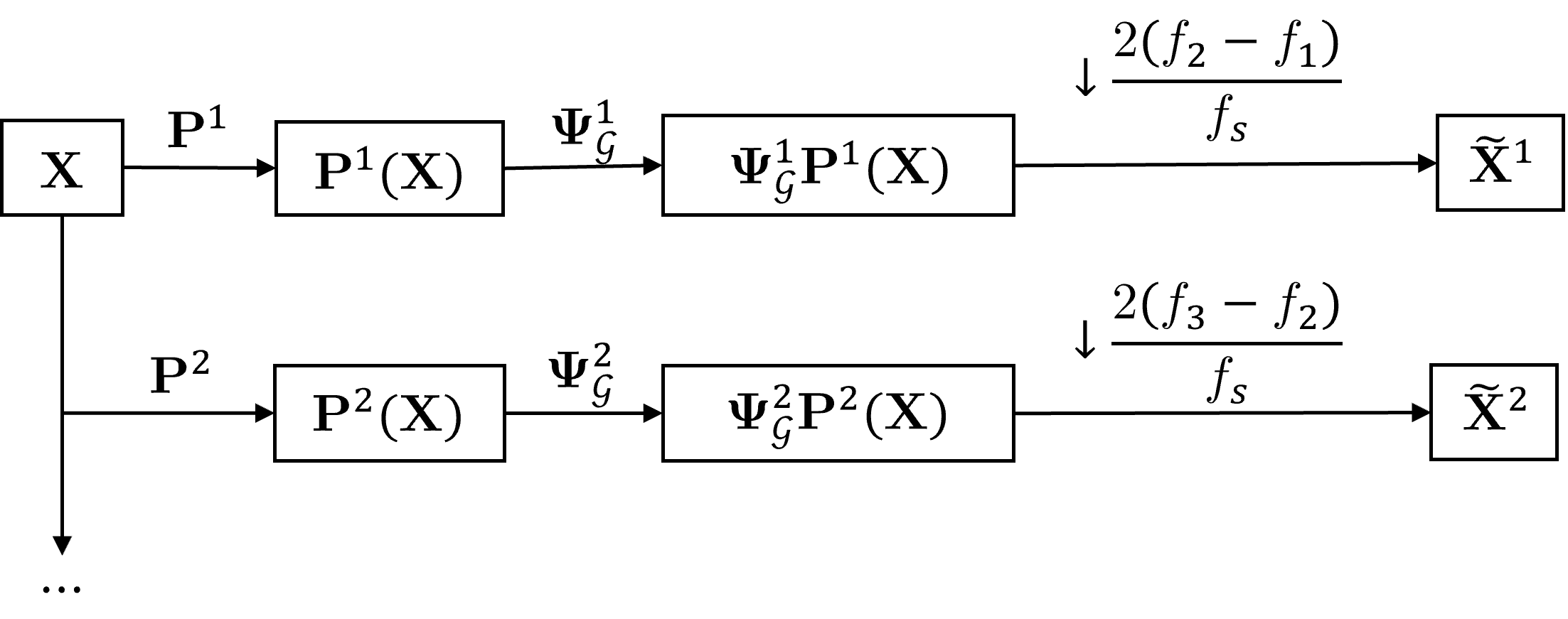}
	\caption{Flow chart of the multi-band sampling scheme for a DTVGS $\X$} 
	\label{fig:sampling_d}
\end{figure}

%% recover...
\subsubsection{Recovery}

The flow chart of the DTVGS recovery is shown in \cref{fig:recover_d}. Each $\tilde{\X}^k$ can be upsampled by $\beta^k / \alpha^k$ (\emph{i.e.}, $\uparrow \frac{\beta^k}{\alpha^k}$) to get $\mathbf{\Psi}_\G^k \P^k(\X)$. Then $\mathbf{\Psi}_\G^k \P^k(\X)$ is pre-multiplied by $\mathbf{\Phi}_\G^k = \U_\G^k(\cdot, \I^k)((\U_M^k)^H \U_M^k)^{-1}(\U_M^k)^H$, where $\U_M^k = \mathbf{\Psi}_\G^k \U_\G^k(\cdot, \I^k)$, to recover the projected sequence $\hat{\X}^k = \P^k(\X)$. In this way, we will get several recovered sequences $\hat{\X}^k, k = 1, \dots, K$, and add them together to get the recovered DTVGS: 
\begin{equation*}
    \hat{\X}= \sum_k \hat{\X}^k =\X.
\end{equation*}

\begin{figure} [htbp] 
	\centering
	\includegraphics[width=0.44\columnwidth]{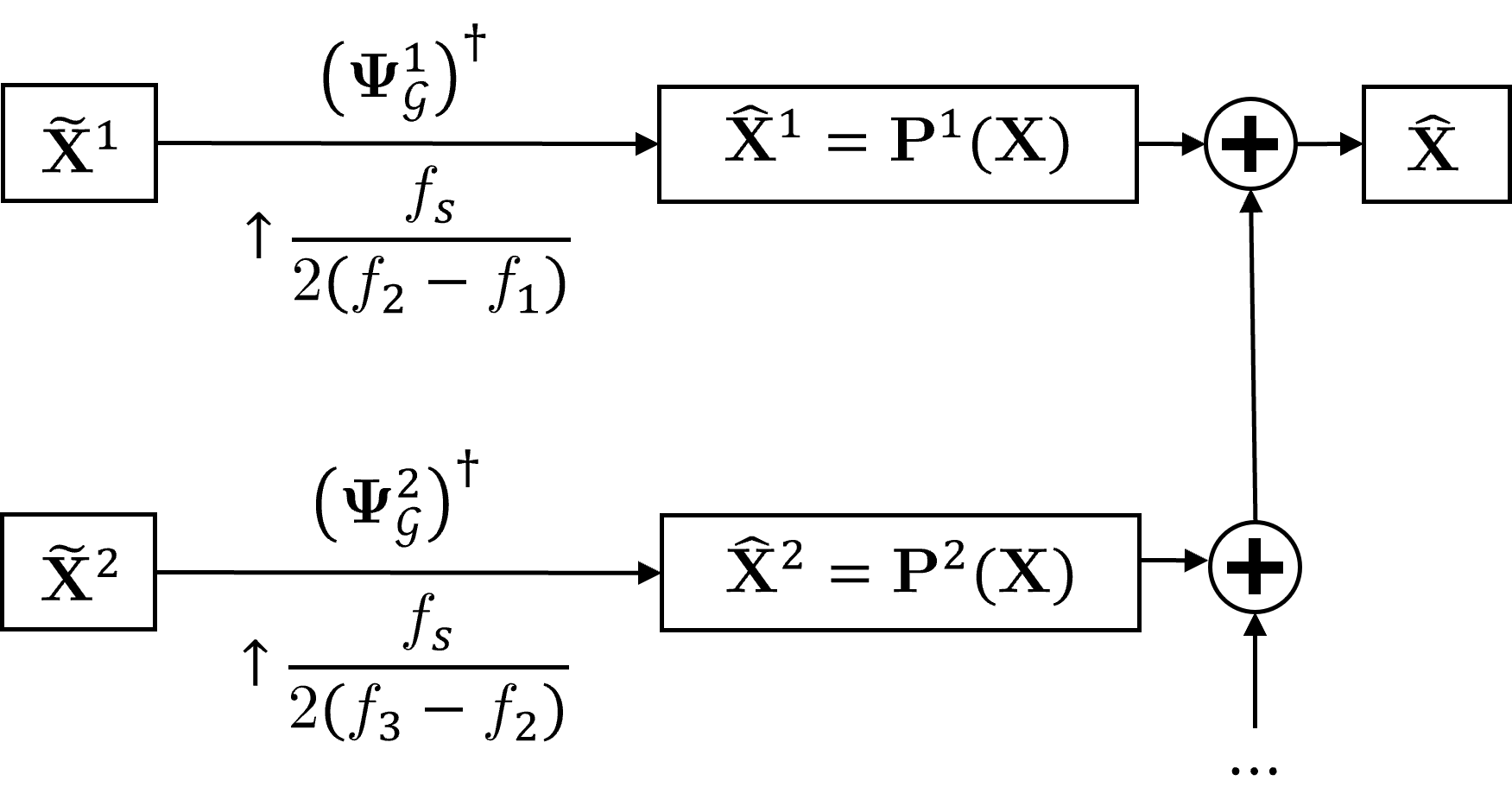}
	\caption{Recovery flow chart of a DTVGS $\X$} 
	\label{fig:recover_d}
\end{figure}

\section{Critical Sampling and Reconstruction of FTVGS} 
\label{sec:cri_samp_f}

\subsection{Critical sampling of FTVGS}
\label{subsec:cri_f}

An FTVGS is modeled as a product graph in \cref{subsec:mod_f}, and its sampling also requires spectral analysis. Therefore, we introduce definitions of bandwidths for FTVGS, which are derived from the concepts of bandwidths in CTVGS and DTVGS. 

The $i$-th row $\F_\J(\X)(i, \cdot)$ is supported on a set $\mathcal{F}_i$. We assume that $\mathcal{F}_i$ is a finite union of frequencies whose locations are known. The index set of nonzero elements of $\F_\J(\X)(\cdot, f)$ is $\I_f$.

\begin{definition}
\label{df:probth_f}
    \emph{\cite{Yu}} Let $\mathcal{F} = \cup_i \mathcal{F}_i$, $B_\T = |\mathcal{F}|$ is defined as the projection bandwidth in time domain. Let $\I = \cup_f \I_f$, $B_\G = |\I|$ is defined as the projection bandwidth in the vertex domain. 
\end{definition}

\begin{definition}
\label{df:jotbth_f}
    An FTVGS $\X$ is a JBL signal when $\F_\J(\X)$ has $B = ||\F_\J(\X) ||_0 < NT$ nonzero elements, where $B$ is the joint bandwidth.
\end{definition}

The relationship between the projection bandwidths and joint bandwidth can be easily obtained: 
\begin{equation*}
       {\rm max}(B_\T, B_\G) \le B \le B_\G B_\T.
\end{equation*}
For example, when $\F_\J(\X)$ is a diagonal matrix with all nonzero diagonal entries, $B<NT$, but $B_\G=N, B_\T=T$.

\begin{definition}
\label{df:JBLF}
    Let $\BL_F(\{ \mathcal{F}_1, \dots, \mathcal{F}_{B_\G} \}) = \{ \X \in \C^{N\times T}: \F_\J(\X)(i, f) = 0, \forall f \notin \mathcal{F}_i \}$ be the space of JBL FTVGS, in which $\F_\J(\X)(i, f), f \in \mathcal{F}_i$ can be assigned arbitrary values.
\end{definition}

A JBL FTVGS gives a low-dimensional representation as
\begin{equation*} 
% \label{eq:F_low1}
    \X = \U_\G(\cdot, \I) \F_\J(\X)(\I, \mathcal{F}) (\U_\T(\cdot, \mathcal{F}))^T ,
\end{equation*}
and the vector form
\begin{equation*} 
% \label{eq:F_low2}
    \x = (\U_\G(\cdot, \I) \otimes \U_\T(\cdot, \mathcal{F})) \F_\J(\x)(\I \times \mathcal{F}).
\end{equation*}

As the special case of DTVGS, FTVGS also follows \cref{lem:rank}. Moreover, by applying \cref{lem:rank} to FTVGS in the time domain, we obtain $\S_\T' \subseteq \V_\T$ with $|\S_\T'| = B_\T$ satisfying ${\rm rank} (\U_\T(\S_\T', \mathcal{F})) = |\mathcal{F}| = B_\T$.

\iffalse
\begin{lemma}
\label{lem:rank_f}
    For a signal $\X(\S_\G, \cdot) \in \BL_F(\{ \mathcal{F}_1, \dots, \mathcal{F}_{B_\G} \})$, there exist projection sampling sets $\S_\G \subseteq \V_\G$ with $|\S_\G| = |\I|$ and $\S_\T \subseteq \V_\T$ with $|\S_\T| = |\mathcal{F}|$ satisfying ${\rm rank} (\U_\G(\S_\G, \I)) = |\I|$ and ${\rm rank} (\U_\T(\S_\T, \mathcal{F})) = |\mathcal{F}|$ respectively, such that $\X$ can be reconstructed.
\end{lemma}

\begin{proof}
    According to \cref{eq:JFT_f}, we have
    \begin{equation*}
        \F_\T(\X) = \U_\G(\cdot, \I) \F_\J(\X)(\I, \cdot).
    \end{equation*}
    There must be a $\S_\G \subseteq \V_\G$ with $|\S_\G| = |\I|$ such that ${\rm rank}(\mathbf{\Psi}_\G \U_\G(\cdot, \I)) = |\I|$ according to Theorem 1 in \cite{theory}.
    For recovering, $ \mathbf{\Phi}_\G \mathbf{\Psi}_\G \F_\T(\X) = \F_\T(\X)$, where $\mathbf{\Phi}_\G = \U_\G(\cdot, \I)(\U_M^H \U_M)^{-1} \U_M^H$, where $\U_M = \mathbf{\Psi}_\G \U_\G(\cdot, \I)$.

    Existing $\S_\T \subseteq \V_\T$ with $|\S_\T| = |\mathcal{F}|$ satisfying ${\rm rank} (\U_\T(\S_\T, \mathcal{F})) = |\mathcal{F}|$ can be proved in the same way.
\end{proof}
\fi

Then, a separate sampling strategy is used to sample JBL FTVGS \cite{sampling2018}. By performing elimination on $\U_\G(\cdot, \I)$ and $\U_\T(\cdot, \mathcal{F})$ separately, we can obtain the sampling sets $\S_\G'$ and $\S_\T'$, such that $| \S_\G' | = B_\G$ and $| \S_\T' | = B_\T$. The expression of sampling $\X$ is as follows: 
\begin{equation*}
    \X(\S_\G', \S_\T') = \mathbf{\Psi}_\G' \X \mathbf{\Psi}_\T'^H = \mathbf{\Psi}_\G' \U_\G(\cdot, \I) \F_\J(\X)(\I, \mathcal{F}) (\U_\T(\cdot, \mathcal{F}))^T \mathbf{\Psi}_\T'^H,
\end{equation*}
where $\mathbf{\Psi}_\T'$ and $\mathbf{\Psi}_\G'$ are sampling matrices of sampling sets $\S_\T'$ and $\S_\G'$. The vector form of $\X(\S_\G', \S_\T')$ can be expressed as
\begin{equation*} 
% \label{eq:F_sep}
    \x(\S_\G' \times \S_\T') = ( \mathbf{\Psi}_\G' \U_\G(\cdot, \I) \otimes \mathbf{\Psi}_\T' \U_\T(\cdot, \mathcal{F}) ) \F_\J(\x)(\I \times \mathcal{F}) ,
\end{equation*}
where $\x(\S_\G' \times \S_\T')$ is the subvector of $\x$ corresponding to rows indexed by $\S_\G' \times \S_\T'$. The total sampling ratio is $ R_F(\S_\G' \times \S_\T') = (B_\G B_\T ) /(NT)$. However, it is important to note that the separate sampling scheme does not guarantee the minimum number of samples in all cases.

\emph{For JBL FTVGS, JFT gets a more compact spectrum. Therefore, we analyze the sampling ratio from the perspective of the joint time-vertex domain.} When considering $\X$ in matrix form, the necessary conditions and proofs for stable sampling will differ significantly from those of CTVGS and DTVGS.

Then the following results give the necessary conditions for stable sampling. We know that when $\S_\G'$ satisfies \cref{lem:rank}, $\X$ completely determined by $\X(\S_\G', \cdot)$, whose joint bandwidth is $B$. So we sample $\X(\S_\G', \cdot)$ with stable sampling set $\S$.

\begin{theorem}
\label{thm:subset_f}
    For a signal $\X(\S_\G', \cdot) \in \BL_F(\{ \mathcal{F}_1, \dots, \mathcal{F}_{B_\G} \})$ with joint bandwidth $B$, $\x(\S_\G' \times \V_\T) = \emph{\text{vec}}(\X(\S_\G', \cdot)) \in \C^{B_\G T}$ is the corresponding vectorized form. Suppose that $\Theta \subseteq \S_\G'$, $\Theta^c$ is the complement of $\Theta$ in $\S_\G'$, and $\S_\Theta = \{ (v, n): v \in \Theta \} \subseteq \S $ is a stable sampling set with $R_F(\S_\Theta)$. Then
    \begin{equation*}
        R_F(\S_\Theta) \ge \frac{1}{|\Theta|T} \left( B - {\rm rank}( \mathbf{\Psi}_\theta^c \U_\J(\S_\G' \times \V_\T, \cdot) \mathbf{\Psi}_j^H ) \right),
    \end{equation*}
    where $\mathbf{\Psi}_\theta^c \in \{0, 1\}^{(B_\G-|\Theta|)T \times B_\G T}$ is the sampling matrix corresponding to $\{ \Theta^c \times \V_\T \}$, $\mathbf{\Psi}_j \in \{0, 1\}^{B \times NT}$ is a sampling matrix that corresponds to nonzero elements of $\F_\J(\x)$.
\end{theorem}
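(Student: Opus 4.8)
The plan is to reduce the statement to a finite-dimensional rank inequality, exploiting that for FTVGS the signal on $\S_\G'$ is determined by the $B$ nonzero entries of $\F_\J(\X)$. First I would collect these free spectral values into $s := \mathbf{\Psi}_j \F_\J(\x) \in \C^B$; since $\F_\J(\x)$ vanishes off its support we have $\mathbf{\Psi}_j^H s = \F_\J(\x)$, and the IJFT $\x = \U_\J \F_\J(\x)$ restricted to the rows $\S_\G' \times \V_\T$ gives
\begin{equation*}
\x(\S_\G' \times \V_\T) = \U_\J(\S_\G' \times \V_\T, \cdot)\,\mathbf{\Psi}_j^H\, s =: G s, \qquad G \in \C^{B_\G T \times B}.
\end{equation*}
A preliminary point is that $G$ is injective, so that the signal space is genuinely $B$-dimensional: because $\U_\J = \U_\G \otimes \U_\T$, the row-restricted factor equals $\U_\G(\S_\G',\cdot) \otimes \U_\T$, whose columns over the grid $\I \times \mathcal{F}$ are linearly independent (a Kronecker product of two full-column-rank matrices, using $\mathrm{rank}\,\U_\G(\S_\G',\I) = B_\G$ from \cref{lem:rank} and $\U_\T(\cdot,\mathcal{F})$ being a column subset of a unitary); any subfamily inherits independence, in particular the $B$ columns selected by $\mathbf{\Psi}_j^H$.

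Next I would translate \emph{stable sampling} into injectivity. By \cref{df:stable} stability demands a constant $A > 0$ with $A\|\X\|^2 \le \sum_{(v,n)} |\X(v,n)|^2$; in finite dimensions this lower frame bound holds for some $A>0$ if and only if the sampling operator is injective on the $B$-dimensional signal space, since an injective linear map on $\C^B$ automatically has a strictly positive smallest singular value. Thus a sampling set $\S$, split into its parts on $\Theta$ and $\Theta^c$, stably samples $\x(\S_\G' \times \V_\T)$ exactly when the induced family of linear functionals $s \mapsto (\text{sampled coordinates of } Gs)$ has rank $B$.

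The heart of the proof is then a rank-subadditivity count. Every sample taken on $\Theta^c$ is a single coordinate of $\x(\Theta^c \times \V_\T) = \mathbf{\Psi}_\theta^c G s = M s$ with $M := \mathbf{\Psi}_\theta^c\,\U_\J(\S_\G' \times \V_\T, \cdot)\,\mathbf{\Psi}_j^H$, hence a functional lying in the row space of $M$; consequently the samples drawn from $\Theta^c$, however many, jointly contribute at most $\mathrm{rank}(M)$ independent constraints on $s$. The samples on $\Theta$ number $|\S_\Theta|$ and contribute at most $|\S_\Theta|$. Since stable recovery forces the combined functionals to have rank $B$, subadditivity of rank gives $\mathrm{rank}(M) + |\S_\Theta| \ge B$, i.e. $|\S_\Theta| \ge B - \mathrm{rank}(M)$. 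Dividing by $|\Theta|T$ and recalling $R_F(\S_\Theta) = |\S_\Theta|/(|\Theta|T)$ yields precisely the stated bound.

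I expect the delicate part to be the two preparatory reductions rather than the final count: one must confirm that $G$ has full column rank $B$ (so that ``stable'' is correctly identified with ``injective'' and the parameter space is exactly $\C^B$), and must argue that an arbitrary, possibly dense, sampling of $\Theta^c$ cannot furnish more than $\mathrm{rank}(M)$ independent linear constraints on $s$. This last factorization through $M$ is the finite-dimensional analogue of the ``free information from $\Theta^c$'' term $\int_\mathcal{F} \mathrm{rank}(\U_\G(\Theta^c,\I_f))\,df$ appearing in \cref{thm:subset_c}; once it is in place the bound follows from elementary rank inequalities, with no recourse to the measure-theoretic or MIMO arguments needed in the continuous and discrete cases.
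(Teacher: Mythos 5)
Your proposal is correct and follows essentially the same route as the paper's proof: both reduce to the finite-dimensional representation $\x(\S_\G'\times\V_\T)=\U_\J(\S_\G'\times\V_\T,\cdot)\mathbf{\Psi}_j^H\,\F_\J(\x)(\mathcal{N})$, establish that this matrix has full column rank $B$ via the Kronecker structure $\U_\G(\S_\G',\I)\otimes\U_\T$, and conclude by rank subadditivity applied to the $\Theta$/$\Theta^c$ row blocks before dividing by $|\Theta|T$. The only cosmetic difference is that you phrase the final step directly as counting independent linear constraints on the spectral parameter, while the paper first bounds $|\S_\Theta|$ below by ${\rm rank}(\mathbf{\Psi}_\theta \U_\J(\S_\G'\times\V_\T,\cdot)\mathbf{\Psi}_j^H)$ and then invokes the same subadditivity inequality.
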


The proof of \cref{thm:subset_f} can be found in \cref{pf_th_f}.

Upon satisfying \cref{lem:rank}, \cref{thm:subset_f} proves the lower bound on the sampling ratios of all $\X(\Theta, \cdot),\Theta \subseteq \S_\G'$. In the vector form, we have $\x(\Theta^c \times \V_\T) = \text{vec}(\X(\Theta^c, \cdot))$ and 
\begin{equation*}
    \x(\Theta^c \times \V_\T) =  \mathbf{\Psi}_\theta^c \U_\J(\S_\G' \times \V_\T, \cdot) \mathbf{\Psi}_j^H \cdot \F_\J(\x)(\mathcal{N}).
\end{equation*}
Therefore, ${\rm rank}( \mathbf{\Psi}_\theta^c \U_\J(\S_\G' \times \V_\T, \cdot) \mathbf{\Psi}_j^H )$ is the number of independent components of $\F_\J(\x)$ that can be determined from the information of $\X(\Theta^c, \cdot)$ alone.

In particular, we have the following corollaries.

\begin{corollary}
\label{cor:RFS}
    Under the assumption of \cref{thm:subset_f}, when $\Theta = \S_\G'$, we have
    \begin{equation*}
        R_F(\S_\Theta) = R_F(\S) = \frac{|\S|}{NT} \ge \frac{B}{NT} .
    \end{equation*}
\end{corollary}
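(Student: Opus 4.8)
The plan is to read off Corollary~\ref{cor:RFS} directly from Theorem~\ref{thm:subset_f} by specializing to the extreme choice $\Theta = \S_\G'$, for which the rank correction term collapses to zero. First I would record the combinatorial consequence of this choice: since $\Theta^c$ denotes the complement of $\Theta$ inside $\S_\G'$, taking $\Theta = \S_\G'$ forces $\Theta^c = \emptyset$. The associated sampling matrix $\mathbf{\Psi}_\theta^c$ lives in $\{0,1\}^{(B_\G - |\Theta|)T \times B_\G T}$, so with $|\Theta| = |\S_\G'| = B_\G$ (guaranteed by \cref{lem:rank}) it has $(B_\G - B_\G)T = 0$ rows and is an empty matrix.

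Next I would feed this into the rank term of Theorem~\ref{thm:subset_f}. The matrix $\mathbf{\Psi}_\theta^c \U_\J(\S_\G' \times \V_\T, \cdot)\mathbf{\Psi}_j^H$ is obtained by left-multiplying by a matrix with no rows, hence it has zero rows and
\begin{equation*}
    {\rm rank}\bigl(\mathbf{\Psi}_\theta^c \U_\J(\S_\G' \times \V_\T, \cdot)\mathbf{\Psi}_j^H\bigr) = 0.
\end{equation*}
Substituting this together with $|\Theta| = B_\G$ into the conclusion of Theorem~\ref{thm:subset_f} gives
\begin{equation*}
    R_F(\S_\Theta) \ge \frac{1}{B_\G T}\bigl(B - 0\bigr) = \frac{B}{B_\G T}.
\end{equation*}

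Finally I would convert this into the stated bound on the total sampling ratio. Because $\S$ is, by the hypothesis of the theorem, a sampling set of $\X(\S_\G', \cdot)$, every sample it contains sits on a vertex of $\S_\G'$; hence the choice $\Theta = \S_\G'$ yields $\S_\Theta = \S$. The inequality $R_F(\S_\Theta) \ge B/(B_\G T)$ is, after clearing the denominator $|\Theta|T = B_\G T$, exactly the cardinality statement $|\S| = |\S_\Theta| \ge B$. Dividing through by $NT$ then produces
\begin{equation*}
    R_F(\S) = \frac{|\S|}{NT} \ge \frac{B}{NT},
\end{equation*}
which is the assertion of the corollary.

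The only point demanding genuine care is the bookkeeping of the two normalizations together with the identification $\S_\Theta = \S$: the restricted ratio $R_F(\S_\Theta)$ carries $|\Theta|T = B_\G T$ in its denominator, whereas the total ratio $R_F(\S)$ carries $NT$, and $B_\G \le N$ in general. I would therefore route the argument through the normalization-free cardinality bound $|\S| \ge B$ rather than comparing the two ratios term by term, so that passing from the restricted-to-$\Theta$ estimate to the total-ratio estimate is unambiguous. No other step presents difficulty, since everything else is the verbatim specialization of a theorem already in hand.
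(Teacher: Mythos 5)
Your proof is correct and takes essentially the same route as the paper: specialize \cref{thm:subset_f} to $\Theta = \S_\G'$, observe that $\Theta^c = \emptyset$ makes $\mathbf{\Psi}_\theta^c$ an empty (zero-row) matrix so the rank term vanishes, and identify $\S_\Theta = \S$. Your insistence on passing through the normalization-free cardinality bound $|\S| \ge B$ before dividing by $NT$ is a slightly more careful handling of the $|\Theta|T = B_\G T$ versus $NT$ denominator mismatch than the paper's own wording, which simply asserts $R_F(\S_\Theta) = R_F(\S) = |\S|/(NT)$ on the grounds that $\X$ is completely determined by $\X(\S_\G',\cdot)$, so $\S$ is a stable sampling set for all of $\X$.
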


When $\S_\G'$ satisfies \cref{lem:rank}, $\X$ completely determined by $\X(\S_\G', \cdot)$. Set $\S$ is the stable sampling set of $\X(\S_\G', \cdot)$. That is, $\X$ can be stably reconstructed from $\S$. Thus $R_F(\S) = |\S|/(NT)$.

If $\X(v, \cdot), v \in \S_\G'$ are correlated, the sampling ratio of each $\X(v, \cdot)$ can be reduced by increasing the sampling ratios of the related sequences. Such conversion limits are given in \cref{cor:RFSv}. Of course, the total sampling ratio should be no less than $B/(NT)$ to ensure stable reconstruction. 

\begin{corollary}
\label{cor:RFSv}
    Under the assumption of \cref{thm:subset_f}, when $\Theta = \{ v \}$, the sampling set on vertex $v$ is $\S_v$, and $\Theta^c = \S_\G' \backslash v$. Then we have
    \begin{equation*}
        R_F(\S_v) \ge \frac{1}{T} \left( B- {\rm rank}( \mathbf{\Psi}_\theta^c \U_\J(\S_\G' \times \V_\T, \cdot) \mathbf{\Psi}_j^H ) \right).
    \end{equation*}
\end{corollary}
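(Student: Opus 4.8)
The plan is to reduce the statement to a finite-dimensional rank-counting argument, paralleling \cref{thm:subset_c} but remaining entirely in matrix form since the FTVGS signal space is finite-dimensional (so the MIMO machinery invoked for CTVGS is unnecessary). First I would write the synthesis relation restricted to the retained vertices. Starting from the low-dimensional representation $\x = (\U_\G(\cdot,\I)\otimes\U_\T(\cdot,\mathcal{F}))\F_\J(\x)(\I\times\mathcal{F})$ and keeping only the rows indexed by $\S_\G'\times\V_\T$, while letting $\mathbf{\Psi}_j^H$ select the $B$ columns corresponding to the nonzero support $\mathcal{N}$, one obtains
$$\x(\S_\G'\times\V_\T) = M\,\F_\J(\x)(\mathcal{N}), \qquad M := \U_\J(\S_\G'\times\V_\T,\cdot)\,\mathbf{\Psi}_j^H \in \C^{B_\G T \times B},$$
where $\F_\J(\x)(\mathcal{N})\in\C^{B}$ collects the $B$ genuine unknowns.

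The key structural step is to show that $M$ has full column rank $B$, i.e. that the map from the $B$ free coefficients to the data on $\S_\G'$ is injective. Because $\U_\J=\U_\G\otimes\U_\T$, each retained column of $M$ factors as $\U_\G(\S_\G',i)\otimes\mathbf{u}_{\T,f}$ for $(i,f)\in\mathcal{N}$, where $\mathbf{u}_{\T,f}=\U_\T(\V_\T,f)$ is a full column of $\U_\T$. Reshaping a vanishing linear combination into an $|\S_\G'|\times T$ matrix and using that the columns $\mathbf{u}_{\T,f}$ are orthogonal, I would project frequency by frequency to obtain $\sum_{i\in\I_f}c_{i,f}\,\U_\G(\S_\G',i)=0$ for each $f$; since \cref{lem:rank} guarantees that $\{\U_\G(\S_\G',i)\}_{i\in\I}$ are linearly independent and $\I_f\subseteq\I$, all coefficients vanish. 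Hence $M$ is injective.

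With injectivity established, I would set up the reconstruction requirement exactly as in the reasoning behind \cref{thm:subset_c}: stable recovery of $\X$ from the combined information of $\X(\Theta^c,\cdot)$ (taken as fully available) together with the samples $\S_\Theta$ on $\Theta$ is possible only if the combined linear map sending $\F_\J(\x)(\mathcal{N})$ to this data is injective, hence has rank $B$. The $\Theta^c$-portion contributes the map $\mathbf{\Psi}_\theta^c M$, whose rank is exactly $\text{rank}(\mathbf{\Psi}_\theta^c\U_\J(\S_\G'\times\V_\T,\cdot)\mathbf{\Psi}_j^H)$, while the $|\S_\Theta|$ scalar samples on $\Theta$ contribute at most $|\S_\Theta|$ independent rows. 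Subadditivity of rank then gives
$$B \le \text{rank}\!\left(\mathbf{\Psi}_\theta^c\,\U_\J(\S_\G'\times\V_\T,\cdot)\,\mathbf{\Psi}_j^H\right) + |\S_\Theta|,$$
so that $|\S_\Theta|\ge B-\text{rank}(\mathbf{\Psi}_\theta^c\U_\J(\S_\G'\times\V_\T,\cdot)\mathbf{\Psi}_j^H)$. Dividing by $|\Theta|T$ and recalling $R_F(\S_\Theta)=|\S_\Theta|/(|\Theta|T)$ yields the claimed bound, and \cref{cor:RFS,cor:RFSv} follow by taking $\Theta=\S_\G'$ and $\Theta=\{v\}$.

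I expect the main obstacle to be the full-rank claim for $M$, together with the careful justification that \emph{stable} (not merely unique) recovery still forces the total rank to reach $B$ while the $\Theta$-samples add at most $|\S_\Theta|$ to it. The Kronecker bookkeeping is routine once orthogonality of the $\U_\T$ columns is exploited, but one must be careful that the nonzero support $\mathcal{N}$ need not be the full product $\I\times\mathcal{F}$ — the per-frequency index sets $\I_f$ may differ — so the independence argument has to be run frequency-by-frequency rather than deduced from a single Kronecker-rank identity.
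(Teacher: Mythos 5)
Your proposal is correct and follows essentially the same route as the paper: the paper proves \cref{thm:subset_f} in its appendix by exactly this finite-dimensional matrix argument — the low-dimensional representation $\x(\S_\G'\times\V_\T)=\U_\J(\S_\G'\times\V_\T,\mathcal{N})\F_\J(\x)(\mathcal{N})$, full column rank $B$ of the synthesis matrix, and rank subadditivity applied to the stacked map formed by the $\Theta^c$ rows and the $\S_\Theta$ samples — and then obtains \cref{cor:RFSv} as the specialization $\Theta=\{v\}$, $|\Theta|=1$, just as you do. The only cosmetic difference is in establishing full column rank: the paper notes that the columns indexed by $\mathcal{N}$ form a subset of the columns of the full-column-rank Kronecker product $\U_\G(\S_\G',\I)\otimes\U_\T$ and are therefore automatically independent, so your frequency-by-frequency projection, while valid, is an unnecessary precaution.
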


Moreover, \cref{cor:RFS} lead to the definition of a critical sampling set of FTVGS. 

\begin{definition}
\label{df:cri_f}
    A stable sampling set $\S$ of $\X \in \BL_F(\{ \mathcal{F}_1, \dots, \mathcal{F}_{B_\G} \})$ is a critical sampling set when $ R_F(\S) = \frac{B}{NT} $. 
\end{definition}

As described in \cref{subsec:mod_f}, FTVGS is a special case of DTVGS. Thus the multi-band sampling scheme in \cref{subsec:multi_d} also applies to FTVGS. Additionally, we provide an alternative \emph{joint sampling} scheme in \cite{Yu}, which is only applicable to FTVGS. Both the multi-band sampling and joint sampling schemes can prove that critical sampling of any JBL FTVGS is achievable. 

\begin{theorem}
\label{thm:achiv_f}
    For any $\X \in \BL_F(\{ \mathcal{F}_1, \dots, \mathcal{F}_{B_\G} \})$, there must be a sampling set $\S$ such that $ R_D(\S) = B/N $ and $\S_\G=\{ v: (v, t_{vz}) \in \S \}$ satisfies $|\S_\G| \le B_\G$. Thus the critical sampling is achievable.
\end{theorem}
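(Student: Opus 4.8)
The plan is to prove achievability for FTVGS by specializing the multi-band sampling scheme already constructed for DTVGS in \cref{subsec:multi_d}, exploiting the fact that an FTVGS is a finite, periodic instance of a DTVGS (as established in \cref{subsec:mod_f}). First, I would invoke \cref{lem:rank} applied to FTVGS to select a vertex set $\S_\G'$ with $|\S_\G'| = B_\G$ satisfying $\mathrm{rank}(\U_\G(\S_\G', \I)) = B_\G$, so that $\X$ is completely determined by $\X(\S_\G', \cdot)$. This reduces the problem to sampling $\X(\S_\G', \cdot)$, whose joint bandwidth is still $B$.

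The core construction mirrors the DTVGS scheme. Since $\F_\J(\X)$ is now a finite matrix rather than a continuous periodic function, I would partition the finite frequency support $\mathcal{F}$ into $K$ subbands $[f_k, f_{k+1}]$ such that within each band the indicator pattern $\P^k(\F_\J(\X))(i,\cdot), i \in \I$ is either identically zero or identically nonzero. For each band $k$, I would define $\I^k = \cup_f \I_f^k$ and, by \cref{cor:rank_k}, select $\S_\G^k \subseteq \S_\G'$ with $|\S_\G^k| = |\I^k|$ and $\mathrm{rank}(\mathbf{\Psi}_\G^k \U_\G(\cdot, \I^k)) = |\I^k|$. The temporal downsampling described in \cref{subsec:multi_d} then applies directly, since the finite setting is just the DTVGS setting restricted to one period. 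Summing the per-band sample counts gives
\begin{equation*}
    R_F(\S) = \frac{1}{NT} \sum_k |\I^k| \, (f_{k+1} - f_k) = \frac{B}{NT},
\end{equation*}
which realizes the critical ratio, while $\S_\G = \cup_k \S_\G^k \subseteq \S_\G'$ forces $|\S_\G| \le B_\G$. Recovery proceeds band-by-band exactly as in the DTVGS recovery (upsampling followed by left-multiplication by the interpolation matrix $\mathbf{\Phi}_\G^k$), and the reconstructed bands sum to $\X$.

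The main obstacle I anticipate is a bookkeeping and consistency issue rather than a deep one: the statement of \cref{thm:achiv_f} as written uses $R_D(\S) = B/N$, which appears to be a typo inherited from the DTVGS theorem — the correct critical ratio for FTVGS is $R_F(\S) = B/(NT)$ per \cref{df:cri_f} and \cref{cor:RFS}. I would state the achievability in terms of $R_F(\S) = B/(NT)$ to stay consistent with the finite normalization. The only genuine subtlety is ensuring that the downsampling rates $(f_{k+1}-f_k)$ yield integer sample counts in the finite-length regime so that each $|\I^k|(f_{k+1}-f_k)$ is well-defined; this is guaranteed because $\mathcal{F}$ is a finite union of discrete frequencies on the length-$T$ DFT grid, so each band width is an integer count of frequency bins and the per-band sample totals are automatically integers. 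Since both the multi-band scheme and the joint sampling scheme of \cite{Yu} achieve this ratio, I would simply reference the adapted \cref{subsec:multi_d} construction to complete the proof.
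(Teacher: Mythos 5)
Your proposal is correct and follows essentially the same route as the paper: the paper's own justification of \cref{thm:achiv_f} is exactly that an FTVGS is a special case of DTVGS, so the multi-band sampling scheme of \cref{subsec:multi_d} (or alternatively the joint sampling scheme of \cite{Yu}) applies and attains the critical ratio with $|\S_\G| \le B_\G$. Your two side observations --- that the stated rate $R_D(\S)=B/N$ should read $R_F(\S)=B/(NT)$ to be consistent with \cref{df:cri_f} and \cref{cor:RFS}, and that the subband widths are integer counts of DFT bins so the per-band sample totals are well-defined --- are both correct and supply details the paper leaves implicit.
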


So far, we have introduced the sampling theories and methods for CTVGS, DTVGS, and FTVGS, respectively. We summarise the similarities and differences for the three types of TVGS in \cref{tab:sum}.

\begin{table}
    \centering
    \caption{Similarities and differences in sampling theories and methods for three types of TVGS}
    \begin{tabular}{llll}
        \hline
          & CTVGS & DTVGS & FTVGS \\
         \hline
         \multirow{3}*{Similarities} & \multicolumn{3}{l}{ All three types of TVGS obey the time-vertex graph signal framework \cite{timevertex}, \emph{i.e.}, $\G \times \T$. } \\
         ~ &  \multicolumn{3}{l}{ \makecell[l]{The definitions of $D(\cdot)$, $R_D(\cdot)$, and $R_F(\cdot)$ is fundamentally the same, \emph{i.e.}, the ratio of \\ the number of total samples to the number of total data. }} \\

         \hline
         Temporal topology & $\R$ (infinite, uncountable) & $\Z$ (infinite, countable) & $\G_\T$ (finite, countable) \\
         $\X(i,\cdot)$ & $L^2$ function & $\ell^2$ sequence & finite sequence \\
         JFT & FT + GFT & DTFT + GFT & DFT + GFT \\
         % \hline
         $\F_\J(\X)(i,\cdot)$ & $L^2$ function & periodic $L^2$ function & finite sequence \\
         $|\mathcal{F}|$ & $\infty$ & $\infty$ & finite \\
         % \hline
         % $|\S_\T|$ & $\infty$ & $\infty$ & finite \\ 
         \hline
         Sampling methodology & uniform sampling & \makecell[l]{uniform resampling \\ by a rational factor} & \makecell[l]{uniform resampling \\ by a rational factor} \\
         \hline
         \makecell[l]{Practical significance \\ of multi-band sampling} & \makecell[l]{ multi-band sampling for CTVGS \\ aim to guide the process of capturing \\ signals from the real world with \\ minimal samples without loss.} & \multicolumn{2}{l}{ \makecell[l]{multi-band sampling for DTVGS and FTVGS \\ help us to compress the signal to the smallest \\ sample size without loss.} } \\
         \hline
    \end{tabular}
    \label{tab:sum}
\end{table}

%<*tag:uniform>
On the high level, the three types of TVGS have common features. All of them are processing signals on a product of the graph and temporal domains. When delving into technical specifics, the differences in temporal domains prohibit us from handling certain aspects uniformly. 
\begin{enumerate}[(i)]
    \item Dimensionality: The dimensions of temporal topologies of CTVGS and DTVGS are infinite, so the sampling density and sampling ratio are defined based on limits. Moreover, the time-domain topology $\R$ of CTVGS is uncountable, while the time-domain topology $\Z$ of DTVGS is countable. The time topology of FTVGS is a finite cyclic graph, and the sampling ratio can be defined with a simpler finite ratio.

    \item Sampling methodology: Due to the finite dimensionality of the signal space for FTVGS, sampling, and recovery are performed via usual matrix operations. On the other hand, sampling for CTVGS and DTVGS relies on classical signal processing theory \cite{foundations} on bandlimited functions. For example, we modify the classical Nyquist-Shannon theory to CTVGS. Moreover, sampling for the DTVGS follows the digital signal processing.
    
\end{enumerate}

%</tag:uniform>

% We define concepts such as the JFT and the joint bandwidth of the three kinds of TVGS separately, which are the basis of sampling. The sampling theorems of TVGS are directly related to their JFT spectra, so some results are similar in form.   The spectra obtained by Fourier transform in the time domain of the three types of TVGS correspond to their classical counterparts . Since the GFT does not change the properties of the time-domain spectra, the elements of the JFT spectra of the types of TVGS belong to different spaces, as shown in \cref{tab:sum}.

%% 

\section{Examples and Experiments} \label{sec:exp}

The applicability of our sampling schemes is illustrated with different data. The outcomes of these experiments reveal the advantages of joint analysis in TVGS, specifically in reducing the required sampling ratio.

\subsection{Examples of FTVGS}
\label{subsec:exp1}

We show our multi-band sampling scheme on a constructed FTVGS with $N=4$ and $T=4$. The FTVGS in a single period is written in matrix form: 
\begin{equation*}
    \X=\left[\begin{matrix}
    0.4190 &   0.3120 &   0.4382 &   0.5452 \\
    0.3459 &   0.2389 &   0.3651 &   0.4722 \\
    0.3785 &   0.2767 &   0.4194 &   0.5212 \\
    0.2403 &   0.1281 &   0.2378 &   0.3501 
    \end{matrix} \right].
\end{equation*}

Let the temporal topology of FTVGS be a directed cyclic graph, and the topology of $\J$ is shown in \cref{fig:TVGS} (i). The corresponding frequency coefficient of $\X$ is
\begin{equation*}
    \F_\J(\X) \! = \! \left[ \begin{matrix}
    0 &  -0.0384 - 0.4665j &  2.8443 &  -0.0384 + 0.4665j \\
    0 &  0.0306 + 0.0159j &  -0.4522 &   0.0306 - 0.0159j \\
    0 &  0                &   0.3579 &   0 \\
    0 &  0                &   0      &   0
    \end{matrix} \right].
\end{equation*}
    
Then we have $\mathcal{F}=\{ 2, 3, 4 \}$ and $\I=\{ 1, 2, 3 \}$. $\X$ is a JBL signal with $B= 7$, $B_\T=3$, $B_\G=3$. Obviously, JFT gives a smaller overall rate as $B<B_\T B_\G$. Thus we can sample fewer points without losing information.

\emph{Sampling and reconstruction}: For this FTVGS, our sampling process can be divided into two stages. In Stage 1, for $\P^1(\X^1)$ within sub-band 1, the joint spectrum is zero for all columns except the third one. In the graph domain, we have $\I^1 = \{ 1, 2, 3 \}$ and obtain $\S_\G^1 = \{1, 2, 3\}$ satisfying ${\rm rank}(\mathbf{\Psi}_\G^1 \U(\cdot, \I^1)_\G^1)) = |\I^1|$ through Gaussian elimination. In the time domain, we have $\alpha^1 / \beta^1 = 1/4$ and $\S_\T^1 = \{ 1 \}$. Note that in the time domain, any of the four instants can be selected. Then, the sampling set $\S^1 = \{ \S_\G^1 \times \S_\T^1 \}$ for $\P^1(\X^1)$ is shown in \cref{fig:exp1_Sa} (a). 

In Stage 2, for the signal $\X^2 = \P^2(\X^2)$ within sub-band 2, its joint spectrum is zero for all columns except columns $2, 4$. In the graph domain, we obtain $\S_\G^2 = \{1, 3\}$. In the time domain, we have $\alpha^2 / \beta^2 = 1/2$ and $\S_\T^2 = \{ 1, 3 \}$. Note that in the time domain, it works uniform downsampling by a factor of 2. Therefore, sampling set $\S^2 = \{ \S_\G^2 \times \S_\T^2 \}$ for $\P^2(\X^2)$ is shown in \cref{fig:exp1_Sa} (b).

\begin{figure}[htbp]
    \centering    %居中
    \subfigure[]
    {
	    \includegraphics[scale=0.5]{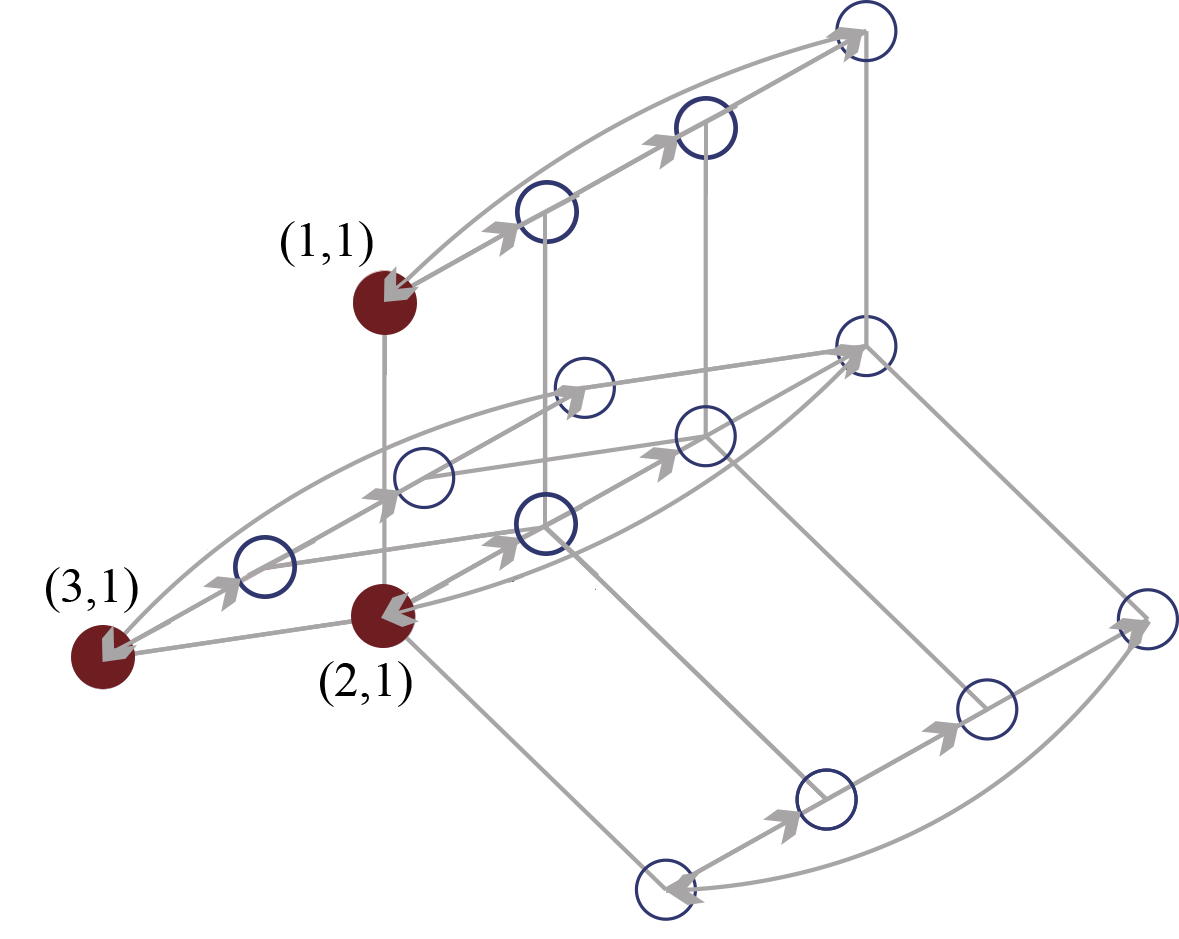}   
	}
    \subfigure[]
    {
	    \includegraphics[scale=0.5]{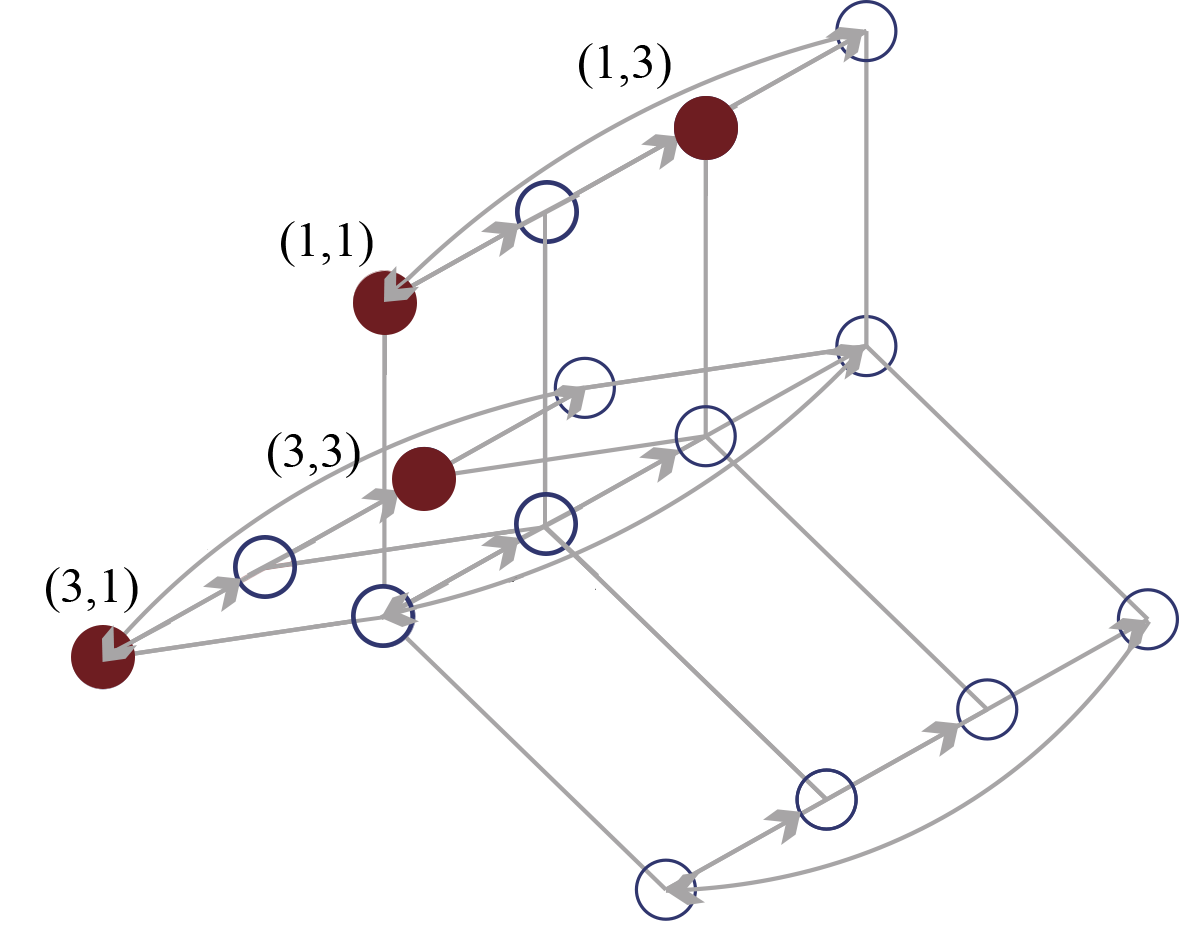}   
    }
    \caption{The schematic diagrams of the critical sampling set $\S$. (a) is the sampling set $\S^1$ of $\P^1(\X^1)$. (b) is the sampling set $\S^2$ of $\P^2(\X^2)$. }
\label{fig:exp1_Sa} 
\end{figure}

Now satisfies $R_F(\S)=7 /16$ and $|\S_\G|=3$, so it is the critical sampling set. The reconstructed signal $\hat{\X} = \X$ is obtained according to the flow of \cref{fig:recover_d}, so we perfectly restored the original signal.

Then we discuss the lowest sampling ratio on each $\X(v,\cdot)$ and the impact when decreasing the sampling ratio on a certain vertex. According to \cref{cor:RFSv}, we consider the lowest sampling ratio of each $\Theta =\{ v \} \in \S_\G$. For $\{ v_1 \}$, $\S_\G \backslash \Theta =\{ v_2, v_3 \}$, $\mathbf{\Psi}_\theta^c$ and $\mathbf{\Psi}_j$ select rows 5-12 and columns 1, 3, 4, 5, 7, 8, 12 of $\U_\J$, respectively. So we have $R_F(\S_1) \ge (7-6)/4  =1/4$. Similarly, $R_F(\S_2) \ge 1/4$ and $R_F(\S_3) \ge 3/4$. We remark that $R_F(\S_1) = 1/4$, $R_F(\S_2) = 1/4$ and $R_F(\S_3) = 3/4$ cannot be achieved at the same time, otherwise $\X$ cannot be perfectly recovered.

As the choices of $\S_\G^1$ and $\S_\G^2$ are not unique in this example, the critical sampling set for $\X$ is also not unique. We can adjust the choices of $\S_\G^1$ and $\S_\G^2$ to increase or decrease the samples at certain vertex. For example, if we change $\S_\G^2$ to $\{2,3\}$, then $\mathbf{\Psi}_\G^2$ also satisfies ${\rm rank}(\mathbf{\Psi}_\G^2 \U(\cdot, \I^2)_\G^2)) = |\I^2|$. The critical sampling set $\S_b$ in this case is shown in \cref{fig:exp1_Sb}. Compared with $\S$, $\S_b$ increases the sampling ratios of signals on $\{ v_2\}$, thereby reducing the sampling ratio of the signal on vertex $\{ v_1 \}$.

\begin{figure}[htbp]
    \centering    %居中
    \subfigure[]
    {
	    \includegraphics[scale=0.5]{Fig//exp1_Sa_1.png}   
	}
    \subfigure[]
    {
	    \includegraphics[scale=0.5]{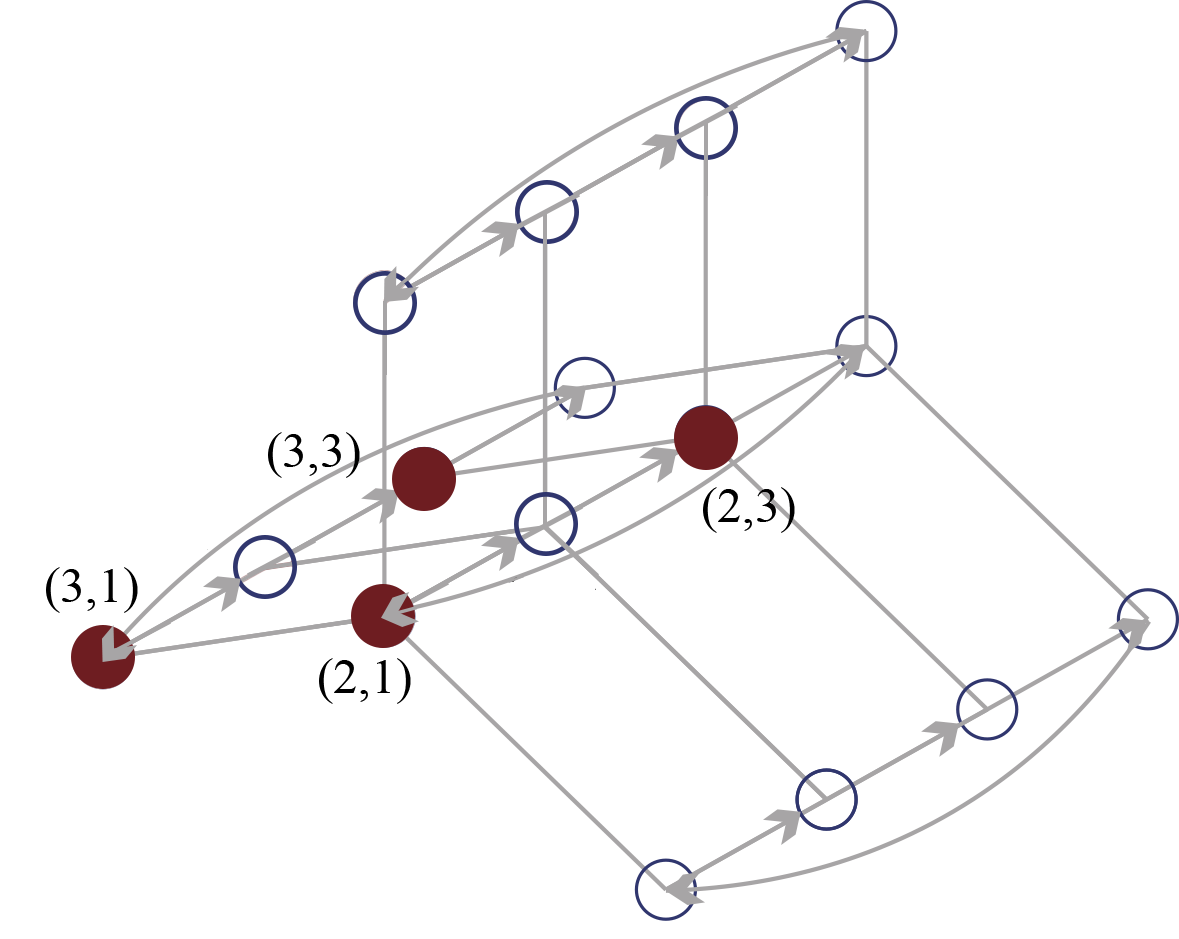}   
    }
    \caption{The schematic diagrams of the critical sampling set $\S_b$. (a) is the sampling set $\S^1$ of $\P^1(\X^1)$. (b) is the sampling set $\S^2$ of $\P^2(\X^2)$. }
\label{fig:exp1_Sb} 
\end{figure}

\subsection{Real datasets on multi-band sampling}
\label{subsec:exp2}

We test our multi-band sampling scheme on two real-world datasets:
\begin{itemize}
    \item EEG: This EEG dataset consists of 32 channels of electroencephalogram signals collected at $f_E = 128$ Hz from a subject in a visual attention task from the EEGLAB tutorial\cite{EEG}. Each channel is regarded as a vertex. 
    
    \item METR-LA: This traffic data METR-LA is collected from loop detectors in the highway of Los Angeles County \cite{data_traffic}. We chose data collected by 207 sensors from March 1st, 2012 to June 27th, 2012 for this experiment. Each sensor is regarded as a vertex with $T_L = 300$ s.
\end{itemize}

\subsubsection{Test on EEG}

    We denote the data of every 1024 consecutive timesteps of the EEG data as $\X$ for simulation. The dataset is divided into 100 DTVGS in total. That is, each $\X$ is a DTVGS with $N=32$ vertices, where each vertex relates to a discrete sequence of 1024 timesteps.
    
    \emph{Graph construction}: A graph typically encodes the similarity between vertices. We construct an adjacency matrix $\W_\G$ based on the correlation coefficients between the signals on vertices and build a graph $\G$ \cite{Structure} as shown in \cref{fig:exp2_W_EEG}. The short duration of $\X$ allows us to assume that the graph structure of the DTVGS is unchanged.
    
    \begin{figure} [htbp] 
	\centering    %居中
        \includegraphics[scale=0.4]{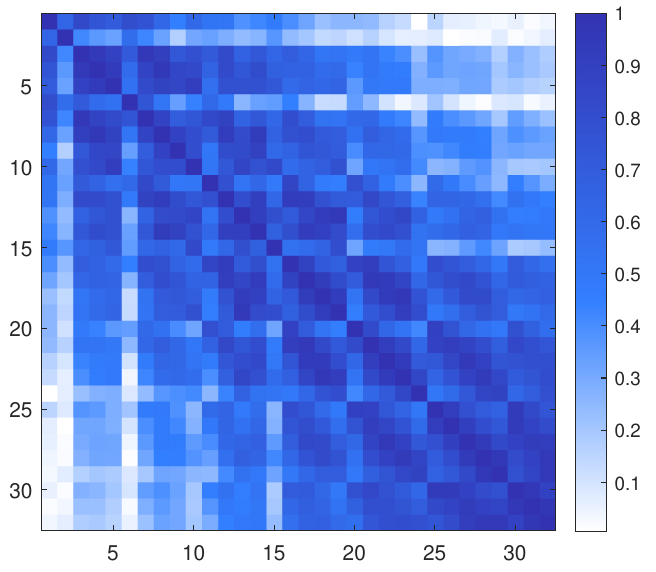}
	\caption{ Weighted adjacency matrix $\W_\G$ of EEG.} 
	\label{fig:exp2_W_EEG}
    \end{figure}

    We show the energy of a DTVGS $\X$ along with the energy of the results obtained by applying FFT, GFT, and JFT to $\X$ in \cref{fig:exp2_signal_EEG}. To obtain a strictly bandlimited signal, we apply a low-pass filter to each spectrum (each row) in $\F_\J(\X)$, retaining no less than $90\%$ of its energy. We keep the $B_\G$ rows with the highest energy while setting the spectral coefficients of the remaining rows to zero. The IJFT is then performed on the resulting coefficients, yielding a JBL DTVGS denoted as $\X_{\rm JBL}$. Taking the signal in \cref{fig:exp2_signal_EEG} as an example, with $B_\G=15$, $\X_{\rm JBL}$ retains $85.88\%$ of the energy in $\X$, with bandwidths of $B=263$ and $B_\T=27.5$.
        
    \begin{figure} [htbp] 
	    \centering
	    \includegraphics[width=0.7\columnwidth]{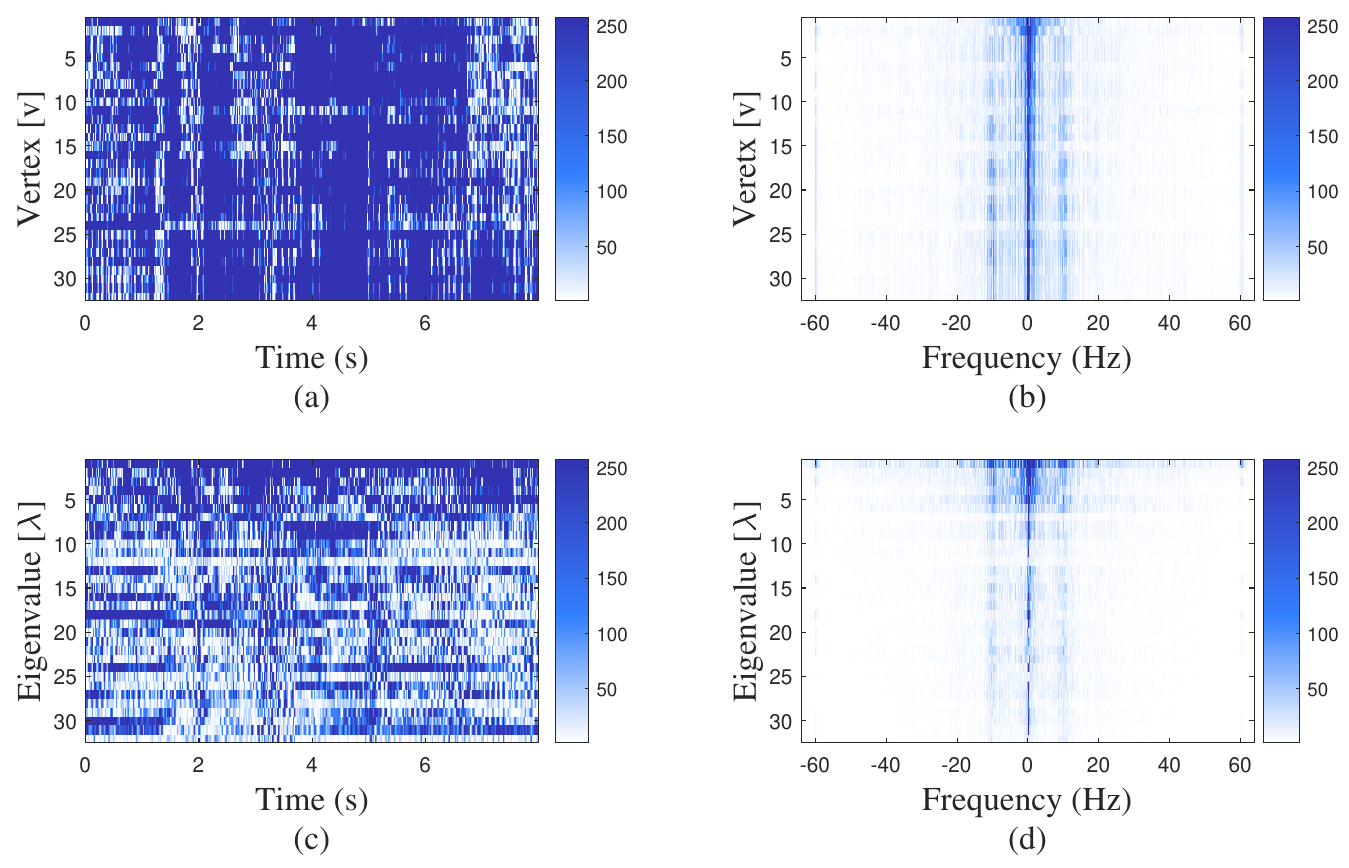}
	    \caption{Energy analysis of $\X$ getting from EEG. (a) is the energy of the original signal $\X$, whose each row is a sequence on a vertex, and each column is a graph signal at a certain instant. (b), (c), and (d) are the energy of $\F_{DT}(\X)$, $\F_\G(\X)$, and $\F_\J(\X)$, respectively.}
	\label{fig:exp2_signal_EEG}
    \end{figure}

    \emph{Sampling and reconstruction}: After compressing DTVGS in the time-vertex domain, the multi-band sampling scheme described in \cref{subsec:cri_d} is tested on $\X_{\rm JBL}$, and the corresponding recovered signal is recorded as $\hat{\X}$. 
    
    For instance, the signal in \cref{fig:exp2_signal_EEG} can be sampled with a total sampling ratio of $R_D(\S) = B/N = 0.0642$, less than $0.1007$ of separate sampling. The set $\S_\G$ is shown in \cref{fig:exp2_G_SG}. We exactly get the critical sampling set.
    
    \begin{figure} [htbp] 
	    \centering
	    \includegraphics[width=0.4\columnwidth]{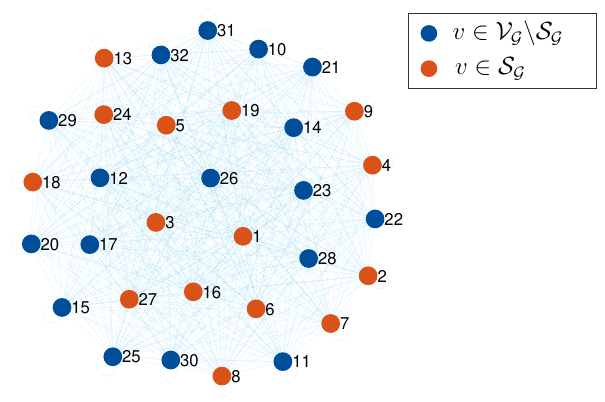}
	    \caption{The schematic diagram of $\S_\G$ for EEG ($B_\G = 15$). } 
	\label{fig:exp2_G_SG}
    \end{figure}

    For different values of $B_\G$ (1 to 32), the same DTVGS will be compressed into different JBL DTVGS. Therefore, we finally sampled and reconstructed 3200 JBL DTVGS using the multi-band sampling scheme. In \cref{fig:exp2_NRMSE_EEG} (a), we show the average sampling ratio for 100 JBL DTVGS corresponding to each $B_\G$ value.
    
    The normalized Root Mean Square Error (NRMSE) is used to describe the error between signals $\X_a$ and $\X_b$:
\begin{equation*}
    \text{NRMSE}(\X_a,\X_b) = \frac{ \| \X_a-\X_b \|_2 }{ \| \X_a \|_2 }.
\end{equation*}
    For each JBL DTVGS, we calculate $\text{NRMSE}(\X_{\rm JBL},\hat{\X})$ and $\text{NRMSE}(\X,\hat{\X})$ when $B_\G$ takes different values. The average of the NRMSE of the 100 JBL DTVGS corresponding to each $B_\G$ value are shown in \cref{fig:exp2_NRMSE} (b) and (c).
    
    \begin{figure}[htbp]
    \centering    %居中
    \subfigure[]
    {
	    \includegraphics[scale=0.36]{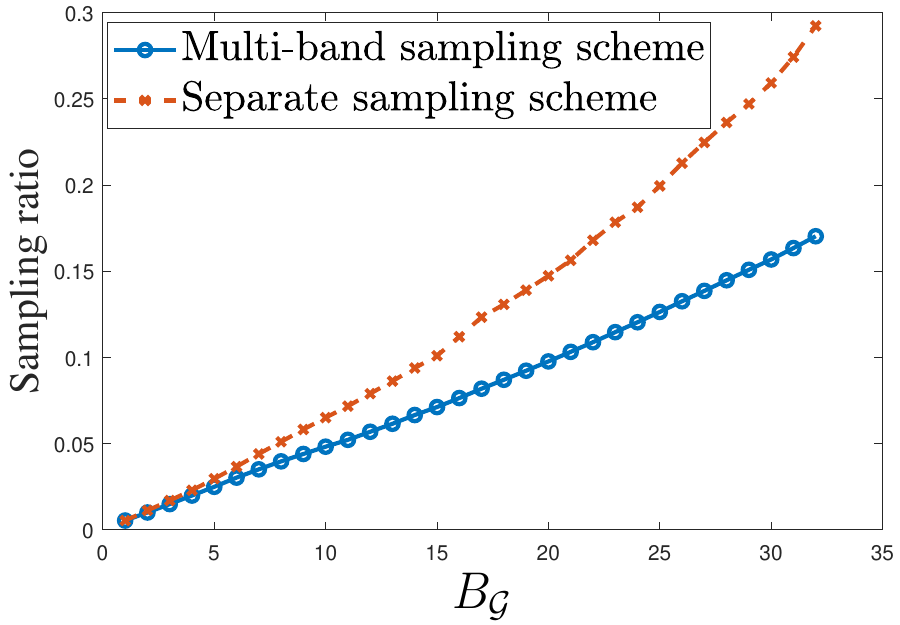}   
	}
    \subfigure[]
    {
	    \includegraphics[scale=0.36]{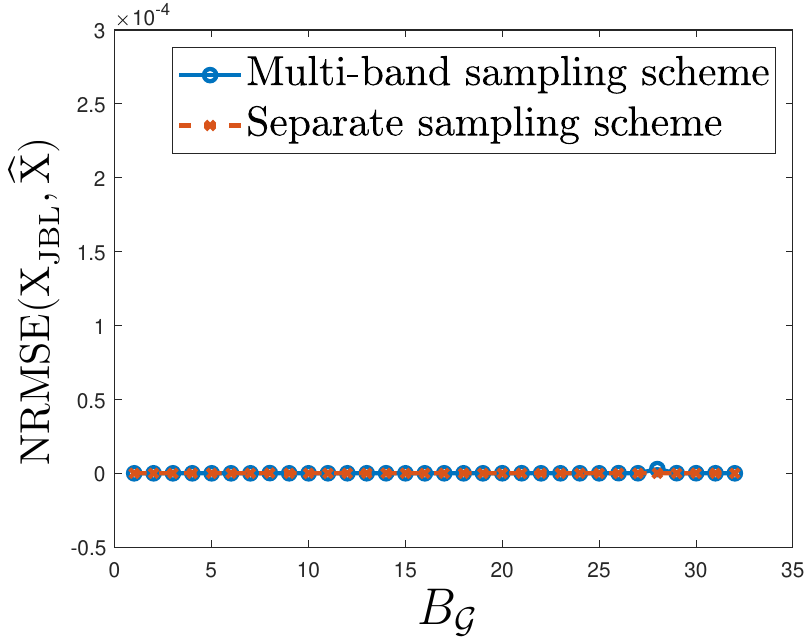}   
    }
    \subfigure[]
    {
	    \includegraphics[scale=0.36]{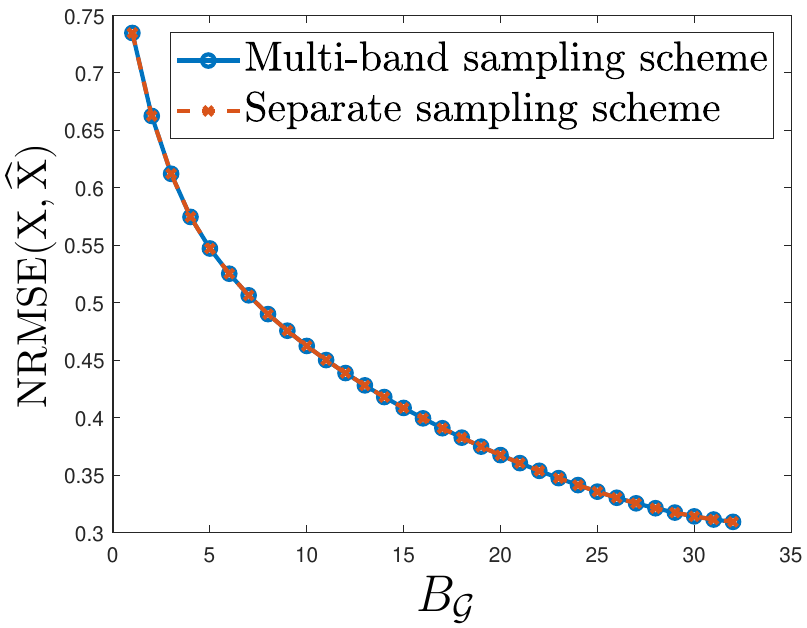}   
    }
    \caption{The averaged sampling ratios and NRMSE of EEG. (a) The average of the sampling ratio with different $B_\G$. (b) and (c) are the averages of $\text{NRMSE}(\X_{\rm JBL},\hat{\X})$ and $\text{NRMSE}(\X,\hat{\X})$ with different $B_\G$, respectively.}
    \label{fig:exp2_NRMSE_EEG} 
    \end{figure}

\subsubsection{Test on METR-LA}
    
    Graph $\G$ of the EEG was constructed based on correlation. To make the experiment more general, we also tested the multi-band sampling scheme on a dataset METR-LA that provides the graph structure and the signals. Similar to the EEG data, we divided the METR-LA dataset into 100 DTVGS denoted as $\X$, where each DTVGS consists of $N=207$ vertices, and each vertex relates to a discrete sequence of 1024 timesteps.
    
    The sensor distribution of METR-LA is visualized in \cref{fig:exp2_geo_SG}, and the topology of the dataset is given in \cite{data_traffic} (adjacency matrix $\mathbf{A}$). Since the data is modeled as a directed graph in vertex domain in \cite{data_traffic}, resulting in an asymmetric adjacency matrix $\mathbf{A}$. To ensure consistency with the model in this paper, we convert the directed graph to an undirected graph by letting $\W_\G = (\mathbf{A}+\mathbf{A}^H)/2$ \cite{direc_A}, as shown in \cref{fig:exp2_W}. 
    
    Once again, the theories presented in this paper are applicable even when modeling the graph domain topology of the three kinds of TVGS as directed graphs.
    
    \begin{figure} [htbp] 
	    \centering
	    \includegraphics[width=0.3\columnwidth]{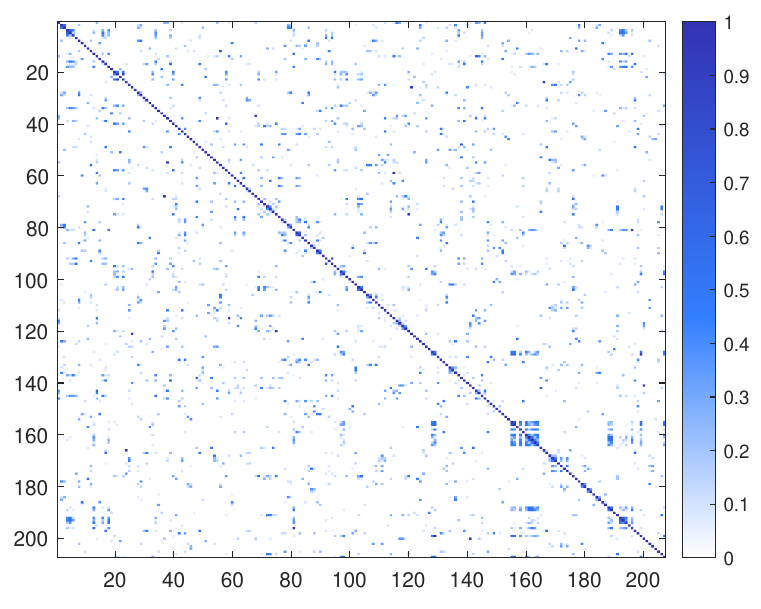}
	    \caption{Weighted adjacency matrix $\W_\G$ of METR-LA. } 
	\label{fig:exp2_W}
    \end{figure}
    
    For a DTVGS $\X$, we show the energy of $\X$, $\F_{DT}(\X)$, $\F_\G(\X)$, and $\F_\J(\X)$ in \cref{fig:exp2_signal}. To obtain strictly bandlimited signals, we apply a low-pass filter to each row of $\F_\J(\X)$ and reserve the $B_\G$ rows, following a similar procedure as with the EEG data. The resulting JBL DTVGS is denoted as $\X_{\rm JBL}$. Taking the signal in \cref{fig:exp2_signal} as an example, with $B_\G=100$, $\X_{\rm JBL}$ retains $97.27\%$ of the energy in $\X$, with bandwidths of $B=0.0181$ and $B_\T=0.0008$.
    
    \begin{figure} [htbp] 
	    \centering
	    \includegraphics[width=0.7\columnwidth]{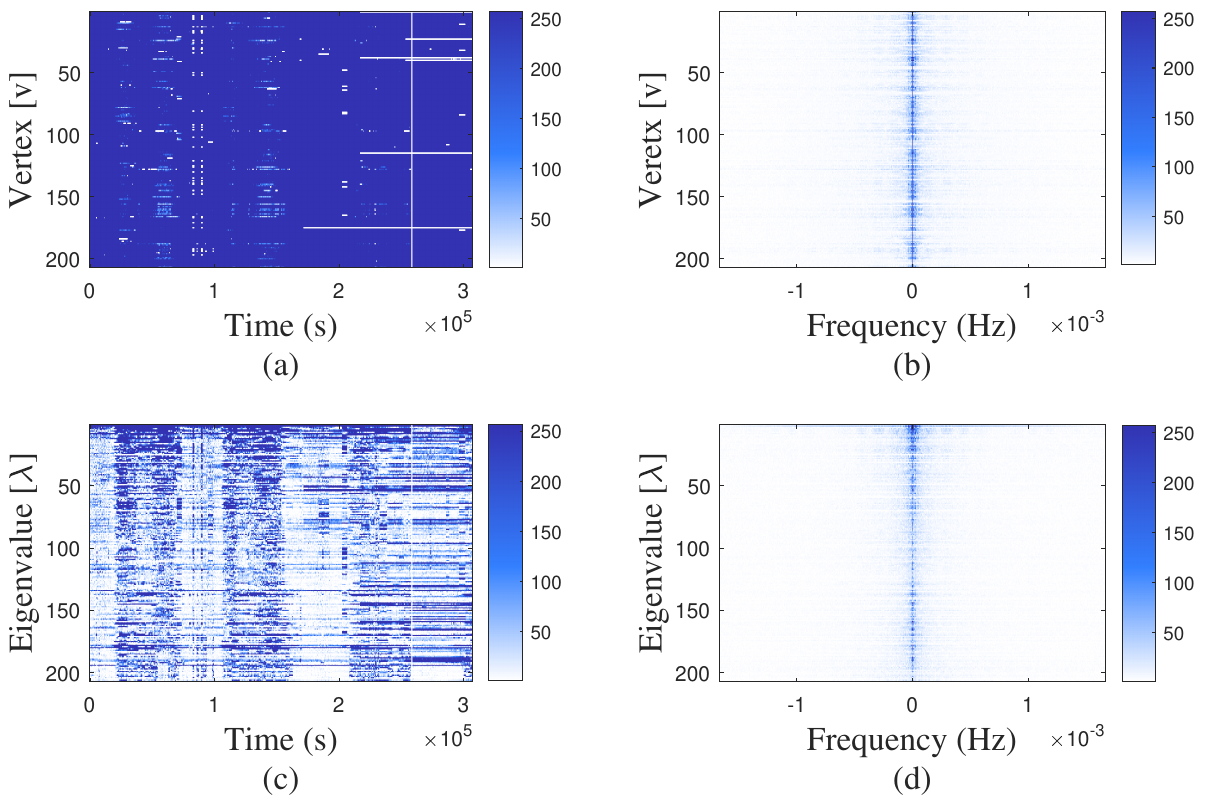}
	    \caption{Energy analysis of $\X$ getting from METR-LA. (a) is the energy of the original signal $\X$, whose each row is a sequence on a vertex, and each column is a graph signal at a certain instant. (b), (c), and (d) are the energy of $\F_{DT}(\X)$, $\F_\G(\X)$, and $\F_\J(\X)$, respectively.}
	\label{fig:exp2_signal}
    \end{figure}
    
    \emph{Sampling and reconstruction}: We sample and reconstruct each $\X_{\rm JBL}$ with the multi-band sampling scheme, and the corresponding recovered signal is recorded as $\hat{\X}$. The $\X_{\rm JBL}$ described above can be sampled with a total sampling ratio of $R_D(\S) = B/N = 0.0262$, less than $0.1227$ of separate sampling. The set of sampled vertices $\S_\G$ is shown in \cref{fig:exp2_geo_SG}. We exactly get the critical sampling set. 
    
    \begin{figure} [htbp] 
	    \centering
	    \includegraphics[width=0.6\columnwidth]{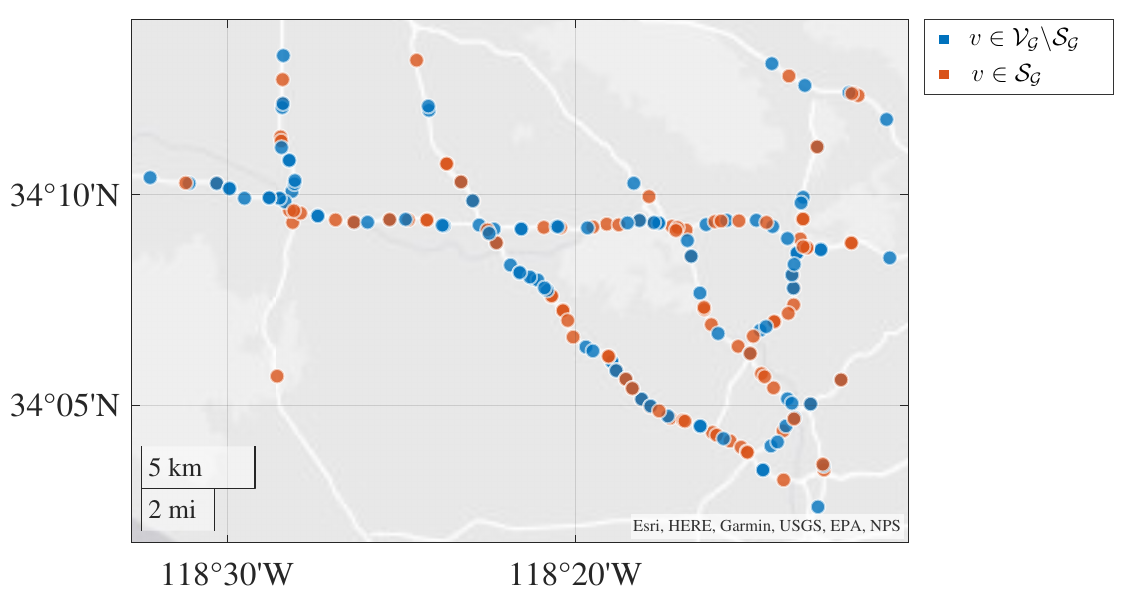}
	    \caption{The schematic diagram of $\S_\G$ for METR-LA ($B_\G = 100$). } 
	\label{fig:exp2_geo_SG}
    \end{figure}

    By varying the value of $B_\G$, we sample and reconstruct a total of 20,700 JBL DTVGS. We calculate the average values of the sampling ratios for the 100 $\X_{\rm JBL}$ with the same $B_\G$, as well as the $\text{NRMSE}(\X_{\rm JBL},\hat{\X})$ and $\text{NRMSE}(\X,\hat{\X})$. These results are presented in \cref{fig:exp2_NRMSE}.
    
    \begin{figure}[htbp]
    \centering    %居中
    \subfigure[]
    {
	    \includegraphics[scale=0.36]{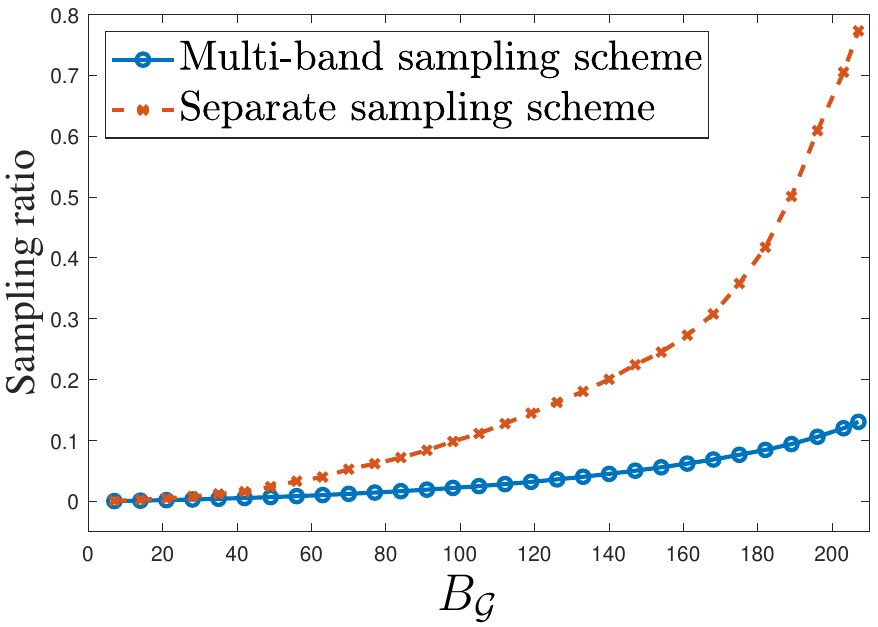}   
	}
    \subfigure[]
    {
	    \includegraphics[scale=0.36]{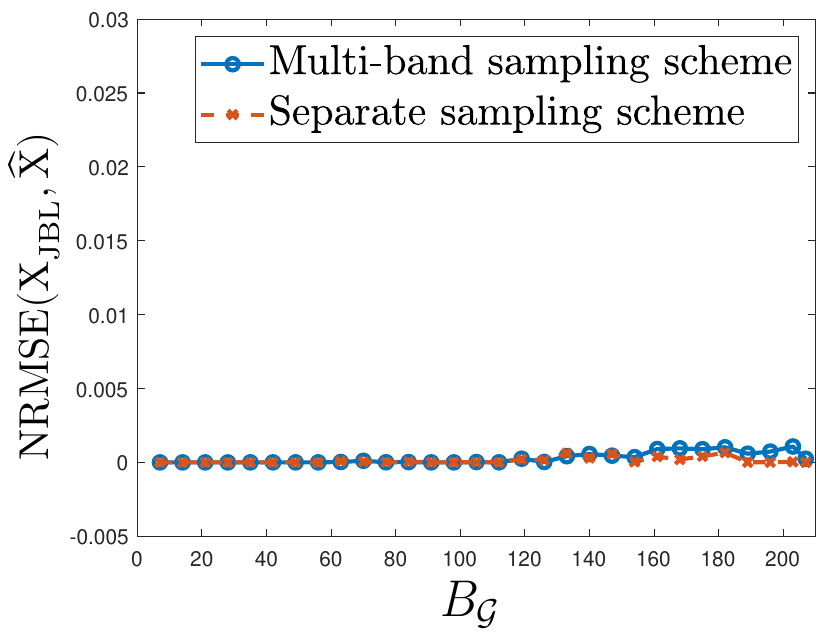}   
    }
    \subfigure[]
    {
	    \includegraphics[scale=0.36]{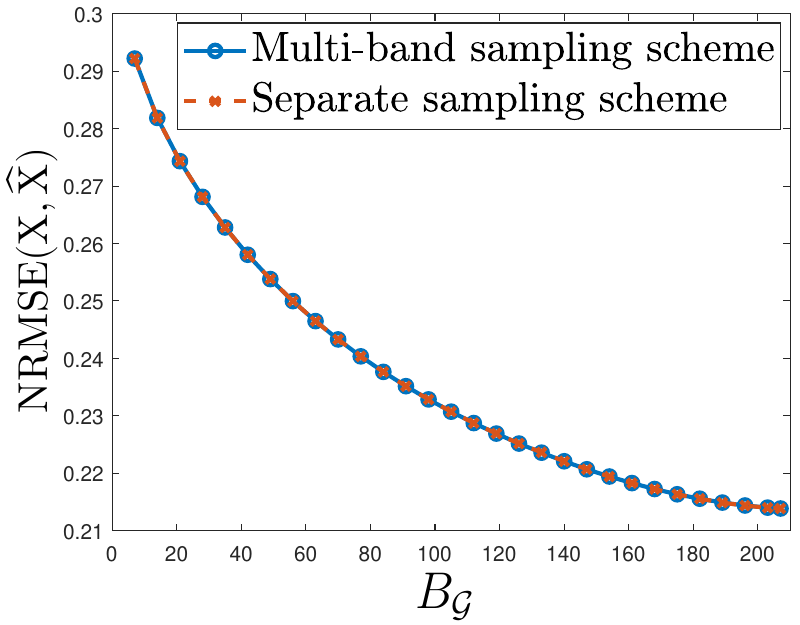}   
    }
    \caption{The averaged sampling ratios and NRMSE of METR-LA. (a) The average of the sampling ratio with different $B_\G$. (b) and (c) are the averages of $\text{NRMSE}(\X_{\rm JBL},\hat{\X})$ and $\text{NRMSE}(\X,\hat{\X})$ with different $B_\G$, respectively.}
    \label{fig:exp2_NRMSE} 
    \end{figure}

\subsubsection{Discussion}
    
    \emph{Spectrum analysis}: 
    The ability of GFT to encode graph-dependent signals compactly is the motivation behind the sampling based on $\F_\J(\X)$. In \cref{fig:exp2_signal_EEG} and \cref{fig:exp2_signal}, we analyze the energy of $\X$ and its spectra for both the EEG and METR-LA. We can easily see that $\X$ does not emphasize the relation between the time domain and vertex domain. Compared with $\X$, FFT and GFT compact the energy distribution along the rows and columns, respectively. \emph{Remarkably, $\F_\J(\X)$ successfully represents the signal in a more efficient way, as it exhibits greater sparsity compared to $\X$, $\F_\G(\X)$, and $\F_{FT}(\X)$. }

    \emph{Sampling ratio}: 
    For both the EEG signal and METR-LA signal, when $B_\G = 1$, we only need to sample the signal on a single vertex, equivalent to the sampling problem in classical signal processing. The sampling ratios are the same for both methods. With the increase of $B_\G$ and $B$, the sampling ratios of both two sampling schemes also increase, shown in \cref{fig:exp2_NRMSE_EEG} (a) and \cref{fig:exp2_NRMSE} (a). \emph{No matter how much $B_\G$ is, the sampling ratio of our method is no more than that of separate sampling.}

    \emph{NRMSE}: 
    Testing on EEG or METR-LA, when $B_\G$ takes different values, $\text{NRMSE}(\X_{\rm JBL},\hat{\X})$ obtained by the multi-band sampling scheme and the separate sampling scheme are both close to zero (\cref{fig:exp2_NRMSE_EEG} (b) and \cref{fig:exp2_NRMSE} (b)), which shows the robustness of our multi-band sampling scheme. The $\text{NRMSE}(\X,\hat{\X})$ is mainly derived from the operation making the signal strictly bandlimited. As $B_\G$ increases, $\X_{\rm JBL}$ is getting closer to $\X$, so $\text{NRMSE}(\X,\hat{\X})$ gradually decreases (\cref{fig:exp2_NRMSE_EEG} (c) and \cref{fig:exp2_NRMSE} (c)).

\section{Conclusion} \label{sec:conclusion}

In this work, based on the time-vertex signal processing framework, we propose and prove the necessary conditions for stable sampling and reconstruction for three kinds of TVGS: CTVGS, DTVGS, and FTVGS. 

We use ideal filters to cut subbands in the proof for the sake of clearness in theory, which can not be used in practical projects. It is possible to replace the ideal filters with other filter banks in practice, allowing a certain level of reconstruction error. In addition, we assume that all the vertices are sampled synchronously. We will consider how to reconstruct the asynchronous sampling scheme in the follow-up research.

\appendix
\section{Proof of the Theorem \ref{thm:subset_f}}
\label{pf_th_f}

\begin{proof}
    Let $\mathcal{N}$ be the set of index of nonzero elements of $\F_\J(\x)$. Here, when we consider the ratio of $\S_\Theta$, we assume that the signal in $\Theta^c$ is known. So we have 
    \begin{equation*}
        \left[ \begin{matrix} \mathbf{\Psi}_\theta \\ \textbf{0} \end{matrix} \right] \U_\J(\S_\G' \times \V_\T, \mathcal{N}) \F_\J(\x)(\mathcal{N}) = \x(\S_\G' \times \V_\T) - \left[ \begin{matrix} \textbf{0} \\ \mathbf{\Psi}_\theta^c \end{matrix} \right] \U_\J(\S_\G' \times \V_\T, \mathcal{N}) \F_\J(\x)(\mathcal{N}),
    \end{equation*}
    where $\mathbf{\Psi}_\theta \in \{0, 1\}^{|\Theta|T \times B_\G T}$ is the sampling matrix that selects elements in $\{ \Theta \times \V_\T \}$ from $\x(\S_\G' \times \V_\T)$. Matrix $\left[ \begin{matrix} \mathbf{\Psi}_\theta \\ \mathbf{\Psi}_\theta^c \end{matrix} \right] \in \{0, 1\}^{B_\G T \times B_\G T}$ is obtained from the row transformation of an identity matrix. Then we get the following low-dimensional representation
    \begin{equation}
    \label{eq:IJFT_lowD_f}
        \x(\S_\G' \times \V_\T) = \U_\J(\S_\G' \times \V_\T, \mathcal{N}) \F_\J(\x)(\mathcal{N}).
    \end{equation}
    
    According to Theorem 1 in \cite{theory}, we have $\mathbf{\Psi} \x(\S_\G' \times \V_\T) = \mathbf{\Psi} \U_\J(\S_\G' \times \V_\T, \mathcal{N}) \F_\J(\x)(\mathcal{N})$, and the number of sampled elements on $\x(\S_\G' \times \V_\T)$ cannot be less than ${\rm rank}(\U_\J(\S_\G' \times \V_\T, \mathcal{N}))$. 
    
    Since ${\rm rank} (\U_\G(\S_\G', \I)) \ge B_\G$,  $\U_\J(\S_\G' \times \V_\T, \I \times \V_\T) = \U_\G(\S_\G', \I) \otimes \U_\T$ is a full column rank matrix. The column set of $\U_\J(\S_\G' \times \V_\T, \mathcal{N})$ is a subset of $\U_\J(\S_\G' \times \V_\T, \I \times \mathcal{F})$. So we have
    \begin{equation*}
        {\rm rank}(\U_\J(\S_\G' \times \V_\T, \mathcal{N})) = {\rm rank}(\U_\J(\S_\G' \times \V_\T, \cdot) \mathbf{\Psi}_j^H) = B.
    \end{equation*}
    In addition, there must be a matrix $\mathbf{\Psi}$ with $|\mathbf{\Psi}| = B$ that makes $\mathbf{\Psi} \U_\J(\S_\G' \times \V_\T, \mathcal{N})$ full rank, \emph{i.e.}, the minimum singular value of matrix $\mathbf{\Psi} \U_\J(\S_\G' \times \V_\T, \mathcal{N})$ is greater than zero. Thus the sampling operation is stable.
    
    On the premise of stable sampling, we next discuss the lower bound of the sampling ratio of $\S_\Theta$. We consider sampling on $ \x(\Theta \times \V_\T) = \mathbf{\Psi}_\theta \x(\S_\G' \times \V_\T)$. There must be
    \begin{equation}
    \label{eq:card}
        |\S_\Theta| \ge {\rm rank}( \mathbf{\Psi}_\theta \U_\J(\S_\G' \times \V_\T, \cdot) \mathbf{\Psi}_j^H ),
    \end{equation}
    
    The rank property of the matrix makes the inequality hold: 
    \begin{equation}
    \label{eq:rank}
        {\rm rank}( \left[ \begin{matrix} \mathbf{\Psi}_\theta \\ \mathbf{\Psi}_\theta^c \end{matrix} \right] \U_\J(\S_\G' \times \V_\T, \cdot) \mathbf{\Psi}_j^H ) \le {\rm rank}(\mathbf{\Psi}_\theta \U_\J(\S_\G' \times \V_\T, \cdot) \mathbf{\Psi}_j^H) + {\rm rank}(\mathbf{\Psi}_\theta^c \U_\J(\S_\G' \times \V_\T, \cdot) \mathbf{\Psi}_j^H).
    \end{equation}
    Additionally, 
    \begin{equation*}
        {\rm rank}( \left[ \begin{matrix} \mathbf{\Psi}_\theta \\ \mathbf{\Psi}_\theta^c \end{matrix} \right] \U_\J(\S_\G' \times \V_\T, \cdot) \mathbf{\Psi}_j^H ) = B.
    \end{equation*}
    
    Combining \cref{eq:card}) and \cref{eq:rank}), we obtain
    \begin{equation*}
    \begin{aligned}
        |\S_\Theta| & \ge {\rm rank}(\mathbf{\Psi}_\theta \U_\J(\S_\G' \times \V_\T, \cdot) \mathbf{\Psi}_j^H)\\
        &\ge B - {\rm rank}(\mathbf{\Psi}_\theta^c \U_\J(\S_\G' \times \V_\T, \cdot) \mathbf{\Psi}_j^H).
    \end{aligned}
    \end{equation*}
    
    Additionally, $T \in \mathbb{N}_+$, thus
    \begin{equation*}
        R_F(\S) \! = \! \frac{|\S_\Theta|}{|\Theta|T} \! \ge \! \frac{1}{|\Theta|T} \! \left( B \! - \! {\rm rank}( \mathbf{\Psi}_\theta^c \U_\J(\S_\G' \times \V_\T, \cdot) \mathbf{\Psi}_j^H ) \right).
    \end{equation*}
\end{proof}

%% Loading bibliography style file
%\bibliographystyle{model1-num-names}
% \bibliographystyle{cas-model2-names}
\bibliographystyle{unsrtnat}

% Loading bibliography database
\bibliography{references}

\end{document}